\newcommand{\E}{\mathrm{E}}
\renewcommand{\P}{\mathrm{P}}
\newcommand{\Q}{\mathrm{Q}}
\newtheorem{lemma}{Lemma}
\newtheorem{prop}{Proposition}
\newtheorem{cor}{Corollary}
\theoremstyle{definition}
\newtheorem{remark}{Remark}
\newcommand{\ba}{\bm{a}}
\newcommand{\bb}{\bm{b}}
\renewcommand{\hat}{\widehat}
\title{Higher-dimensional spatial extremes via single-site conditioning}
\author{J. L. Wadsworth and J. A. Tawn\\
	Department of Mathematics and Statistics, Fylde College, Lancaster University, LA1 4YF}
\date{}
\begin{document}

\maketitle

\begin{abstract}
Currently available models for spatial extremes suffer either from inflexibility in the dependence structures that they can capture, lack of scalability to high dimensions, or in most cases, both of these. We present an approach to spatial extreme value theory based on the conditional multivariate extreme value model, whereby the limit theory is formed through conditioning upon the value at a particular site being extreme. The ensuing methodology allows for a flexible class of dependence structures, as well as models that can be fitted in high dimensions. To overcome issues of conditioning on a single site, we suggest a joint inference scheme based on all observation locations, and implement an importance sampling algorithm to provide spatial realizations and estimates of quantities conditioning upon the process being extreme at any of one of an arbitrary set of locations. The modelling approach is applied to Australian summer temperature extremes, permitting assessment of the spatial extent of high temperature events over the continent.
\end{abstract}
\noindent
\textbf{Keywords:} asymptotic independence; conditional extreme value model; extremal dependence; importance sampling; Pareto process; spatial modelling

\section{Introduction}
\label{sec:Intro}
\subsection{Background}
\label{sec:background}
In this work, our main motivation is to facilitate modelling of a continuous spatial process $\{Y(s): s \in \mathcal{S} \subset \mathbb{R}^2\}$ when it reaches extreme levels. We assume that $n$ replicates of the process $Y_i, i=1,\ldots,n$ are available in time, and that observations have been collected at a finite set of locations $s_1,\ldots, s_d$. To understand the risk posed by extreme events of $Y$, one firstly seeks general characterizations of spatial stochastic processes, given that some facet of them is extreme. These characterizations in turn yield statistical models that can be applied to appropriately selected extreme data, and used to investigate the probabilities of events that are more extreme than those observed to date.

Spatial extreme value theory has attracted much attention in recent years. A large tranche of literature deals with \emph{max-stable processes}, which arise through considering the limits of affinely normalized maxima $\{M_n(s) = \max_{1 \leq i \leq n} [Y_i(s) - b_n(s)]/a_n(s): s \in \mathcal{S}\}$ as the number of independent or weakly dependent copies $n$ tends to infinity. The functions $a_n(s)>0, b_n(s)$ are related only to the marginal distributions of $Y$; for details of this approach see e.g.\ \citet{Davisonetal12} or \citet[][Chapter 9]{deHaanFerreira07}. However, applications where the object of interest really is the spatial pointwise maximum, taken over a large number of repetitions, are relatively scarce. Moreover, inference for max-stable processes is notoriously difficult due to the fact that the pointwise maximum function $M_n$, and its limiting counterpart, is composed of a number of different underlying processes $Y_i$. This can be alleviated using the approach of \citet{StephensonTawn05} that includes information on which $Y_i$ contribute to $M_n$, but this can lead to bias if $d$ is relatively large compared to $n$ \citep{Wadsworth15}. Full likelihood inference ignoring this information, either via data augmentation \citep{Thibaudetal16} or Monte Carlo expectation maximization \citep{Huseretal19} is still difficult to implement and remains limited to moderate (up to $\approx 20$) numbers of locations.

The analog of max-stable processes for suitable definitions of threshold exceedances is Pareto processes \citep{FerreiradeHaan14}. This characterization focuses on the behaviour of $\{[Y(s) - b_n(s)]/a_n(s): \sup_{s \in \mathcal{S}} [Y(s) - b_n(s)]/a_n(s) >0\}$, with generalizations of this approach given by \citet{DombryRibatet15} and \citet{deFondevilleDavison20}. Likelihoods for the ensuing models are typically much easier to handle than for the corresponding max-stable process models, but the need for \emph{censored} likelihoods as a bias reduction technique \citep[e.g.][]{Huseretal16} still inhibits scalability. All realistic models for spatial extremes are based on suitably modified Gaussian processes, meaning that their censored likelihoods contain many evaluations of high-dimensional Gaussian cdfs, and standard implementations are limited to $\approx 30$ locations. For a certain popular class of Pareto processes, \citet{deFondevilleDavison18} used gradient score methods to avoid computing costly components of the likelihood, and combined this with coding efficiencies to permit inference on several hundred sites. Their application to a 3600-location problem is a notable exception to the small $d$ phenomenon observed elsewhere.

 In spite of such progress, when the goal of the analysis is extrapolation from observed levels further into the tail of the distribution, both max-stable and Pareto processes suffer from serious drawbacks in their applicability. Non-trivial limit processes arise as $n$ tends to infinity only when the underlying process $Y$ exhibits \emph{asymptotic dependence}, meaning that for marginal distributions $Y(s_j) \sim F_{s_j}$, the extremal dependence measure $\chi(s_1,s_2)>0$ for all $s_1 \neq s_2$, where $\chi(s_1,s_2) = \lim_{q \to 1} \chi_q(s_1,s_2)$ and
\begin{align}
 \chi_q(s_1,s_2) = \P\{F_{s_1}(Y(s_1))>q, F_{s_2}(Y(s_2))>q\}/(1-q) = \P\{Y(s_1)>F_{s_1}^{-1}(q) \mid Y(s_2)>F_{s_2}^{-1}(q)\}.\label{eq:chi}
\end{align}
When $\chi(s_1,s_2)=0$ instead, as is the case for all Gaussian processes that are not perfectly dependent, the limiting max-stable process is a collection of independent random variables, and the Pareto process cannot be sensibly defined, since it degenerates everywhere except where the supremum is realized. Testing the assumption of asymptotic dependence is critical for application of max-stable or Pareto models, since if it does not hold then their incorrect use will lead to bias in estimation of extreme event probabilities.  

A common feature of environmental data is that the dependence weakens as the level of the process increases: that is, for fixed $s_1,s_2$, $\chi_q(s_1,s_2) \searrow$ as $q \to 1$. Figure~\ref{fig:chi} displays estimates of $\{\chi_q(s_1,s_2): s_1 \neq s_2, q \in [0.95,0.995]\}$, plotted against distance on the $x$-axis and in different shades for different quantiles, for the Australian summer temperature data that will be analyzed in Section~\ref{sec:Australia}. Dependence clearly weakens with distance as expected. The smoothed lines give averages of the estimates over distance at the $q=0.95, 0.97$ and $0.99$ quantiles. The decrease in these lines indicates that dependence also weakens at higher quantiles, but this would not be captured by a fitted Pareto process (also shown), for which $\chi_q(s_1,s_2)>0$ does not depend on $q$.

\begin{figure}
\centering
 \includegraphics[width=0.5\textwidth]{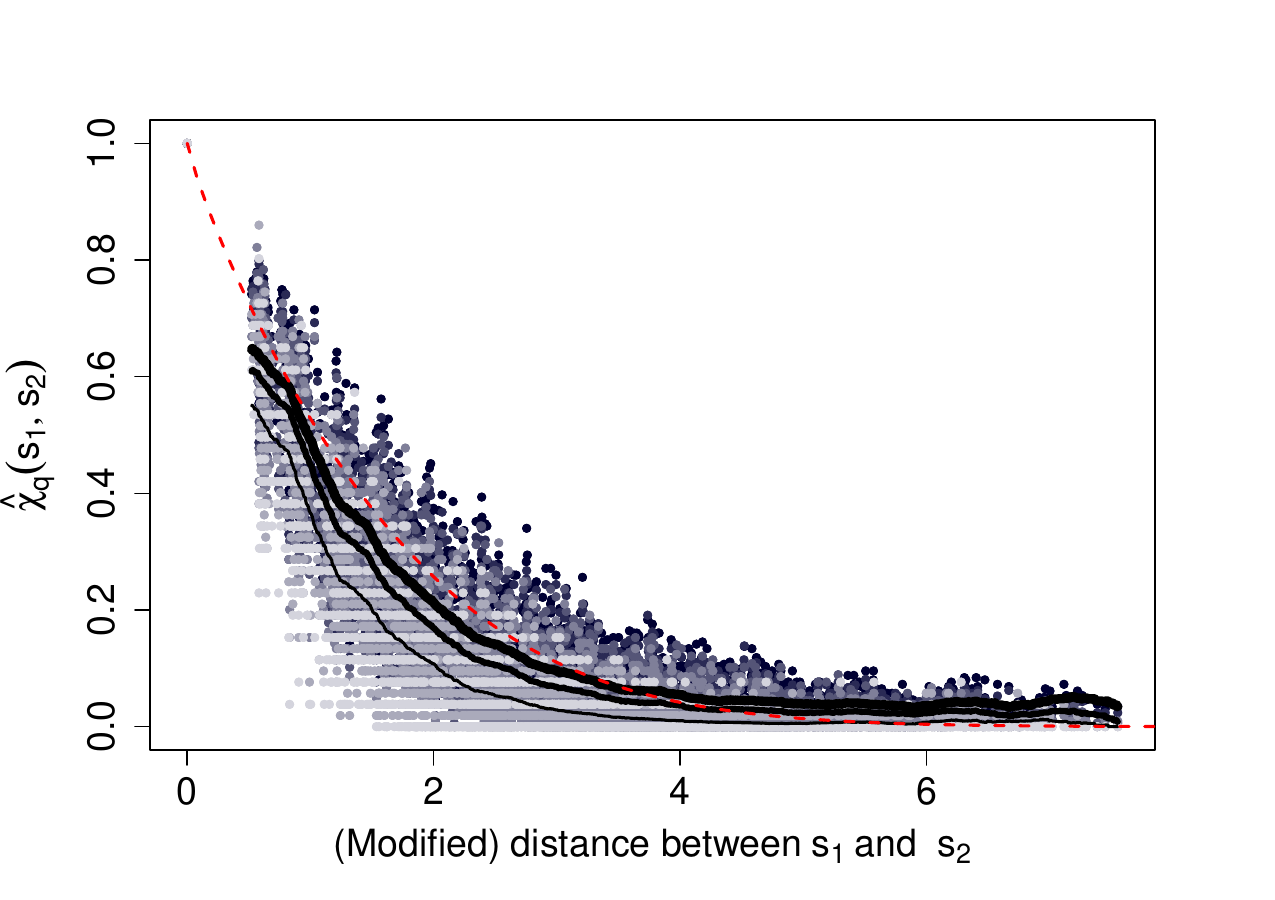}
 \caption{Estimates  $\widehat{\chi}_q(s_1,s_2)$ of $\chi_q(s_1,s_2)$ plotted against a modified version of distance $\|s_1-s_2\|$ (see Section~\ref{sec:Australia} for details of the spatial deformation). Darker points represent those estimated at lower quantiles, and lighter points at higher quantiles, moving from $q =0.95$ (darkest) to $q=0.995$ (lightest). The solid lines represent smoothed versions of $\widehat{\chi}_{0.95}(s_1,s_2)$ (thickest), $\widehat{\chi}_{0.97}(s_1,s_2)$ (medium) and $\widehat{\chi}_{0.99}(s_1,s_2)$ (thinnest). The dashed red line is the estimate of $\chi_q(s_1,s_2)$ from the Brown--Resnick Pareto process, which does not vary with $q$.}
 \label{fig:chi}
\end{figure}

In contrast to the asymptotic dependence case, little work has been done on developing asymptotically justifiable models for asymptotically independent extremes, for which $\chi_q(s_1,s_2) \searrow 0$. Partly, this is because one has to reconsider the meaning of ``asymptotically justifiable'' when the limits from classical extreme value theory are trivial. Based on a subasymptotic argument, \citet{WadsworthTawn12} suggested a class of models that might be broadly applicable to asymptotically independent extremes, whilst the Gaussian process forms another possibility \citep{Bortotetal00}. 

Often it is unclear whether data display asymptotic dependence or asymptotic independence, but models often only cover one dependence type. \citet{Huseretal17} and \citet{HuserWadsworth18} present spatial models that can capture both possibilities, although the same dependence class must hold over the entire spatial domain of interest. Furthermore, except for their Gaussian sub-models, independence between sites at long range is not possible. As such these models are suited only to smaller domains on which dependence persists. Very recent work \citep{Hazraetal21} extends the ideas of \citet{Huseretal17} to address this issue.

\subsection{Conditional extreme value theory: overview}
\label{sec:introCEVT}

We seek models for extremes of $Y$ that are asymptotically justified, appropriate for the features of environmental data, and that can be fitted to reasonably large numbers of observation locations. Our approach to this task is to exploit the so-called conditional extreme value model and adapt to the spatial setting. 

Conditional extreme value theory \citep{HeffernanTawn04,HeffernanResnick07} focuses on the behaviour of a random vector $\bm{Y}$, given that a fixed component of that random vector, say $Y_j$, is large. In contrast to classical multivariate extreme value theory, which leads to multivariate max-stable and Pareto distributions, resulting limit distributions can offer non-trivial descriptions of vectors exhibiting asymptotic dependence or asymptotic independence. We clarify that by ``conditional extreme value theory'' or the ``conditional model'', we mean theory or models derived explicitly through conditioning on a single component of a random vector, or value at a single site of a spatial process, being extreme. Other extreme value models (e.g.\ Pareto processes) are derived by conditioning on different extreme events, but the specific use of this terminology has become somewhat standard in the literature on extreme values \citep{DasResnick11b,ResnickZeber14}. In the time-series setting, there is related work that focuses on so-called tail chains, which describe the distribution of the process $\{Y_t: t \in \mathcal{T} \subset \mathbb{R}_+\}$ following an extreme event at $t=0$ \citep{BasrakSegers09, ResnickZeber14, Papastathopoulosetal17}. In the spatial case, we lose the directionality of time, and this situation has been much less explored. 

To take advantage of conditional extreme value theory in a spatial setting, we condition on the process $Y$ being extreme at an arbitrary reference location $s_0$. Based on an early formulation of the ideas in this paper, \citet{Tawnetal18} illustrated that the parameters of the \citet{HeffernanTawn04} model that describe pairwise dependence vary smoothly relative to distance between sites, giving an indication of the potential for extending that multivariate model to spatial processes when conditioning on a single location $s_0$. In practice, whilst this specific conditioning event might suit certain applications, there is often no natural reference location. More generally it is desirable to condition on the process being extreme anywhere over the domain of interest, $\mathcal{S}\subset\mathbb{R}^2$, as in the case of Pareto processes. We aim to combine the best of both worlds. Through the conditional extreme value approach we develop models that have flexible extremal dependence structures, accommodating asymptotic dependence and asymptotic independence. We then show how distributions conditioning on a single site can be combined to obtain inference on quantities of interest conditioning upon the maximum of the process at any collection of locations being large. This is achieved via an importance sampling algorithm, which also leads to an approximate method to simulate directly from the distribution of the field conditionally upon the maximum being large.

Aside from giving rise to flexible models for asymptotically (in)dependent data, the approach we propose can be scaled to accommodate several hundred observation locations. Whether asymptotically independent or dependent, most models for spatial extremes are driven by properties of the joint tail of $Y$, i.e., when all sites may be large together. Because of this, models are usually fitted via censored likelihood, so that locations with extreme observations contribute; see e.g.\ \citet{ThibaudOpitz15} or \citet{HuserWadsworth18} for examples. This censoring is a computational burden due to the integrals involved, although \citet{Zhangetal21} recently proposed the addition of a nugget effect to simplify likelihoods for simulation-based inference. The different asymptotic arguments for conditional extreme value theory do not necessitate censored likelihoods, making them inherently more scalable, although care can be required in ensuring that the implementation is sensible; further comments on this are given in Sections~\ref{sec:model} and~\ref{sec:Australia}.

In Section~\ref{sec:assumption} we detail the main assumption that will underlie the models built in Section~\ref{sec:model}, and which is illustrated through examples in Section~\ref{sec:examples}.

\subsection{Definitions and notation}
\label{sec:definitions}

Recall the definition of $\chi(s_1,s_2)$ given through equation~\eqref{eq:chi}. We specialize to the case of stationary processes $Y$, for which $\chi(s_1,s_2) \equiv \chi(h)$, $h=s_1-s_2$. A process is termed asymptotically dependent if $\chi(h)>0$ for all lags $h=(h_1,h_2)^\top$, and asymptotically independent if $\chi(h)=0$ for all $\|h\|>0$, with $\|\cdot\|$ the Euclidean distance. If $\chi(h)>0$ for $\|h\| < \Delta_\theta$, and $\chi(h)=0$ for $\|h\| \geq \Delta_\theta$, with $\theta=\arccos h_1/\|h\|$ the direction of $h$, we call the process \emph{directionally lag-asymptotically dependent}, or simply \emph{lag-asymptotically dependent} if the process is isotropic, so that $\Delta_\theta \equiv \Delta$.

Notationally, all vectors of length greater than one are expressed in boldface, with the exception of those denoting spatial location, e.g.\ $s_j=(s_{j,1},s_{j,2})^\top$, including spatial lags $h=s_1-s_2$. By convention, arithmetic operations on vectors are applied componentwise, with scalar values recycled as necessary. For example, if $f:\mathbb{R}\to\mathbb{R}$, $\bm{x} \in\mathbb{R}^d$, $c \in \mathbb{R}$, then $f(\bm{x}+c) = (f(x_1+c), \ldots, f(x_d+c))^\top$.

\section{Main assumption and examples}
\label{sec:assnexamples}

\subsection{Main assumption}
\label{sec:assumption}
The assumptions underlying conditional extreme value theory are most simply expressed following standardization to exponential-tailed margins, achieved in practice via the probability integral transformation. Let $\{X(s):s\in\mathcal{S}\subset\mathbb{R}^2\}$ be a stationary stochastic process with continuous sample paths and margins satisfying $\P(X(s_j)>x)\sim c e^{-x}$, $x\to\infty$, for some $c>0$. Where we wish to account for negative dependence, it is also assumed that $\P(X(s_j)<x) \sim c^* e^{-|x|}$, as $x\to -\infty$, $c^*>0$. For such cases, \citet{Keefetal13a} proposed modelling with Laplace margins. We assume that there exist functions $\{a_{s-s_0}: \mathbb{R} \to \mathbb{R}, s \in \mathcal{S}\}$, with $a_0(x)=x$ and $\{b_{s-s_0}: \mathbb{R} \to (0,\infty), s \in \mathcal{S}\}$, such that for any $s_0\in\mathcal{S}$, any $d \in \mathbb{N}$ and any collection of sites $s_1,\ldots,s_d\in\mathcal{S}$,  
 \begin{align}
   \left(\left\{\frac{X(s_i) - a_{s_i-s_0}(X(s_0))}{b_{s_i-s_0}(X(s_0))}\right\}_{i=1,\ldots,d},  X(s_0) - t \right) \big| X(s_0) > t  \overset{d}{\to} (\{Z^0(s_i)\}_{i=1,\ldots,d},E), \qquad t\to\infty; \label{eq:condlimit}
 \end{align}
that is, convergence in distribution of the normalized process to $\{Z^0(s): s \in \mathcal{S}\}$ in the sense of finite-dimensional distributions. In the limit, the variable $E \sim$ Exp(1) is independent of the residual process $Z^0$, which satisfies $Z^0(s_0) = 0$ almost surely, but is non-degenerate for all $s\neq s_0$ and places no mass at $+\infty$. Note that in assumption~\eqref{eq:condlimit} it is irrelevant whether the conditioning site $s_0$ represents one of the sites $\{s_1,\ldots,s_d\}$. When assumption~\eqref{eq:condlimit} is employed for modelling (see Sections~\ref{sec:model} and~\ref{sec:inference}) one indeed needs to condition on observed values. However, when simulating new events (see Section~\ref{sec:simulation}), the conditioning site could be any site in the domain.

An equivalent limiting formulation under the existence of a joint density for the process $X$ is obtained by conditioning upon the precise value of $X(s_0)$. This can be seen via l'H\^opital's rule:
\begin{align*}
\lim_{t\to\infty}  \frac{\P\left(\left\{\frac{X(s_i) - a_{s_i-s_0}(X(s_0))}{b_{s_i-s_0}(X(s_0))}\right\}_{i=1,\ldots,d}\leq \bm{z}, X(s_0)>t\right)}{\P(X(s_0)>t)} &= \lim_{t\to\infty}  \frac{ \frac{\partial}{\partial t}\P\left(\left\{\frac{X(s_i) - a_{s_i-s_0}(X(s_j))}{b_{s_i-s_0}(X(s_0))}\right\}_{i=1,\ldots,d}\leq \bm{z}, X(s_0)>t\right)}{\frac{\partial}{\partial t}\P(X(s_0)>t)}\\
& = \lim_{t\to\infty}  \P\left(\left\{\frac{X(s_i) - a_{s_i-s_0}(t)}{b_{s_i-s_0}(t)}\right\}_{i=1,\ldots,d}\leq \bm{z} \Big| X(s_0)=t\right).
\end{align*}
The independence of the conditioning variable is also assured under this alternative formulation of the assumption, see e.g.\ \citet[][Proposition~5]{Wadsworthetal17}. Since all the processes that we will consider have densities, this alternative version may sometimes be useful (see Section~\ref{sec:examples}). Although the assumption of a density is very common for spatial statistical modelling, it may be of theoretical interest to consider the situation where this is not the case. We do not attempt to treat this here.

The functions $a_{s-s_0},b_{s-s_0}$ appearing in limit~\eqref{eq:condlimit} can be characterized to some degree. \citet{HeffernanResnick07} detail various requirements in a bivariate setting under the assumption that the conditioning variable has a regularly varying tail, i.e., power law type decay, but with no marginal assumptions on the other variable. If the conditioning variable has an exponential tail, these requirements translate to the existence of functions $\psi^1_{s-s_0},\psi^2_{s-s_0}$ such that for all $c\in \mathbb{R}$ 
\begin{align}
\psi^1_{s-s_0}(c) &= \lim_{t \to \infty} b_{s-s_0}(t+c)/b_{s-s_0}(t) & \psi^2_{s-s_0}(c) = \lim_{t\to\infty}\{a_{s-s_0}(t+c)-a_{s-s_0}(t)\}/b_{s-s_0}(t), \label{eq:abcondns}
\end{align}
with local uniform convergence on compact subsets of $\mathbb{R}$; see also \citet{PapastathopoulosTawn16}. Conditions~\eqref{eq:abcondns} do not provide detailed information since the non-conditioning variable can have any marginal distribution, but any function $a_{s-s_0}$ or $b_{s-s_0}$ in~\eqref{eq:condlimit}, and later in Section~\ref{sec:model}, should satisfy these. Some modest additional structure is given in Proposition~\ref{prop:abcondns} of Appendix~\ref{app:proof} under assumptions on the support of the limit distribution. Generally the literature on conditional extremes is split according to whether the non-conditioning variable(s) are assumed to have a standard marginal form, such as the exponential tails in assumption~\eqref{eq:condlimit}.  Standardization occurs in most applied literature \citep[e.g.][]{Keefetal13a} with the prescription for the assumed normalization following from a variety of theoretical examples, in conjunction with checks on the modelling assumptions. We follow this general approach with the identical exponential-tailed margins in assumption~\eqref{eq:condlimit}. In more probabilistic literature \citep[e.g.][]{DasResnick11,DreesJanssen17} standardization is typically not assumed, or involves heavy tails rather than exponential tails, such as in \citet{HeffernanResnick07}.

A version of assumption~\eqref{eq:condlimit} suited only to processes exhibiting asymptotic dependence has been studied and applied in various places in the literature. In the spatial context, \citet{Engelkeetal14,Engelkeetal15} exploit~\eqref{eq:condlimit} with $a_{s-s_0}(x) \equiv x$, $b_{s-s_0}(x) \equiv 1$, i.e., the normalization does not depend on space. This is similar to the time series setting of \citet{BasrakSegers09}. The conditional extreme value approach has also been applied under asymptotic dependence in emerging applications such as graphical models and Markov trees \citep{EngelkeHitz20,Segers20}. Allowing a greater range of normalizations in~\eqref{eq:condlimit} is the key feature that permits treatment of a substantially broader set of underlying processes.

The more established approaches to spatial extreme value theory based on max-stable and Pareto processes can be justified for extremes of asymptotically dependent data not only through convergence of finite-dimensional distributions, but also functional regular variation \citep[e.g.][Chapter 9]{HultLindskog05,deFondevilleDavison20,deHaanFerreira07}. We note that our assumption in convergence~\eqref{eq:condlimit} is restricted to the case of finite-dimensional distributions. Functional convergence is not treated here, and although this would not alter the methodology that we introduce, we highlight that this is an area of potential further mathematical interest.

\subsection{Examples}
\label{sec:examples}

We present examples of widely used spatial processes that satisfy limit~\eqref{eq:condlimit}, and use these to identify useful structures for model building. Table~\ref{tab:ExampleSummary} summarizes the examples of Section~\ref{sec:Gaussian}--\ref{sec:HW}. 

\begin{table}[h]
\centering
\caption{Normalization functions and limit processes for theoretical examples given in Sections~\ref{sec:Gaussian}--\ref{sec:HW}. For the process of \citet{HuserWadsworth18}, $V$ in~\eqref{eq:HW} is taken as a marginally transformed Gaussian process. In the fourth row, $\ell_{s-s_0}(x)$ is a slowly-varying function of $x$, with $\lim_{x \to \infty}\ell_{s-s_0}(x) =0$, whose form is given by~\eqref{eq:IBRa}.}
 \begin{tabular}{|l|lll|}\hline
  Process & $a_{s-s_0}(x)$ & $b_{s-s_0}(x)$ & $Z^0(s)$ \\\hline
  Gaussian &$\rho(s-s_0)^2 x$ & $ 1+a_{s-s_0}(x)^{1/2}$& Gaussian \\
  Brown--Resnick &$x$ & $1$ & Gaussian \\
  $t_\nu$ & $x$ & $1$ & $t_{\nu+1}$ (transformed margins)\\
  Inverted Brown--Resnick & $\ell_{s-s_0}(x)x$ & $a_{s-s_0}(x)/(\log x)^{1/2}$ & Independence (reverse Gumbel margins) \\
  \citet{HuserWadsworth18} ($\lambda<1$) &$\rho(s-s_0)^2 x$ & $ 1+a_{s-s_0}(x)^{1/2}$& Gaussian \\\hline
 \end{tabular}
\label{tab:ExampleSummary}
\end{table}

\subsubsection{Gaussian process}
\label{sec:Gaussian}

 Let $\{Y(s):s\in\mathcal{S}\subset\mathbb{R}^2\}$ be a standard stationary Gaussian processes with correlation function $\rho(h)\geq 0$ and let $X(s) = -\log(1-\Phi(Y(s)))$ be the same process transformed to standard exponential margins. Then, taking $a_{s-s_0}(x) = \rho(s-s_0)^2 x$ and $b_{s-s_0}(x) = x^{1/2}$, the limit process $Z^0$ has finite-dimensional distributions that are Gaussian with zero mean and covariance structure defined by the matrix
\[
 \Sigma_0 = (2\rho_{k,0}\rho_{l,0} (\rho_{k,l}-\rho_{k,0}\rho_{l,0}))_{1\leq k,l\leq d},
\]
with $\rho_{k,0} = \rho(s_k-s_0)$ etc. See \citet{HeffernanTawn04} for more details of this derivation in a multivariate setting. This limit representation looks problematic as the process $Z^0$ becomes degenerate as $\rho(s_k-s_0) \to 0$, owing to the fact that the scale normalization, $X(s_0)^{1/2}$, is still present when $\rho(s_k-s_0)=0$ even though $X(s_k)$ and $X(s_0)$ are then independent. To avoid this, following \citet{WadsworthTawn13}, we can instead consider  
\[
  \left\{\frac{X(s_i) - \rho(s_i-s_0)^2X(s_0)}{\rho(s_i-s_0)X(s_0)^{1/2}+1}\right\}_{i=1,\ldots,d}, 
\]
i.e.\ taking $b_{s-s_0}(x) = 1+\rho(s-s_0)x^{1/2}$, for which the limit process $Z^0$ is Gaussian with zero mean and finite-dimensional covariance structure
 \begin{align*}
 \Sigma_0 = 2(\rho_{k,l}-\rho_{k,0}\rho_{l,0})_{1\leq k,l\leq d}. 
\end{align*}
Consequently, $Z^0$ has the conditional distribution of a Gaussian process with correlation function $\rho$, conditional upon the event $Z^0(s_0)=0$. However, when $\rho(s_k-s_0) = 0$, the limit process $Z^0$ does not have Gaussian margins, but identical margins to $X$. As such, there is still a discontinuity in the limit behaviour once independence is reached.

\subsubsection{Brown--Resnick process}
Let $\{X(s):s\in\mathcal{S}\subset\mathbb{R}^2\}$ be a Brown--Resnick process \citep{Kabluchkoetal09} with Gumbel margins. That is, $X$ can be expressed as
\begin{align}
 X(s) = \bigvee_{i=1}^\infty E_i + W_i(s) - \sigma^2(s)/2 \label{eq:BRc}
\end{align}
where $E_i$ are points of a Poisson process on $\mathbb{R}$ with intensity $e^{-x}dx$, $W_i$ are independent and identically distributed copies of a centred Gaussian process with stationary increments, and $\sigma^2(s) = \E\{W(s)^2\}$. The variogram of the process $W$ is given by
\begin{align}
\label{eq:vgm}
 \gamma(s_1,s_2) = \E[\{W(s_1)-W(s_2)\}^2],
\end{align}
and if $W(0)=0$ almost surely (a.s.), then $\sigma^2(s) = \gamma(s,0)$. Note that the representation~\eqref{eq:BRc} is not unique, and e.g., \citet{DiekerMikosh15} provide alternative representations with the same distribution.

\citet{Engelkeetal15} show that, for such a process, the finite-dimensional distributions of extremal increments $\{X(s)-X(s_0)|X(s_0)>t\}$ converges as $t\to\infty$ to a multivariate Gaussian distribution with mean vector and covariance matrix
\begin{align}
 \bm{m}_0 = (-\gamma(s_{i},s_0)/2)_{i=1,\ldots,d}, \qquad  \Omega_0 = (\gamma(s_{i},s_0)/2 + \gamma(s_{k},s_0)/2 - \gamma(s_i,s_k)/2)_{1\leq i, k \leq d}, \label{eq:M0Om0}
\end{align}
respectively. As such, the diagonal elements of the covariance matrix are $(\gamma(s_{i},s_0))_{i=1,\ldots,d}$, i.e., $\bm{m}_0=-\mbox{diag}(\Omega_0)/2$. This is the same limiting formulation as~\eqref{eq:condlimit}, with $a_{s-s_0}(x) = x$, $b_{s-s_0}(x) = 1$, and $Z^0$ a Gaussian process whose moment structure is determined by expression~\eqref{eq:M0Om0}.

\subsubsection{$t$ process}
The $t$ process, with $\nu>0$ degrees of freedom, arises as a particular Gaussian scale mixture. Specifically, taking
\begin{align}
 Y(s) = RW(s),
 \label{eq:tprocess}
\end{align}
with $W$ a standard Gaussian process with correlation function $\rho(h)$, and $R^{-2} \sim \mathrm{Gamma}(\nu/2,\nu/2)$, the finite-dimensional distributions $\bm{Y} = (Y(s_1),\ldots,Y(s_d))$ have density
\[
 f_\nu^d(\bm{y};\bm{\mu},\Sigma) = C_{\nu}^d [1+ (\bm{y}-\bm{\mu})^\top \Sigma^{-1} (\bm{y}-\bm{\mu})/\nu]^{-(\nu+d)/2},
\]
with normalization constant $C_{\nu}^{d} = \Gamma((\nu+d)/2) / \Gamma(\nu/2)|\Sigma|^{1/2}(\nu\pi)^{d/2}$; we write $\bm{Y} \sim \mathrm{St}_{\nu}^{d}(\bm{\mu},\Sigma)$. Calculations for the bivariate $t_\nu$ distribution were given in \citet{Keef}; here we extend these to arbitrary dimension.
Suppose $\bm{Y} \sim \mathrm{St}_{\nu}^{d}(\bm{0},\Sigma)$, with dispersion matrix $\Sigma = (\rho_{k,l})$ a correlation matrix, and let $\bm{Y}_{-j}$ represent $\bm{Y}$ without component $j$. Then
\begin{align}
 \frac{\bm{Y}_{-j} - \bm{\rho}_j{Y_j}}{(\nu + Y_j^2)^{1/2}} (\nu+1)^{1/2} \sim \mathrm{St}_{\nu+1}^{d-1}(\bm{0},(\rho_{k,l} - \rho_{j,k}\rho_{j,l})_{k,l \neq j}), \label{eq:tcond}
\end{align}
with $\bm{\rho}_j \in (-1,1)^{d-1}$ the $j$th column of $\Sigma$, without the $j$th component. Let $T:\mathbb{R}\to\mathbb{R}$ be a monotonic increasing transformation and consider $X = T(Y)$, such that $\P(X(s_j)>x)\sim e^{-x}$, with $\bm{X}=T(\bm{Y})$ the transformed finite-dimensional random vector. A suitable choice of transformation $T$ satisfies $T^{-1}(x) \sim K^{1/\nu}e^{x/\nu}$, $x \to \infty$, where $K=C_{\nu}^{1}\nu^{(\nu-1)/2}$. We then have
\begin{align*}
 \P(\bm{X}_{-j} - t \leq \bm{z} | X_j = t) &= \P(T(\bm{Y}_{-j}) \leq \bm{z}+t | T(Y_j) = t)\\
 & =  \P\left(\frac{\bm{Y}_{-j} - \bm{\rho}_j T^{-1}(t)}{(\nu + T^{-1}(t)^2)^{1/2}} \leq \frac{T^{-1}(\bm{z}+t) - \bm{\rho}_j T^{-1}(t)}{(\nu + T^{-1}(t)^2)^{1/2}} \Big| Y_j = T^{-1}(t)\right),
\end{align*}
which, by~\eqref{eq:tcond} is the cdf of the $\mathrm{St}_{\nu+1}^{d-1}(\bm{0},(\rho_{k,l} - \rho_{j,k}\rho_{j,l})_{k,l \neq j}/(\nu+1))$ distribution. Using the asymptotic form of $T^{-1}(x)$, the argument of this distribution function,
\begin{align*}
 \frac{T^{-1}(\bm{z}+t) - \bm{\rho}_j T^{-1}(t)}{(\nu + T^{-1}(t)^2)^{1/2}} \to e^{\bm{z}/\nu} - \bm{\rho}_j, \qquad t\to\infty.
\end{align*}
Consequently, $\bm{X}-X_j| X_{j}>t \overset{d}{\to} \bm{Z}^j$, where $Z_j^j = 0$ and
\[
 \P(\bm{Z}^j_{-j} \leq \bm{z}) = F_{\nu+1}^{d-1}(e^{\bm{z}/\nu}; \bm{\rho}_j, (\rho_{k,l} - \rho_{j,k}\rho_{j,l})_{k,l \neq j}/(\nu+1)),
\]
where $F_{\nu+1}^{d-1}(\cdot,\bm{\mu},\Sigma)$ is the cdf of $\mathrm{St}_{\nu+1}^{d-1}(\bm{\mu},\Sigma)$. As such, we conclude for the spatial process that $a_{s-s_0}(x) = x$, $b_{s-s_0}(x) = 1$. Note that this distribution has mass on lines through $-\infty$, which arise due to the fact that in representation~\eqref{eq:tprocess}, large values of $R$ cause both large and small values of $Y$, depending upon the sign of $W$.

The process $Z^0(s)$, defined through its finite-dimensional distributions $\bm{Z}^0=\{Z^0(s_1),\ldots,Z^0(s_d)\}$, is thus a transformed version of the $t_{\nu+1}$ process with $Z^0(s_0)=0$ a.s., with positive probability of being equal to $-\infty$ at any other site. That probability is smaller the stronger the dependence with the conditioning site; i.e., locally around the conditioning site, there is a high probability of $Z^0>-\infty$.

\subsubsection{Inverted Brown--Resnick process}
\citet{WadsworthTawn12} introduced the class of inverted max-stable processes as those processes whose upper joint tail has the same dependence structure as the lower joint tail of a max-stable process. That is, if $Y$ is a max-stable process and $T$ represents a monotonically-decreasing marginal transformation, then $T(Y)$ is an inverted max-stable process. Applying the transformation $T(x) = e^{-x}$ to~\eqref{eq:BRc} yields the inverted Brown--Resnick process, with standard exponential margins.

\citet{PapastathopoulosTawn16} consider conditional limits of the bivariate margins of the inverted Brown--Resnick process. They show that the normalizations required to obtain a non-degenerate limit are
\begin{align}
a_{s_i-s_0}(x) &= x  \exp\left\{\gamma(s_i,s_0)/4 -(\gamma(s_i,s_0)/2)^{1/2} (2\log x)^{1/2}+(\gamma(s_i,s_0)/2)^{1/2}\frac{\log\log x}{(2\log x)^{1/2}}\right\} \label{eq:IBRa}\\
b_{s_i-s_0}(x) &= a_{s_i-s_0}(x) / (\log x)^{1/2}, \label{eq:IBRb}
\end{align}
with limiting marginal distribution for $Z^0(s_i)$ given by 
\begin{align}
 \P(Z^0(s_i) \leq z) = 1-\exp\{-(\gamma(s_i,s_0)/2)^{1/2}\exp\{2z/\gamma(s_i,s_0)^{1/2})\}/(8\pi)^{1/2}\}. \label{eq:ZIBR}
\end{align}
The multiplier of $x$ in equation~\eqref{eq:IBRa} is a slowly varying function $\ell_{s-s_0}(x)$, meaning that for $c>0$, $\lim_{x \to \infty}\ell_{s-s_0}(cx)/\ell_{s-s_0}(x) = 1$. In the case of~\eqref{eq:IBRa}, $\lim_{x \to \infty}\ell_{s-s_0}(x) = 0$, i.e., $a_{s-s_0}(x) = o(x)$, $x \to \infty$. For the inverted Brown--Resnick process, the normalization and limiting distribution appear rather unintuitive when considering how $\gamma$ affects the dependence. Indeed, as $\gamma(s_i,s_0) \to \infty$, the max-stable process, and thus the inverted max-stable process, approaches independence. Yet, for finite $x$, the normalization $a_{s_i-s_0}(x)$ becomes large, and the mass of the limiting distribution of $Z^0(s_i)$ is placed at smaller values. Note that replacing $z$ by $\{\gamma(s_i-s_0)/2\}^{1/2} z -\gamma(s_i-s_0)^{1/2}\log\{\gamma(s_i-s_0)/2\} / 4$ removes dependence of the limit distribution on $\gamma$; this is equivalent to modifying the normalization functions to 
\begin{align*}
\tilde{a}_{s_i-s_0}(x) &= a_{s_i-s_0}(x) - b_{s_i-s_0}(x) \gamma(s_i,s_0)^{1/2}\log\{\gamma(s_i,s_0)/2\} / 4\\
\tilde{b}_{s_i-s_0}(x) &= \{\gamma(s_i-s_0)/2\}^{1/2} b_{s_i-s_0}(x).
\end{align*}
However, whilst stabilizing the limit in $\gamma$, this still does not lead to an easily interpretable normalization in the sense of $a$ and/or $b$ decreasing monotonically with $\gamma$. To understand this, it is helpful to consider how such a process is formed. From equation~\eqref{eq:BRc}, with $\sigma^2(s) = \gamma(s)$, the Brown--Resnick process is the pointwise maximum of location-adjusted Gaussian processes with negative drift. The more negative the drift (i.e., the larger $\gamma$) the more likely it is that the pointwise maxima from two locations will stem from different underlying Gaussian processes, which is why independence is achieved in the limit as $\gamma \to \infty$. Now, large values of the \emph{inverted} Brown--Resnick process correspond to small values of the original process, which are likely to be at the intersection points whereby different Gaussian processes contribute to the suprema. This rather complex construction thus leads to the seemingly unintuitive behaviour. Concerning the limiting process $Z^0$, \citet{PapastathopoulosTawn16} state that this corresponds to pointwise independence, and as such all structure lies in the functions $a_{s-s_0}$, $b_{s-s_0}$, and reverse Gumbel type margins~\eqref{eq:ZIBR}.

\subsubsection{Process of \citet{HuserWadsworth18}}
\label{sec:HW}

\citet{HuserWadsworth18} present a model for spatial extremes obtained by scale mixtures on Pareto margins, or location mixtures on exponential margins. Suppose that $V$ is an asymptotically independent spatial process with unit exponential margins, and $Q$ is an independent unit exponential variate. Then 
\begin{align}
 X^*(s) = \delta Q + (1-\delta) V(s) \label{eq:HW}
\end{align}
exhibits asymptotic independence for $\delta \leq 1/2$ and asymptotic dependence for $\delta>1/2$. Here, for simplicity of presentation, we reparameterize to $X(s) = X^{*}(s)/(1-\delta)$, and set $\lambda = \delta/(1-\delta) \in (0,\infty)$, with $\lambda \in (0,1]$ corresponding to asymptotic independence. The marginal distribution of $X$ is
\begin{align*}
 \P(X(s_j)>x) = \frac{1}{1-\lambda}e^{-x} - \frac{\lambda}{1-\lambda}e^{-x/\lambda},
\end{align*}
so that the leading order term is $e^{-x}/(1-\lambda)$ for $\lambda<1$ and $\lambda e^{-x/\lambda}/(\lambda-1)$ for $\lambda>1$; the case $\lambda=1$ is obtained as $(1+x)e^{-x}$ upon taking the limit. 

The following proposition establishes the behaviour of interest for the case $\lambda \in (0,1)$; in particular we find that the conditional limit distribution of the modified process $X$ requires the same normalization and has the same limit distribution as the process $V$, if the scale normalization required for $V$ is increasing in $V(s_0)$. For notational simplicity in the following we set $V_0=V(s_0)$, $X_0=X(s_0)$, $\bm{V} = (V(s_1),\ldots,V(s_d))$, $\bm{X} = (X(s_1),\ldots,X(s_d))$, $\bm{a}(x) = (a_{s_1-s_0}(x),\ldots,a_{s_d-s_0}(x))$, and $\bm{b}(x) = (b_{s_1-s_0}(x),\ldots,b_{s_d-s_0}(x))$.

\begin{prop}
\label{prop:sameab}
 Suppose that $\bm{V}$ has unit exponential margins, $\P(V_l>v) = e^{-v}, v >0$, and for $\ba(v)$ and $\bb(v)$ with twice-differentiable components $a_l, b_l$ satisfying $a_l'(v) \sim \alpha_l$, $a''_l(v) = o(1)$, $b_l'(v)/b_l(v) = o(1)$, as $v\to\infty$, $l=1,\ldots, d$,
 \begin{align*}
  \P\left(\frac{\bm{V}-\ba(V_0)}{\bb(V_0)} \leq \bm{z}~\Big|~V_0 = v\right) \to G(\bm{z}),\qquad v\to\infty. 
 \end{align*}
Suppose further that all first and second order partial derivatives with respect to components of $\bm{z}$ converge, as specified in Lemma~\ref{lem:convflexibility} of Appendix~\ref{app:proof}. Then for $\bm{X} = \bm{V} + \lambda Q$, with $Q \sim $Exp(1) independent of $\bm{V}$, and $\lambda \in(0,1)$,
 \[
\P\left(\frac{\bm{X}-\ba(X_0)}{\bb(X_0)} \leq \bm{z}~\Big|~X_0 = x\right) \to \begin{cases}
                                                                               G(\bm{z}), & \min_{1\leq l \leq d}b_l(x) \to \infty\\ 
                                                                               \int_0^{q^{\star}} G\left(\bm{z}+\frac{(\bm{\alpha}-1)\lambda q}{\lim_{x\to\infty} \bb(x)}\right))(1-\lambda)e^{-(1-\lambda)q}  \,\mathrm{d}q, & \mbox{otherwise} 
                                                                              \end{cases}
                                                                              \]
as $x\to\infty$, with $\bm{\alpha}=(\alpha_1,\ldots,\alpha_d)^\top$, and $q^{\star} = \lim_{x \to \infty} ([\min\{\min_l(a_l(x)+b_l(x) z_l),x\}]/\lambda)_+$.
 
\end{prop}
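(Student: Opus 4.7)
The plan is to condition on the independent exponential variate $Q$, apply the conditional limit hypothesis to $\bm{V}$ at the modified conditioning value $V_0 = X_0 - \lambda Q - \log(1-\lambda)$, and integrate out $Q$ against its asymptotic conditional distribution given $X_0=x$. Set $u = \lambda q + \log(1-\lambda)$ and $v_q = x - u$. Given $X_0=x$ and $Q=q$, the event $\{(\bm{X}-\ba(X_0))/\bb(X_0) \leq \bm{z}\}$ is equivalent to $\{(\bm{V}-\ba(v_q))/\bb(v_q) \leq \bm{z}_q(x)\}$, where
\[
\bm{z}_q(x) = \frac{\ba(x)-\ba(v_q)}{\bb(v_q)} + \frac{\bb(x)}{\bb(v_q)}\bm{z} - \frac{u}{\bb(v_q)}.
\]

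First I would compute the asymptotic conditional density of $Q$ given $X_0=x$. A change of variables from $(V_0,Q)$ to $(X_0,Q)$ (unit Jacobian) yields $f_{X_0,Q}(x,q) = f_{V_0}(x-u)\,e^{-q}$. Together with $f_{V_0}(v)\sim e^{-v}$ and $f_{X_0}(x)\sim e^{-x}/(1-\lambda)$ (the latter obtained by differentiating the survivor function of $X^{**}$ stated in the text), this gives $f_{Q\mid X_0=x}(q) \to (1-\lambda)\,e^{-(1-\lambda)q}$. I then rewrite the target as
\[
\P\!\left(\frac{\bm{X}-\ba(X_0)}{\bb(X_0)} \leq \bm{z} \,\Big|\, X_0 = x\right) = \int_0^{q^{\star}(x)} \P\!\left(\frac{\bm{V}-\ba(v_q)}{\bb(v_q)} \leq \bm{z}_q(x) \,\Big|\, V_0 = v_q\right) \frac{f_{V_0}(v_q)\,e^{-q}}{f_{X_0}(x)}\, dq,
\]
where $q^{\star}(x)$ is the largest $q$ keeping $V_0$ and $\bm{V}$ within the support of $V$, whose limit matches the $q^{\star}$ in the statement.

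Next I would determine the limit of $\bm{z}_q(x)$ for fixed $q$. The mean value theorem with $a_l'(v)\to\alpha_l$ and $a_l''(v)=o(1)$ gives $a_l(x)-a_l(v_q) \to u\alpha_l$, while integrating $b_l'/b_l = o(1)$ over $[v_q,x]$ yields $\bb(x)/\bb(v_q) \to 1$. In Case~1, $\min_l b_l(x) \to \infty$, so both $(\ba(x)-\ba(v_q))/\bb(v_q)$ and $u/\bb(v_q)$ vanish and $\bm{z}_q(x) \to \bm{z}$; in Case~2, $\bb(x)$ has a finite limit and $\bm{z}_q(x) \to \bm{z} + (\bm{\alpha}-1)u/\lim_{x\to\infty}\bb(x)$. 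Combining the hypothesized pointwise conditional limit for $\bm{V}$ with Lemma~\ref{lem:convflexibility} -- whose convergence of first- and second-order partial derivatives of the conditional cdf upgrades pointwise convergence to uniform convergence on compacta -- permits evaluation of the inner probability at the moving point $\bm{z}_q(x)$ and produces the limit $G(\bm{z})$ in Case~1, respectively $G(\bm{z} + (\bm{\alpha}-1)u/\lim_{x\to\infty}\bb(x))$ in Case~2.

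Finally, dominated convergence applied to the outer integral -- the inner probability is bounded by $1$ and the density ratio converges to the integrable function $(1-\lambda)e^{-(1-\lambda)q}$ -- passes the limit through. In Case~1, $q^{\star}=\infty$ for $\bm{z}$ with $G(\bm{z})>0$ and the constant integrand $G(\bm{z})$ against the full $\mathrm{Exp}(1-\lambda)$ density yields $G(\bm{z})$; in Case~2 the stated integral is recovered directly. The principal obstacle is that the inner conditional cdf must be evaluated not at a fixed $\bm{z}$ but at the perturbed $\bm{z}_q(x)$, for which pointwise convergence is insufficient; Lemma~\ref{lem:convflexibility} is invoked precisely to furnish the required uniform control. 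A secondary subtlety is that the range $[0, q^{\star}(x)]$ may grow unboundedly with $x$, calling for a dominating function for dominated convergence; the exponential tail asymptotics of $f_{V_0}$ and $f_{X_0}$ are quantitative enough to supply one, and the support constraint $v_q \geq v_*$ keeps $\bm{z}_q(x)$ in a region where the main assumption applies.
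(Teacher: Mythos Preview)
Your proposal is correct and follows essentially the same route as the paper: condition on $Q$, derive the asymptotic density $f_{Q\mid X_0=x}(q)\to(1-\lambda)e^{-(1-\lambda)q}$ from the tail asymptotics of $V_0$ and $X_0$, use the regularity of $a_l,b_l$ together with Lemma~\ref{lem:convflexibility} to control the inner conditional probability, and finish with dominated convergence. The only cosmetic difference is that you absorb the perturbation into a moving evaluation point $\bm{z}_q(x)$ and appeal to Lemma~\ref{lem:convflexibility} as a local-uniformity statement, whereas the paper keeps $\bm{z}$ fixed and writes the perturbation as modified normalizing functions $\ba(v)[\,\cdot\,]$, $\bb(v)[1+\bm{h}^1(v)]$ to match the lemma's hypotheses directly; the two formulations are algebraically equivalent.
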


For $\lambda>1$, corresponding to $\delta>1/2$ in~\eqref{eq:HW}, asymptotic dependence arises, and with a rescaling, one can instead express $X(s) = Q + V(s)/\lambda$ such that $\P(X(s_j)>x) \sim (1-1/\lambda)^{-1}e^{-x}$. In this case, $a_{s-s_0}(x)=x$, $b_{s-s_0}(x)=1$ and the limiting dependence structure is determined by the distribution of $V$.

\begin{cor}
Assume the conditions of Proposition~\ref{prop:sameab}. The normalization and limit distribution for $X$ are the same as those for $V$ when either:
\begin{enumerate}
 \item[(i)] All $b_{s-s_0}(x) \to \infty$, $x\to\infty$, for all $s, s_0$.
 \item[(ii)]$a_{s-s_0}(x) \sim x$ and $b_{s-s_0}(x)\sim 1$, $x\to\infty$, for all $s,s_0$, as arises under asymptotic dependence for $V$.
\end{enumerate}
\end{cor}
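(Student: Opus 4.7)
The plan is to deduce both statements of the corollary as immediate specializations of the two branches appearing in the conclusion of Proposition~\ref{prop:sameab}, so the work reduces to substituting the hypotheses of each case into the stated limit and checking that the shift term inside $G$ vanishes and that the integral in $q$ collapses to $1$.

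For case (i), the hypothesis $b_{s-s_0}(x)\to\infty$ for all pairs $s,s_0$ gives directly $\min_{1\leq l \leq d} b_l(x)\to\infty$, which is precisely the scenario triggering the first branch of the proposition. The conditional limit is then $G(\bm{z})$, which by construction is the limit for $\bm{V}$ under the same normalization $\ba,\bb$, so both the normalization and the limit distribution are inherited from $\bm{V}$.

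For case (ii), the hypothesis $a_{s-s_0}(x)\sim x$ forces $a_l'(v)\to 1$ so that $\alpha_l=1$ for every $l$, and $b_{s-s_0}(x)\sim 1$ gives a finite nonzero limit for $\bb(x)$. Hence the argument of $G$ in the ``otherwise'' branch becomes $\bm{z}+(\bm{\alpha}-\bm{1})(\lambda q+\log(1-\lambda))/\lim_{x\to\infty}\bb(x)=\bm{z}$, independently of $q$. The remaining task is to verify that the upper limit $q^{\star}$ of the integral tends to $\infty$: since $a_l(x)+b_l(x)z_l\sim x+z_l\to\infty$ for each fixed $\bm{z}$, both $\min_l(a_l(x)+b_l(x)z_l)$ and $x$ diverge, so $q^{\star}=\infty$. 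The integral then reduces to $G(\bm{z})\int_0^{\infty}(1-\lambda)e^{-(1-\lambda)q}\,\mathrm{d}q = G(\bm{z})$, as required. The remark at the end regarding asymptotic dependence is simply an identification of a natural setting (namely $\bm{V}$ asymptotically dependent) in which the assumed forms of $\ba,\bb$ are automatic, but plays no role in the derivation.

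The proof is essentially a checking exercise rather than a construction; the only non-trivial point is verifying that $q^{\star}=\infty$ in case (ii), which in turn rests on the uniform growth $a_l(x)\sim x$. No auxiliary lemma beyond Proposition~\ref{prop:sameab} itself (and the Lemma~\ref{lem:convflexibility} already invoked in its proof) is needed.
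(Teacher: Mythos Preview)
Your proposal is correct and matches the paper's intent: the corollary is stated without an explicit proof precisely because it is meant to drop out of Proposition~\ref{prop:sameab} by the substitution you describe. One small wording issue: $a_l(x)\sim x$ does not by itself force $a_l'(v)\to 1$; rather, Proposition~\ref{prop:sameab} already assumes $a_l'(v)\to\alpha_l$, and then $a_l(x)\sim x$ identifies $\alpha_l=1$---but your conclusion and the remainder of the argument are unaffected.
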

The proof of Proposition~\ref{prop:sameab} is in Appendix~\ref{app:proof}. For the application of Proposition~\ref{prop:sameab}, convergence of the partial derivatives needs to be established. Supposing concretely that $V$ is a Gaussian process with margins transformed to be exponential, Lemma~\ref{lem:Gausspd} and Remark~\ref{rmk:Gauss2deriv} of Appendix~\ref{app:proof} provides this result.

The fact that the normalization and limit distribution are often the same for $X$ and $V$ may seem surprising given that \citet{HuserWadsworth18} noted differences in the so-called coefficient of tail dependence \citep{LedfordTawn96} for $X$ according to the relative values of $\delta$ in equation~\eqref{eq:HW} and the coefficient of tail dependence for $V$. This result highlights that the conditional extreme value approach provides a different asymptotic representation of the tail of a random vector or process to those where all variables are considered to be extreme simultaneously. Further elaboration of this can be found in \citet{NoldeWadsworth20}.

\section{Statistical modelling}
\label{sec:model}

Motivated by the limit assumption~\eqref{eq:condlimit} and the examples in Section~\ref{sec:examples}, we suppose that for a high threshold $u$,
\begin{align}
\label{eq:modelXj}
 \left\{X(s) | X(s_0) > u : s\in\mathcal{S}\right\} \overset{d}{\approx} \left\{a_{s-s_0}(X(s_0)) + b_{s-s_0}(X(s_0)) Z^{0}(s)  :s\in\mathcal{S}\right\},
\end{align}
for some choice of functions $a_{s-s_0}, b_{s-s_0}$, residual process $Z^0$, and that $X(s_0) - u | X(s_0) > u \sim$ Exp(1) is independent of $Z^0$. This specifies a process model conditioning on the value at a particular site $s_0$ being extreme. The process is observed at $d$ locations, so for inference we are interested in $d$ such specifications, taking each observation site as the conditioning site $s_0$. However, assuming stationarity in space, we obtain a common form for $a_{s-s_0}, b_{s-s_0}$ and $Z^0$ whatever the conditioning site. Inference is described in Section~\ref{sec:inference}, whilst simulation conditioning both on $\{X(s_0)>u\}$ and $\{\max_{i_1 \leq i \leq i_m}X(s_i) >u\}$, for any collection of sites $\{s_{i_1},\ldots, s_{i_m}\}$ with each $s_{i_j} \in \mathcal{S}$, is dealt with in Section~\ref{sec:simulation}. For now, we address the specification of $a_{s-s_0}$, $b_{s-s_0}$ and $Z^0$. Each of the proposed sets of functions $a_{s-s_0}, b_{s-s_0}$ satisfies conditions~\eqref{eq:abcondns}.

\subsection{Functions $a_{s-s_0}$ and $b_{s-s_0}$}
\label{sec:functionsab}
As noted in Section~\ref{sec:Intro}, the function $a_{s-s_0}:\mathbb{R}\to\mathbb{R}$ must satisfy $a_0(x) = x$. Further, if the process $X$ is asymptotically dependent, then $a_{s-s_0}(x) = x$ for all $s$, whilst if it is (directionally) lag-asymptotically dependent up to lag $\Delta_\theta$, then $a_{s-s_0}(x) = x$ for $\|s-s_0\|\leq\Delta_\theta$. Under asymptotic independence, and if $Z^0$ has any positive support, we have $a_{s-s_0}(x) < x$, where the bound may only hold asymptotically, i.e., as $x \to \infty$ (Proposition~\ref{prop:abcondns}, Appendix~\ref{app:proof}). Focusing on the isotropic case, we propose the general parametric form for $a_{s-s_0}$ as
\begin{align}
 a_{s-s_0}(x) = x\alpha(s-s_0)= \begin{cases}
                      x & \|s-s_0\| <\Delta\\
                      x \exp\{-[(\|s-s_0\|-\Delta)/\lambda]^\kappa\}, &  \|s-s_0\| \geq \Delta.
                     \end{cases}
\label{eq:alpha}
\end{align}
Taking $\Delta=0$ provides a flexible model for asymptotic independence, with parameters $(\lambda,\kappa)$ to estimate. Taking $\Delta>0$ but less than the maximum distance between any two sites in the domain would allow for lag-asymptotic dependence; the value of $\Delta$ itself could be estimated directly, although profiling over $\Delta$ on a grid may be preferable. Taking $\Delta$ larger than the maximum distance between any two sites would correspond to asymptotic dependence, in which case $(\lambda,\kappa)$ would not be estimated. Equation~\eqref{eq:alpha} covers all forms for $a_{s-s_0}$ from lines 1,2,3 and 5 of Table~\ref{tab:ExampleSummary}, if $\rho$ is a powered exponential correlation function, and line 4 up to a slowly varying function. Furthermore, condition~\eqref{eq:abcondns} is satisfied as long as $b_{s-s_0}(x) \not\to 0$ as $x \to \infty$. We note that other monotonically decreasing functions are also candidates for the second line of~\eqref{eq:alpha}; certain correlation functions and survival functions are natural choices. 

For the function $b_{s-s_0}$, we propose three forms to achieve different modelling aims.

\paragraph{Model 1} $b_{s-s_0}(x) = [1+ \zeta x^\beta]^{-1}$, $\beta<0$, $\zeta\geq 0$.

The rationale behind this suggestion is that, if $\zeta>0$, then $b_{s-s_0}(x) \nearrow 1$, with $\zeta$ and $\beta$ controlling the convergence to the constant value. When $\|s-s_0\| < \Delta$ and data are (lag-)asymptotically dependent, this permits the model to display some \emph{subasymptotic} dependence, in the sense that for $\|s_k-s_0\|<\Delta$, and $x_q$ the $q$-quantile of $X$,
\begin{align*}
 \chi_q(s_k, s_0) = \P(X(s_k)>x_q|X(s_0)>x_q) &= \P(X(s_0)+[1+ \zeta X(s_0)^\beta]^{-1} Z^{0}(s_k)>x_q | X(s_0)>x_q)\\
 &= e^{x_q}\int_{x_q}^\infty  \P(Z^{0}(s_k)>(x_q-v)/[1+ \zeta v^\beta]^{-1}) e^{-v}\, \mathrm{d}v\\
 &=\int_0^\infty  \P(Z^{0}(s_k)>-r/[1+ \zeta (r+x_q)^\beta]^{-1}) e^{-r}\, \mathrm{d}r \\
 &\searrow \int_0^\infty  \P(Z^{0}(s_k)>-r) e^{-r}\, \mathrm{d}r, \qquad q\to 1, x_q\to\infty.
\end{align*}
Here, the dependence measure $\chi_q(s_k, s_0)$ is as defined in equation~\eqref{eq:chi}. In practice, spatial data nearly always exhibit values of $\chi_q(s_k, s_0)$ that decrease with the level $q$, although Pareto process models for asymptotic dependence cannot capture this feature, leading to overestimation of dependence at extreme levels. Taking $b_{s-s_0}$ as a constant function here with $a_{s-s_0}(x) = x$ would also yield asymptotic dependence with $\chi_q$ not varying with $q$. Such a model could be implemented by choice if the estimate of $\beta\ll0$, or $\zeta \approx 0$, for example. 

\paragraph{Model 2} Let $\Delta=0$ in~\eqref{eq:alpha}, and $b_{s-s_0}(x) = x^\beta$.

This represents a modelling strategy close to that proposed by \citet{HeffernanTawn04}. If the marginal support of $Z^0$ includes $(0,\infty)$, we require $\beta<1$ (Proposition~\ref{prop:abcondns}, Appendix~\ref{app:proof}), whilst conditions~\eqref{eq:abcondns} imply $\beta\geq0$. We have
\[
  \chi_q(s_k,s_0) = \int_0^\infty \P[Z^{0}(s_k)>\{x_q-\alpha(s_k-s_0)(r+x_q)\}/(r+x_q)^\beta] e^{-r}\, \mathrm{d}r \searrow 0, \qquad q\to1, x_q\to\infty,
\]
so that the rate of convergence to zero is controlled by $\beta$, in conjunction with the value of $\alpha(s_k-s_0)$ and the distribution of $Z^0(s_k)$. Such a model may perform well on data that exhibit asymptotic independence, but positive dependence over the whole region of study. In particular, since $ b_{s-s_0}(x)$ is still growing even as $a_{s-s_0}(x) \to 0$ $(\|s-s_0\|\to\infty)$, independence cannot be achieved as $\|s-s_0\|\to\infty$. This observation motivates our final model.

\paragraph{Model 3}  Let $\Delta=0$ in~\eqref{eq:alpha}, and $b_{s-s_0}(x) =  1+ a_{s-s_0}(x)^\beta, \qquad \beta>0$.

With Model 3, if $a_{s-s_0}(x) \to 0$ for fixed $x$ as $\|s-s_0\|\to\infty$, then $b_{s-s_0}(x) \to 1$ as $\|s-s_0\|\to\infty$. Thus, for $s$ sufficiently far from $s_0$, we would have
\begin{align*}
 X(s) | X(s_0)> u \overset{d}{\approx} Z^{0}(s).
\end{align*}
This final observation indicates that, under Model 3, if $s$ is far enough from $s_0$ that $X(s)$ and $X(s_0)$ are independent, the margins of the process $Z^0$ should be the same as the process $X$.

Models 1--3 provide a wide variety of structures, but we note that other choices could be considered.

\subsection{Process $Z^0(s)$}
\label{sec:processZ}
The random field $Z^0$ must satisfy $Z^0(s_0) = 0$. Supposing initially that we begin with an arbitrary Gaussian process $Z_G$, there are two natural ways to derive a process from $Z_G$ with this property:
\begin{enumerate}[(i)]
 \item Set $Z^{0}$ to have the distribution of $Z_G|Z_G(s_0)=0$
 \item Set $Z^{0}$ to have the distribution of $Z_G(s)-Z_G(s_0)$,
\end{enumerate}
with the resulting processes both again Gaussian. The initial process $Z_G$ might be stationary, and specified by a correlation function $\rho$ and variance $\sigma^2$, or have stationary increments, specified by a variogram $\gamma$, as in~\eqref{eq:vgm}; addition of the drift term $-\gamma(s,s_0)/2$ to the latter yields the process specified by~\eqref{eq:M0Om0}.

Taking such a Gaussian process may be a natural and parsimonious choice in many situations. Indeed, allowing non-Gaussian dependence in $Z^0$ would lead to intractable models for high-dimensions, although we emphasize that the dependence of the modelled process $X$ is not Gaussian. However, the choice of marginal distribution for $Z^0$ can impact upon the model; in particular, as noted at the end of Section~\ref{sec:functionsab}, we may wish $Z^0$ to have the same margins as $X$ in certain places, e.g., when $\|s-s_0\|$ is large.

We thus propose the following distributional family, that includes both Gaussian and Laplace marginal distributions. We say that a random variable $Z$ has a \emph{delta-Laplace} distribution, with location parameter $\mu\in\mathbb{R}$ and scale parameter $\sigma>0$, if its density is
\begin{align}
 f(z) = \frac{\delta}{2\sigma\Gamma(1/\delta)}\exp\left\{-\left|\frac{z-\mu}{\sigma}\right|^{\delta}\right\}, \qquad \delta>0, \label{eq:deltaLaplace}
\end{align}
with $\Gamma(\cdot)$ the standard gamma function. When $\delta=1$ this is the Laplace distribution. If $X$ has Laplace margins, as suggested in \citet{Keefetal13a}, Model 3 can be completed by allowing the parameter $\delta$ to depend on $s-s_0$, such that $\delta(s-s_0) \to 1$ as $\|s-s_0\| \to \infty$. When $\delta = 2$, density~\eqref{eq:deltaLaplace} is that of the Gaussian distribution with mean $\mu$ and variance $\sigma^2/2$. In general, if $Z$ has distribution~\eqref{eq:deltaLaplace}, then $\E(Z) = \mu$ and $\mathrm{Var}(Z) = \{\Gamma(3/\delta)/\Gamma(1/\delta)\}\sigma^2$.

In practice, the location and scale structure implied by the processes (i) and (ii) defined on Gaussian margins are passed through to the delta-Laplace margins via the probability integral transform. In some situations it may be desirable to incorporate alternative mean structures by specifying a particular parametric form for $\mu(s-s_0)$ that is different from those implied by (i) and (ii); further discussion on this is made in Section~\ref{sec:Australia}.

\section{Inference}
\label{sec:inference}

\subsection{Likelihood}
The models described in Section~\ref{sec:model} are suitable given an extreme in a single location. However, we would like to combine these models to allow for extremes in any observed location. Assuming stationarity, this motivates the use of a composite likelihood to estimate parameters of $a_{s-s_0},b_{s-s_0}$ and $Z^0$, all of which do not depend on the particular conditioning site $s_0$, which is now taken to be one of the observation locations. Specifically, denote these parameters by $\bm{\theta}=(\bm{\theta}_a,\bm{\theta}_b,\bm{\theta}_Z)$, where some subsets of parameters may be scalar. The likelihood based on the $n_j$ points $(X_i(s_1),\ldots,X_i(s_d))= (x_1^{i},\ldots,x_d^{i}), i=1,\ldots, n_j$, for which $X_i(s_j)>u$ is denoted
\begin{align}
L_j(\bm{\theta}) = \prod_{i=1}^{n_j} f_{Z^j}(\{[x_k^i - a_{s_k-s_j}(x_j^i; \bm{\theta}_a)]/b_{s_k-s_j}(x^i_j; \bm{\theta}_b)\}_{k\in\{1,\ldots,d\}\setminus j}; \bm{\theta}_Z)\prod_{k\in\{1,\ldots,d\}\setminus j} b_{s_k-s_j}(x^i_j; \bm{\theta}_b)^{-1}, \label{eq:liksinglesite}
\end{align}
with $f_{Z^j}$ the density of $Z^0$ at the observation locations excluding $s_j$ when $s_0=s_j$. We estimate $\bm{\theta}$ from the composite likelihood over all $d$ observed sites $L(\bm{\theta}) = \prod_{j=1}^d L_j(\bm{\theta})$. Composite likelihoods have been widely used in inference for max-stable processes \citep{Padoanetal10}, where lower-dimensional --- typically pairwise --- densities are used due to being substantially simpler to compute. Our usage here is different, as the components of the composite likelihood are full joint densities. However, 
any realizations of $X$ that are larger than $u$ at multiple sites will be counted more than once in the likelihood, with different conditioning sites. The motivation for the composite likelihood is thus to get a single set of parameters that, on average, represent well each of the distributions conditioning on a single site being large. Parameter uncertainty can be assessed by bootstrap techniques; see Section~\ref{sec:Australia}. The first product in equation~\eqref{eq:liksinglesite} arises from an assumption of independent replications of $X|X(s_j)>u$. This will often not be satisfied in practice, but for weak-moderate temporal dependence parameter consistency is unaffected by such misspecification \citep[e.g.][]{ChandlerBate07}, while we can take advantage of a block bootstrap for assessing parameter uncertainty.

An advantage of the conditional approach over other models is that the likelihood does not require integrals as in the censored likelihoods commonly used for asymptotically dependent or asymptotically independent models \citep[e.g.,][]{WadsworthTawn14,ThibaudOpitz15,Huseretal17,HuserWadsworth18}. Such integrals are typically the limiting factor in the number of sites to which one can fit the model, often meaning a reasonable upper limit is 20--30 sites. Figure~\ref{fig:comptime} in Appendix~\ref{app:Comptime} displays example computation times for the evaluation of a single likelihood $L_j$, and the full composite likelihood for different numbers of observation locations $d$ and different numbers of average repetitions $n_j$, both for Model 3. Based on these findings, it is perfectly feasible to optimize a likelihood for a moderate number of repetitions at 100--200 observation locations with no special computing power.  In a recent review paper, \citet{HuserWadsworth22} suggest cutting the computational burden of the composite likelihood $L(\bm{\theta})$ by taking a product over a subset of the full set of possible conditioning sites.

\subsection{Model selection and diagnostics}
\label{sec:modelselection}
If all options for $a_{s-s_0}, b_{s-s_0}$ and $Z^0$ from Sections~\ref{sec:functionsab}--\ref{sec:processZ} are considered, the number of candidate models becomes large. One way to inform the structure of the model is to perform pairwise fits, for which $Z^0$ is a scalar random variable, and plot parameter estimates against distance. This will help to uncover if and how $\alpha(s-s_0)$ should decrease with $\|s-s_0\|$, as well as evolution of parameters of $Z^0$ with $\|s-s_0\|$. 

Although the computation time for the full composite likelihood is reasonable, it is desirable to have a faster way to compare several candidate models. To achieve this one could select a single site $j$ and compare goodness of fit based on AIC using likelihood~\eqref{eq:liksinglesite}. Should a clear preference for a set of modelling choices emerge over investigation of a few such conditioning sites, these can be carried forward to a full composite likelihood fit. 

To assess overall goodness of fit, we suggest (i) assessing whether the distribution of fitted residual processes $\{\hat{Z}^0(s) = [X(s)-\hat{a}_{s-s_0}(X(s_0))]/\hat{b}_{s-s_0}(X(s_0))\}$ is close to the assumed distribution, (ii) comparing parameter estimates from pairwise fits to global fits, and (iii) verifying that simulations from the model are consistent with the observed distribution when one or more sites is extreme. The assumption of independence of the conditioning variable and the process $Z^0$ can be tested by summarizing the fitted $\hat{Z}^0$ processes with low-dimensional variables such as the empirical mean and variance, and testing whether they are independent of the conditioning variable. If dependence is implicated then a higher threshold could be considered.

As we will show in Section~\ref{sec:simulation}, simulation from this model is simple. As a result, diagnostics based on the intended model purpose can be explored, by comparing the behaviour of the data to the fitted model by simulation.  

\section{Simulation}
\label{sec:simulation}

Inference on quantities of interest, such as the probabilities of different extreme events, can be made via simulation. Given the nature of the model and assumptions, simulation conditional on being extreme at a particular location $s_0$ is straightforward, and the algorithm is outlined in Section~\ref{sec:SimCond1site}. In many circumstances, it is desirable to condition instead on the process being extreme at some part of the domain, but not at a specific location. We address methods for this in Section~\ref{sec:SimCondAnySite}.

\subsection{Simulation given an extreme at a specified site}
\label{sec:SimCond1site}
Under assumption~\eqref{eq:condlimit} and model~\eqref{eq:modelXj}, simulation of $X$, given an extreme value above a threshold $v\geq u$ at a specific location $s_0$, proceeds as follows:

\theoremstyle{definition}
\newtheorem{alg}{Algorithm}

\begin{alg}
 \label{alg:sims0}
 ~~
\begin{enumerate}
 \item Generate $X(s_0)|X(s_0)>v \sim$ Exp(1)
 \item Independently of $X(s_0)$, generate $\{Z^0(s):s \in \mathcal{S}\}$ from the model as specified in Section~\ref{sec:processZ}.
 \item Set $\{X(s)|X(s_0)>v: s \in \mathcal{S}\} = \{a_{s-s_0}(X(s_0))+b_{s-s_0}(X(s_0))Z^0(s):s \in \mathcal{S}\}$.
\end{enumerate}
\end{alg}

In practice, Algorithm~\ref{alg:sims0} is used to simulate at a finite $m$-dimensional collection of sites $\{s_{i_1},\ldots,s_{i_m}\}$, where the observation sites $\{s_1,\ldots,s_d\}$ may or may not be included. In particular, the conditioning site $s_0$ need not correspond to an observation location. We denote the $m$-dimensional distribution of $X|X(s_0)>v$, obtained via the approximation \eqref{eq:modelXj}, by $\Q_0^{v}$. Further, denote by $M=\{i_1,\ldots, i_m\}$ the set of indices corresponding to the simulation sites, and by $D=\{1,\ldots,d\}$ the indices of the observation locations, where we may have $M=D$. Letting $\bm{X}_M = \{X(s_{i_1}),\ldots,X(s_{i_m})\}$, Algorithm~\ref{alg:sims0} can be used to estimate quantities of the form
\begin{align}
\label{eq:target0}
 \E\{g(\bm{X}_M)|X(s_0)>v\} = \E_{\Q_0^v}\{g(\bm{X}_M)\},
\end{align}
for some $g:\mathbb{R}^m \to \mathbb{R}^l$, $l\geq 1$. Equation~\eqref{eq:target0} is convenient if one is either interested in conditioning on a specific location, or if $g(\cdot)$ is an indicator function for an event that involves $X(s_0)>v'\geq v$, since then the unconditional probability of this event is easily estimated using the exponential distribution of $X(s_0)$. As an illustration, suppose that the simulation sites are the observation sites, let $\mathbbm{1}(A)$ be the indicator function for the occurrence of event $A$, and $g(\bm{X}_M) = \mathbbm{1}(X(s_1)>v,\ldots,X(s_d)>v)$. Then, taking $s_0$ as any $s_j$, $j=1,\ldots,d$ we have
\begin{align*}
 \P(X(s_1)>v,\ldots,X(s_d)>v) = \E_{\Q_0^v}\{\mathbbm{1}(X(s_1)>v,\ldots,X(s_d)>v)\} \times \P(X(s_0)>v).
\end{align*}

\subsection{Simulation given an extreme in the domain}
\label{sec:SimCondAnySite}

In place of conditioning on a single site being extreme, interest often lies in estimating quantities of the form
\begin{align}
 \E\left\{g(\bm{X}_M)|\max_{i\in M}{X(s_i)}>v\right\}, \label{eq:target}
\end{align}
where if $m=|M|$ is large and the locations suitably arranged, $\{\max_{i\in M}{X(s_i)}>v\} \approx \{\sup_{\mathcal{S}} X(s)>v\}$. 

We denote the distribution of $\bm{X}_M|\max_{i \in M} X(s_i)> v$ by $\P^v$, abbreviate $X(s_i) =X_i$, $i \in M$ or $i \in D$, and let $\Q^v : = \sum_{i \in M} \pi_i \Q_i^v $, where $\Q_i^v$ represents $\Q_0^v$ at $s_0=s_i$ and
\[\pi_i = \frac{\P(X_i>v)}{\sum_{i\in M} \P(X_i>v)},\]
which equals $1/m$ if the margins are identical. In other words $\Q^v$ is the mixture distribution formed by selecting each $\Q_i^v$ with probability $\pi_i$. The support of $\Q^v$ is $\{\bm{x}\in\mathbb{R}^m:\max_{i \in M} x_i>v\}$. We note that because we standardize the margins of $X$, all margins are equal and some of the following algorithms can be simplified. However, we give the more general formulation below.

We firstly describe an existing approach to the problem of estimating~\eqref{eq:target} based on rejection sampling, together with a number of limitations of this method. To overcome these, we introduce a new approach in Section~\ref{sec:importance} based on importance sampling.

\subsubsection{Rejection sampling}
\label{sec:rejection}

\citet{Keefetal13a} propose a method to simulate from the distribution of $\bm{X}_D|\max_{i \in D} X(s_i)>u$, i.e., where $v=u$ and $M=D=\{1,\ldots,d\}$, the set of observation locations. Note the partition 
\[\left\{\bm{x}\in\mathbb{R}^d:\max_{i \in D} x_i>u\right\} = \cup_{j=1}^d \left\{\bm{x}\in\mathbb{R}^d: x_j>u, x_j = \max_{i \in D} x_i\right\},\]
and that $\P^u = \sum_{j \in D} \tilde{\pi}^u_j \Q_{j,\mbox{max}}^u$, where $\Q_{j,\mbox{max}}^u$ is the distribution of $\bm{X}|\{X_j>u, X_j=\max_{i \in D} X_i\}$, and $\tilde{\pi}^u_j = \P(X_j = \max_{i \in D} X_i|\max_{i \in D} X_i>u)$. Their algorithm is as follows:

\begin{alg}
 \label{alg:rejection}
 ~~
\begin{enumerate}
 \item Estimate $\tilde{\pi}^u_j$ using empirical proportions
 \item To simulate from $\Q_{j,\mbox{max}}^u$, draw from $\Q_j^u$ and reject unless $X_j=\max_{i \in D} X_i$
 \item Simulate from $\P^u$ by drawing from $\Q_{j,\mbox{max}}^u$ with probability $\tilde{\pi}^u_j$
\end{enumerate}
\end{alg}
\noindent
 This rejection method leads directly to draws from (an estimate of) $\P^u$, but has substantial drawbacks:
\begin{enumerate}[(i)]
 \item One can only condition on being large at the observation locations, rather than any set of locations;
 \item For high dimensions, the number of rejections to get a single draw from $\Q_{j,\mbox{max}}^u$ may be very high;
 \item To simulate from $\P^v$, $v \gg u$, one needs to firstly simulate from $\P^u$ and use the resulting draws to estimate the probabilities $\tilde{\pi}^v_j$, before proceeding with the rejection sampling approach to simulate $\Q_{j,\mbox{max}}^v$;
 \item If the maximum never occurs at a particular site in the sample, it will not occur there in simulations either.
\end{enumerate}

The final point can be addressed by simulating from the model to estimate $\tilde{\pi}^u_j$, at the cost of adding complexity to the algorithm. We propose an approach to estimate quantities of the form~\eqref{eq:target} via importance sampling, that avoids all of these drawbacks.

\subsubsection{Importance sampling}
\label{sec:importance}

\newcommand{\I}{\mathbbm{1}}
\newcommand{\Rm}{R_{\rm{max}}^v}

Compared to $\P^v$, the distribution $\Q^v$ samples more frequently in regions where multiple variables are extreme. However, this frequency is tractable: in the region where exactly $k$ variables are larger than $v$, $\Q^v$ samples $k$ times more observations than $\P^v$, because there are $k$ distributions $\Q_j^v$ covering this area. 

To formalize this connection, let $\P$ be a probability measure on $\mathcal{B}(\mathbb{R}^m)$, the Borel sigma algebra of $\mathbb{R}^m$, for which $\P(X_j>v)>0$ for all $j \in M$. Letting $\Rm = \{\bm{x}\in\mathbb{R}^m:\max_{i \in M} x_i >v\}$, and $\E, \E_{\P^v}, \E_{\Q^v_j}$ denote expectation with respect to $\P, \P^v$ and $\Q^v_j$ respectively, we can express
\begin{align*}
 \E_{\P^v}\{g(\bm{X}_M)\} & = \frac{\E\{g(\bm{X}_M)\I(\bm{X}_M \in \Rm)\}}{\E\{\I(\bm{X}_M \in \Rm)\}}, &  \E_{\Q^v_j}\{g(\bm{X}_M)\} & = \frac{\E\{g(\bm{X}_M)\I(X_j>v)\}}{\E\{\I(X_j >v)\}}, 
\end{align*}
and $\E_{\Q^v}\{g(\bm{X})\} = \sum_{i \in M} \pi_i \E_{\Q^v_i}\{g(\bm{X})\}$. Further note that $\Rm = \cup_{K}R_K^v$, where
\[
 R_K^v = \{\bm{x} \in \mathbb{R}^m: x_k>v, \mbox{ for all } k\in K, \mbox{ and } x_l \leq v \mbox{ for all } l \in M\setminus K \},\qquad K \subseteq M,
\]
and the union is over all $K$ in the power set of $M$, minus the empty set,  with all $R_K^v$ disjoint.

\begin{prop}\label{prop:importance}
 For any random vector $\bm{X}_M \in\mathbb{R}^m$, for which $\P(X_i>v)>0$ for all $i \in M$, and any function $g:\mathbb{R}^m \to \mathbb{R}^{l}$
 \begin{align}
  \E_{\P^v}\{g(\bm{X}_M)\} = \frac{\sum_{i\in M} \P(X_i>v)}{\P(\max_{i\in M} X_i>v)} \sum_K \E_{\Q^v}\{g(\bm{X}_M)\I(\bm{X}_M \in R_K^v)/|K|\}  = \frac{\sum_K \E_{\Q^v}\{g(\bm{X}_M)\I(\bm{X}_M \in R_K^v)/|K|\} }{\sum_K \E_{\Q^v}\{\I(\bm{X}_M \in R_K^v)/|K|\} }.\label{eq:puexp}
 \end{align}
\end{prop}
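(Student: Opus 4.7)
The plan is to unwind the definitions of $\Q^v$, $\Q^v_j$ and $\P^v$ as reweighted expectations under the base measure $\P$, and then exploit the fact that on the region $R_K^v$ exactly the variables indexed by $K$ are the ones that exceed $v$. This last point is what makes the reweighting by $|K|^{-1}$ exact rather than approximate.

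First I would write, for each $j\in M$,
\[
\E_{\Q^v_j}\{g(\bm{X}_M)\I(\bm{X}_M\in R_K^v)\}
= \frac{\E\{g(\bm{X}_M)\I(\bm{X}_M\in R_K^v)\I(X_j>v)\}}{\P(X_j>v)},
\]
and observe that the product $\I(\bm{X}_M\in R_K^v)\I(X_j>v)$ equals $\I(\bm{X}_M\in R_K^v)$ if $j\in K$ and equals $0$ otherwise, because $R_K^v$ forces $X_j>v$ exactly for $j\in K$. Combining this with the mixture definition $\Q^v=\sum_{i\in M}\pi_i\Q_i^v$ and the formula for $\pi_i$ gives, after cancellation of $\P(X_i>v)$,
\[
\E_{\Q^v}\{g(\bm{X}_M)\I(\bm{X}_M\in R_K^v)\}
= \frac{|K|\,\E\{g(\bm{X}_M)\I(\bm{X}_M\in R_K^v)\}}{\sum_{j\in M}\P(X_j>v)}.
\]
Dividing by $|K|$ and summing over the nonempty $K\subseteq M$, the disjointness of the $R_K^v$ with $\cup_K R_K^v=\Rm$ collapses the right-hand side to $\E\{g(\bm{X}_M)\I(\bm{X}_M\in \Rm)\}/\sum_{j\in M}\P(X_j>v)$.

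Next I would divide numerator and denominator of $\E_{\P^v}\{g(\bm{X}_M)\}=\E\{g(\bm{X}_M)\I(\bm{X}_M\in\Rm)\}/\P(\max_{i\in M}X_i>v)$ by $\sum_{j\in M}\P(X_j>v)$ and identify the numerator with the sum just obtained; this yields the first equality in \eqref{eq:puexp}. For the second equality, I would take $g\equiv 1$ in the first equality. Since the left-hand side is then $1$, one obtains
\[
\frac{\sum_{i\in M}\P(X_i>v)}{\P(\max_{i\in M}X_i>v)}
= \frac{1}{\sum_K \E_{\Q^v}\{\I(\bm{X}_M\in R_K^v)/|K|\}},
\]
and substituting back gives the second equality.

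There is no real obstacle: the only subtlety is the identity $\I(\bm{X}_M\in R_K^v)\I(X_j>v)=\I(j\in K)\I(\bm{X}_M\in R_K^v)$, which makes the factor $|K|$ appear naturally and justifies the $1/|K|$ weighting that converts $\Q^v$ back into $\P^v$. Everything else is linearity of expectation, the mixture representation of $\Q^v$, and the disjoint partition of $\Rm$ into the $R_K^v$. I would therefore present the argument as two displayed calculations (the reduction of $\E_{\Q^v}\{g(\bm{X}_M)\I(\bm{X}_M\in R_K^v)\}$, and the summation over $K$), followed by the normalization step with $g\equiv 1$.
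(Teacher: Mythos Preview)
Your proposal is correct and follows essentially the same approach as the paper: both arguments hinge on the identity $\I(\bm{X}_M\in R_K^v)\I(X_j>v)=\I(j\in K)\I(\bm{X}_M\in R_K^v)$, the mixture representation of $\Q^v$, the disjoint partition $\Rm=\cup_K R_K^v$, and the normalization step with $g\equiv 1$. The only cosmetic difference is direction: the paper starts from $\E_{\P^v}$ and inserts an average over $j\in K$ to reach $\E_{\Q^v}$, whereas you compute $\E_{\Q^v}\{g(\bm{X}_M)\I(\bm{X}_M\in R_K^v)\}$ first and read off the factor $|K|$ directly, which is arguably a shade cleaner.
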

\begin{proof}
We have 
\begin{align}
 \E_{\P^v}\{g(\bm{X}_M)\} = \frac{\E\{g(\bm{X}_M)\I(\bm{X}_M\in\Rm)\}}{\E\{\I(\bm{X}_M\in\Rm)\}} = \frac{\sum_K \E\{g(\bm{X}_M)\I(\bm{X}_M \in R_K^v)\}}{\P(\max_{i \in M}{X_i}>u)}, \label{eq:pufull}
\end{align}
whilst for any $j\in K$ we also have
\begin{align*}
 \E\{g(\bm{X}_M)\I(\bm{X}_M \in R_K^v)\} = \frac{\E\{g(\bm{X}_M)\I(\bm{X}_M \in R_K^v)\I(X_j>v)\}}{\E\{\I(X_j>v)\}}\E\{\I(X_j>v)\}.
\end{align*}
Averaging over each $j \in K$ we get
\begin{align}
  \E\{g(\bm{X}_M)\I(\bm{X}_M \in R_K^v)\} &= \frac{1}{|K|}\sum_{j \in K}\frac{\E\{g(\bm{X}_M)\I(\bm{X}_M \in R_K^v)\I(X_j>v)\}}{\E\{\I(X_j>v)\}}\E\{\I(X_j>v)\} \notag\\
  &= \frac{1}{|K|}\sum_{j \in M}\frac{\E\{g(\bm{X}_M)\I(\bm{X}_M \in R_K^v)\I(X_j>v)\}}{\E\{\I(X_j>v)\}}\pi_j \sum_{i\in M}\P(X_i>v), \label{eq:pubit}
\end{align}
the second line following since for $j \in M\setminus K$, $\E\{g(\bm{X}_M)\I(\bm{X}_M \in R_K^v)\I(X_j>v)\} = 0$, and using the definition of $\pi_j$. Putting~\eqref{eq:pufull} and~\eqref{eq:pubit} together
\begin{align*}
  \E_{\P^v}\{g(\bm{X}_M)\} &= \frac{\sum_{i\in M}\P(X_i>v)}{\P(\max_{i \in M}{X_i}>v)} \sum_K \frac{1}{|K|} \sum_{j \in M} \pi_j \frac{\E\{g(\bm{X}_M)\I(\bm{X}_M \in R_K^v)\I(X_j>v)\}}{\E\{\I(X_j>v)\}}\\
  &= \frac{\sum_{i\in M}\P(X_i>v)}{\P(\max_{i \in M}{X_i}>v)} \sum_K \E_{\Q^v}\{g(\bm{X}_M)\I(\bm{X}_M \in R_K^v)/|K|\},
\end{align*}
which gives the first equality in~\eqref{eq:puexp}. For the second equality, note that 
\begin{align*}
 1 = \E_{\P^v}\{\I(\bm{X}_M \in \Rm)\} = \frac{\sum_{i\in M}\P(X_i>v)}{\P(\max_{i \in M}{X_i}>v)}\sum_K \E_{\Q^v}\{\I(\bm{X}_M \in \Rm) \I(\bm{X}_M \in R_K^v)/|K|\},
\end{align*}
and dividing through by this gives the result.
\end{proof}

We note that there is nothing specific to our general assumptions or extreme-value theory in this proposition: it holds for any random vector and any $v$ whether $v$ is extreme or not. For our purposes, $v$ is a high marginal threshold and the distribution $\Q^v$ is obtained through asymptotic theory via $\Q_j^v$, which can be simulated from as outlined in Section~\ref{sec:SimCond1site}. 

Thanks to Proposition~\ref{prop:importance}, we can estimate $\E\{g(\bm{X}_M)|\max_{i \in M}{X_i}>v\} = \E_{\P^v}\{g(\bm{X}_M)\}$ as follows:

\begin{alg}
 \label{alg:importance}
 ~~
 \begin{enumerate}
  \item With probability $\pi_j=1/m$, draw $\bm{X}_M$ from $\Q^v_j$, $j \in M$
  \item Repeat step 1 $n$ times to get $n$ draws $\bm{X}_M^{1},\ldots, \bm{X}_M^{n}$ from $\Q^v$
  \item Estimate the expectation $\E_{\P^v}\{g(\bm{X}_M)\}$ by
  \begin{align}
 \frac{\sum_{i=1}^n g(\bm{X}_M^i)/|\{j \in M :X^i_j>v\}|}{\sum_{i=1}^n 1/|\{j \in M :X^i_j>v\}|},\qquad \bm{X}_M^i \sim \Q^v. \label{eq:importance}
\end{align}
  
 \end{enumerate}
\end{alg}
\noindent
We note that the natural estimate of $\E_{\P^v}\{g(\bm{X}_M)\}$ from equation~\eqref{eq:puexp} would look like
  \begin{align*}
 \frac{\sum_K\sum_{i=1}^n g(\bm{X}_M^i)\I(\bm{X}_M^i \in R_K)/|K|}{\sum_K \sum_{i=1}^n \I(\bm{X}_M^i \in R_K)/|K|},\qquad \bm{X}_M^i \sim \Q^v,
\end{align*}
 but since it is only the size of $K$ that matters we do not need to sum over all possible sets $K$, but simply divide by the number of variables exceeding the threshold. Furthermore, if it is desired to have realizations from the distribution conditioning upon the maximum being large, i.e., of $\bm{X}_M|\max_{i \in M} X_i>u$, this can be achieved approximately by using the importance weights in expression~\eqref{eq:importance}. That is, we sub-sample $n'<n$ realizations from the collection $\{\bm{X}_M^i\}_{i=1}^n$ with probabilities proportional to $\{1/|\{j:X^i_j>v\}|\}_{i=1}^n$. 

\subsubsection{Unconditional probabilities}
\label{sec:unconditional}

Conditioning upon the process $X$ being extreme anywhere in the domain creates a desirable definition of an extreme event that does not require reference to an arbitrary reference location $s_0$. However, should we wish to undo the conditioning, we require estimation of the probability $\P(\max_{i \in M}X(s_i)>v)$, in place of the simpler marginal probability $\P(X(s_0)>v)$, which has a known form after marginal transformation. Nonetheless, Algorithm~\ref{alg:importance} can be exploited to estimate $\P(\max_{i \in M}X(s_i)>v)$, by noting that for any $j \in M$
\begin{align*}
 \P\left(\max_{i \in M}X(s_i)>v\right) & = \frac{\P(X(s_j)>v)}{\P\left(X(s_j)>v | \max_{i \in M}X(s_i)>v\right)},
\end{align*}
for which the denominator can be estimated by Algorithm~\ref{alg:importance} by taking $g(\bm{X}_M) = \I(X_j>v)$.
 
\subsection{Conditional infill simulation of an existing event}
\newcommand{\obs}{\rm{obs}}

Assume that we have observations of the process $X$ at a collection of sites indexed by $D' \subseteq D$ with $|D'|=d'$. We may wish to simulate $X$, conditionally upon the values of the observed process when extreme for at least one site, at an alternative collection of sites $\{s_{k_1},\ldots,s_{k_l}\}$, where we let $L=\{k_1,\ldots,k_l\}$ with $L \cap D' = \emptyset$. If there are missing data, or for checking model fit, then a natural candidate is to take $L= D \setminus D'$, which leads to simulation at all observation locations, conditional upon a subset of them. In contrast to purely Gaussian models, conditional simulation is more challenging from models tailored to spatial extremes, although it is possible for models based on elliptical processes such as those described in \citet{Huseretal17}. An algorithm was proposed for max-stable processes by \citet{Dombryetal13}.

Consider a realization $X_i$, with $X_i(s_j)>u$ for all $j\in J_i\subseteq D$, and  $X_i(s_j)<u$ for all $j\not\in J_i$. For each $j\in J_i$, we have 
\begin{align*}
 Z^j_i(s) = \frac{X_i(s)-a_{s-s_j}(X_i(s_j))}{b_{s-s_j}(X_i(s_j))},
\end{align*}
which, following the discussion in Section~\ref{sec:processZ}, is modelled as a (marginally-transformed) Gaussian process. Simulation of $\{Z^j(s_k), k \in L \} | \{Z^j(s_i), i \in D'\}$ can thus be achieved exploiting conditional simulation from a Gaussian process. Simplifying notation slightly, let $(\bm{Z}_L,\bm{Z}_{D'})$ represent an $(l+d')$-dimensional random vector from the multivariate Gaussian with mean $\bm{\mu}$ and covariance matrix $\Sigma$, and transformed to have delta-Laplace margins with mean vector $\bm{\mu}^\delta$ and scale parameter vector $\bm{\sigma}^\delta$. Partition $\bm{\mu}=(\bm{\mu}_L,\bm{\mu}_{D'})$, with $\bm{\mu}_L \in\mathbb{R}^l$, $\bm{\mu}_{D'} \in\mathbb{R}^{d'}$, and similarly for other vectors, whilst $\Sigma_{L D'} \in \mathbb{R}^{l\times d'}$ etc., represent the partitioned $\Sigma$. Let $F_\delta, \Phi, f_\delta$ and $\phi$ represent the univariate distribution functions and densities, respectively, of the delta-Laplace and Gaussian distributions, where $F_\delta(\bm{z}_L;\bm{\mu}^{\delta}_L,\bm{\sigma}^\delta_L) = (F_\delta(z_{L,k_1};\mu^{\delta}_{L,k_1},\sigma^\delta_{L,k_1}),\ldots,F_\delta(z_{L,k_l};\mu^{\delta}_{L,k_l},\sigma^\delta_{L,k_l}))^\top$ etc., and denote by $\phi_l(\cdot;\bm{m},\Omega)$ the $l$-dimensional multivariate Gaussian density with mean $\bm{m}$ and covariance $\Omega$.

The conditional density of $(\bm{Z}_L|\bm{Z}_{D'}=\bm{z}_{D'})$, denoted $f_{\bm{Z}_L|\bm{Z}_{D'}}$, can be expressed
\begin{align}
 f_{\bm{Z}_L|\bm{Z}_{D'}}(\bm{z}_L|\bm{z}_{D'}) = \phi_{l} \left[\Phi^{-1}\left\{F_\delta(\bm{z}_L;\bm{\mu}^{\delta}_L,\bm{\sigma}^\delta_L); \bm{\mu}_L,\bm{\sigma}_L\right\}; \bm{\mu}_{L|D'}, \Sigma_{L|D'}\right] \prod_{k \in L}\frac{f_{\delta}(z_{L,k}; \mu^{\delta}_{k}, \sigma^{\delta}_{k})}{\phi(\Phi^{-1}(F_\delta(z_{L,k};\mu^{\delta}_{L,k},\sigma^\delta_{L,k}); \mu_{L,k},\sigma_{L,k}); \mu_{L,k},\sigma_{L,k})}, \label{eq:Zcond}
\end{align}
with
\begin{align*}
 \bm{\mu}_{L|D'} &= \bm{\mu}_L - \Sigma_{LD'}\Sigma_{D'D'}^{-1}(\bm{\mu}_{D'}-\Phi^{-1}\left\{F_\delta(\bm{z}_{D'};\bm{\mu}^{\delta}_{D'},\bm{\sigma}^\delta_{D'}); \bm{\mu}_{D'},\bm{\sigma}_{D'}\right\})\\
 \Sigma_{L|D'} & = \Sigma_{LL} -\Sigma_{LD'}\Sigma_{D'D'}^{-1}\Sigma_{D'L}.
\end{align*}
To simulate from density~\eqref{eq:Zcond}:
\begin{alg}
\label{alg:condsim}
~~
\begin{enumerate}
 \item Given $\bm{z}_{D'}$, $\bm{\mu}, \Sigma, \bm{\mu}_{\delta}, \bm{\sigma}_\delta$, calculate $\bm{\mu}_{L|D'}$ and $\Sigma_{L|D'}$
 \item Simulate $\bm{Y}_L$ from the $l-$dimensional Gaussian with mean $\bm{\mu}_{L|D'}$ and covariance $\Sigma_{L|D'}$
 \item Apply the transformation $\bm{Z}_L = F_{\delta}^{-1}\{\Phi(\bm{Y}_L; \bm{\mu}_L, \bm{\sigma}_L); \bm{\mu}^{\delta}_L, \bm{\sigma}^{\delta}_L\}$.
\end{enumerate}
 \end{alg}

 Note the final step of Algorithm~\ref{alg:condsim} is not a typical application of the probability integral transform, since the margins of $\bm{Y}_L$ do not have location and scale parameters $\bm{\mu}_L, \bm{\sigma}_L$, but rather $\bm{\mu}_{L|D'}, \bm{\sigma}_{L|D'}$ where the latter is the square root of the diagonal of $\Sigma_{L|D'}$. When $\delta=2$ and $\bm{\sigma}_\delta = \bm{\sigma}\{\Gamma(1/2)/\Gamma(3/2)\}^{1/2} = 2^{1/2}\bm{\sigma}$, then this is just standard conditional simulation from the Gaussian.

Once draws from $\{Z^j(s_{k}), k \in L\} | \{Z^j(s_i), i \in D'\}$ have been made, then the process $X_i(s)|X_i(s_j)>u$ is recovered by setting
\[
 X_i(s_{k})|\{X_i(s_j)>u\} = a_{s_k-s_j}(X_i(s_j)) + b_{s_k-s_j}(X_i(s_j)) Z^j(s_k), \qquad k\in L,
\]
and this can be done for each $j \in J_i$. That is, we are conditioning both on the values at observed sites, and on the process being extreme at a specific site $s_j \in J_i$. 

The conditional simulation using Algorithm~\ref{alg:condsim} is illustrated in Figure~\ref{fig:CS}, using a subset of the Australian temperature data analyzed in Section~\ref{sec:Australia}. For these data, which are gridded and complete, conditional simulation may serve to help understand the fit of the model, but for other datasets it could be useful to deal with missing observations or to simulate at unobserved sites. The intended use of the conditional simulation may dictate whether one of these sites is more interesting or whether inference should be combined across all sites in $J_i$.

\begin{figure}
\centering
 \includegraphics[width=0.4\textwidth]{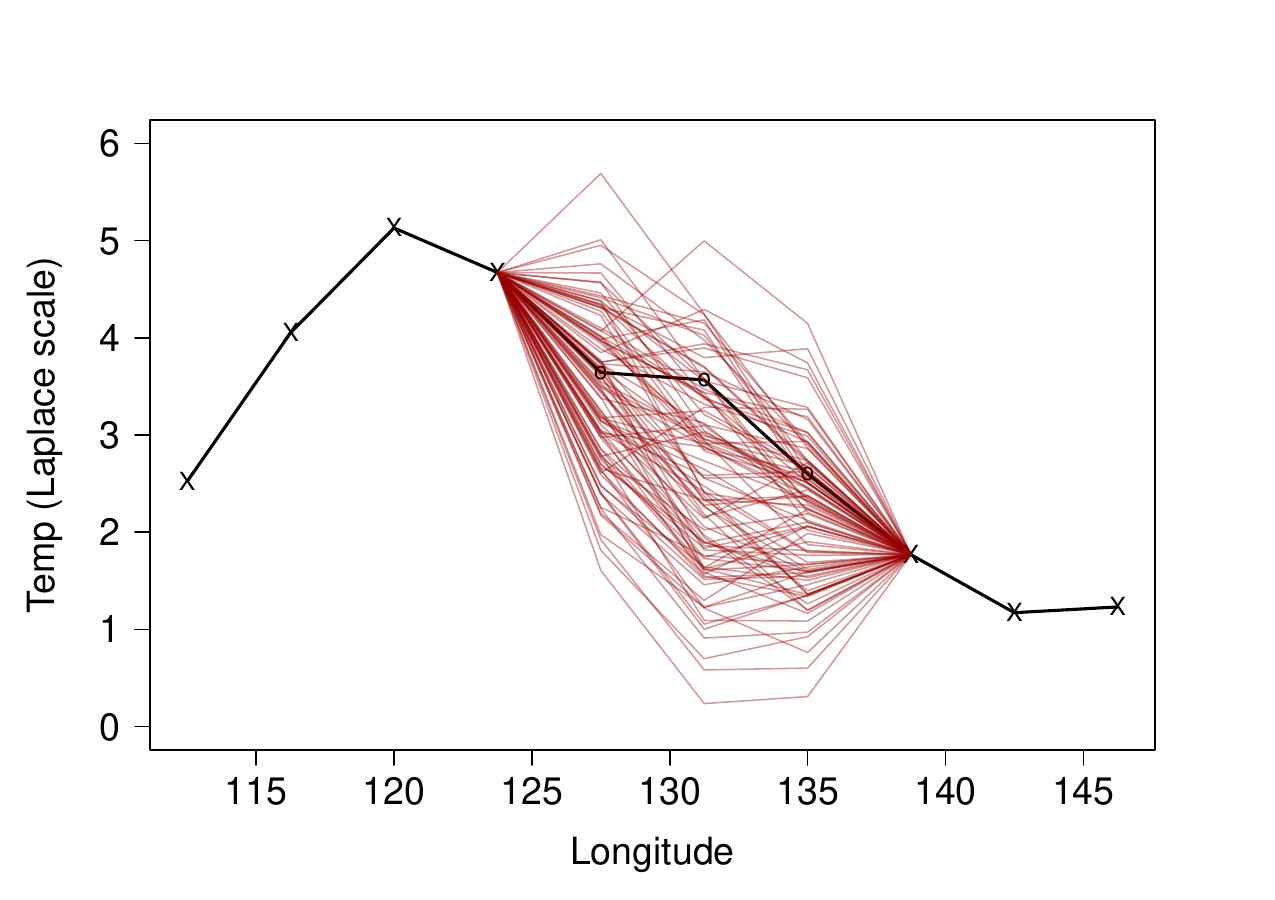}
  \caption{Example of conditional simulation. A realization of $X$ (black solid line) is displayed at 10 sites but the information from three sites (depicted by circles) is treated as missing. The red lines depict conditional simulations based on the other seven values, and being extreme at one of the sites above the modelling threshold. The model is the fitted model from Section~\ref{sec:Australia}, fitted to all observations.}
  \label{fig:CS}
\end{figure}

\section{Australian temperature extremes}
\label{sec:Australia}
\subsection{Data}
\label{sec:Data}
We analyze temperature data for Australia from the HadGHCND global gridded dataset \citep{Caesaretal06}. These are observed data translated onto a relatively coarse $2.5^\circ \times 3.75^\circ$ grid, with 72 points covering Australia; see the top-left panel of Figure~\ref{fig:deformation}. A complete record of daily maximum temperatures is available from 1/1/1957 - 31/12/2014, and to account for the seasonality of temperature data we focus on the summer temperatures recorded in December, January and February. This yields a total of 5234 days of observations at the 72 locations. The same data, recorded until the end of 2011, were analyzed in \citet{Winteretal16} using non-spatial models, where the focus lay in investigating the effects of the El Ni\~no Southern Oscillation (ENSO).

We transform the marginal distributions to unit Laplace, using the empirical distribution function at each site. In common with most works on extreme value dependence modelling, we do not attempt to explicitly account for the small amount of uncertainty introduced through use of estimated distribution functions. The choice of the double exponential-tailed Laplace distribution is motivated by the fact that we may anticipate independence at long range, and by allowing the margins of $Z^0$ to follow distribution~\eqref{eq:deltaLaplace}, this feature can be captured. An alternative to empirical transformation is to use a semi-parametric marginal transformation, modelling the upper tail with a univariate generalized Pareto distribution. In high-dimensional datasets this becomes more laborious as one needs to select a threshold and check the model at each site; failing to do so can lead to poor marginal fits that may impact upon dependence structure estimation. On the other hand, the semi-parametric transformation is more useful for back-transforming simulations to the original scale in the extremes.

\subsection{Exploratory analysis}
\label{sec:Exploratory}
Exploratory analysis indicates anisotropy and some spatial non-stationarity. This latter issue has received relatively little attention in an extreme value context, although \citet{HuserGenton16} consider non-stationary max-stable processes when covariate information is available. In more classical spatial modelling, one approach to dealing with non-stationarity is to deform the coordinate system, as suggested by \citet{SampsonGuttorp92}. The essential idea of this technique is to transform from one coordinate system (the ``G-plane'') in which non-stationarity is detected, to another configuration (the ``D-plane''), where stationarity might be more reasonably assumed. Such deformations have the advantage of being based on the data only, without need for covariates. We choose here to adopt such an approach, using the methodology of \citet{Smith96}.  Denote by $\tau:\mathbb{R}^2 \to \mathbb{R}^2$ the mapping from the G-plane to D-plane, with $s_j=(s_{j,1},s_{j,2})^\top \in \mathbb{R}^2$ and $\tau(s_j)=(\tau(s_j)_1,\tau(s_j)_2)^\top \in\mathbb{R}^2$. The technique of \citet{Smith96} is designed for estimation on all data and is driven by the covariance matrix: the task is to find a deformation function $\tau$ such that $\mathrm{Cov}\{Y(\tau(s_i)),Y(\tau(s_j))\}$ only depends on $\tau(s_i),\tau(s_j)$ through $\|\tau(s_i)-\tau(s_j)\|$. However, if patterns of non-stationarity are similar in the extremes to the body then this procedure should improve estimation in the extremes nonetheless. A similar argument was made by \citet{EastoeTawn09} in relation to marginal non-stationarity due to covariate effects. Assuming $d$ locations in the G-plane, \citet{Smith96} parameterizes 
\begin{align}
 \tau(s_j) = \left(\kappa^2 s_{j,1} + \psi \kappa\lambda s_{j,2} + \sum_{i=1}^d \omega_{1,i}\xi_i(s_j) , ~~\psi \kappa \lambda s_{j,1} + \lambda^2 s_{j,2} + \sum_{i=1}^d \omega_{2,i}\xi_i(s_j) \right)^\top, \qquad \kappa,\lambda>0, \psi\in\mathbb{R},\label{eq:tau}
\end{align}
with $\xi_i(s_j) = \frac{1}{2}\{(s_{j,1}-s_{i,1})^2 + (s_{j,2} - s_{i,2})^2\}\log\{(s_{j,1}-s_{i,1})^2 + (s_{j,2} - s_{i,2})^2\}$, and $\omega_{r,i}$, $r=1,2$, weights that satisfy $\sum_{i=1}^d \omega_{r,i} = \sum_{i=1}^d \omega_{r,i}s_{1,i}= \sum_{i=1}^d \omega_{r,i} s_{2,i} =0$. In practice, following \citet{Smith96}, we set most $\omega_{r,i}$ to zero, and focus on using a smaller number of geographically dispersed \emph{anchor sites} to estimate the deformation. Figure~\ref{fig:deformation} displays the G-plane and D-plane, highlighting the sites used in the estimation. Also displayed are estimates of $\chi_q(s_i,s_j)$ and $\chi_q(\tau(s_i),\tau(s_j))$ for $q=0.95$ against distance in each coordinate system. The more stationary the data, the less variability one expects to observe in $\chi_q$ for a given spatial lag $\|s_i-s_j\|$ or $\|\tau(s_i)-\tau(s_j)\|$: a modest improvement seems to arise using the D-plane over the G-plane. A number of factors account for this: reduced anisotropy, accounting for differences in distance between $1^\circ$ of latitude and longitude, as well as non-stationarity itself. Slightly different D-planes would be estimated using different anchor sites, and there is no clear way to optimize this aspect given the combinatorial possibilities. We proceed using the D-plane coordinates displayed in Figure~\ref{fig:deformation}. 

\begin{figure}[htb]
\centering
 \includegraphics[width=0.4\textwidth]{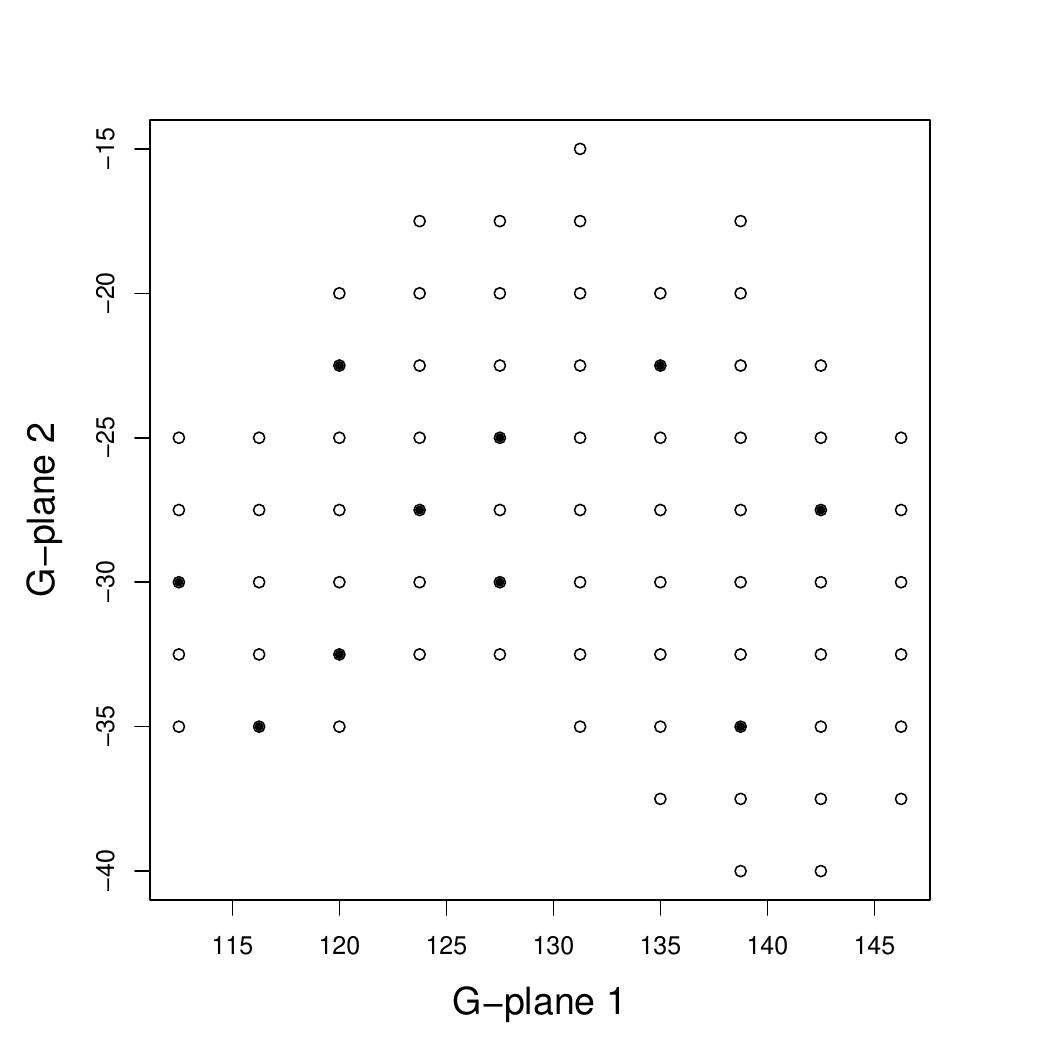}
 \includegraphics[width=0.4\textwidth]{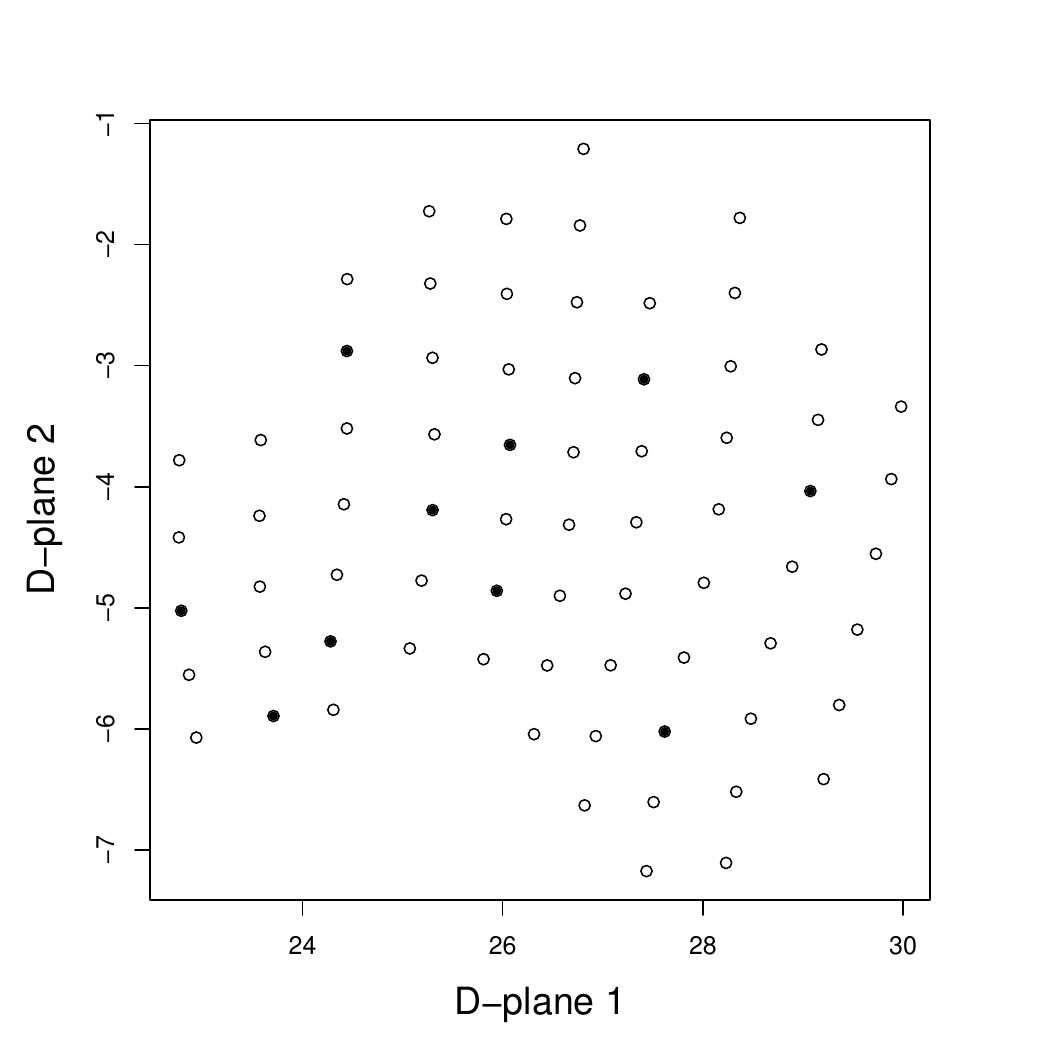}\\
 \includegraphics[width=0.4\textwidth]{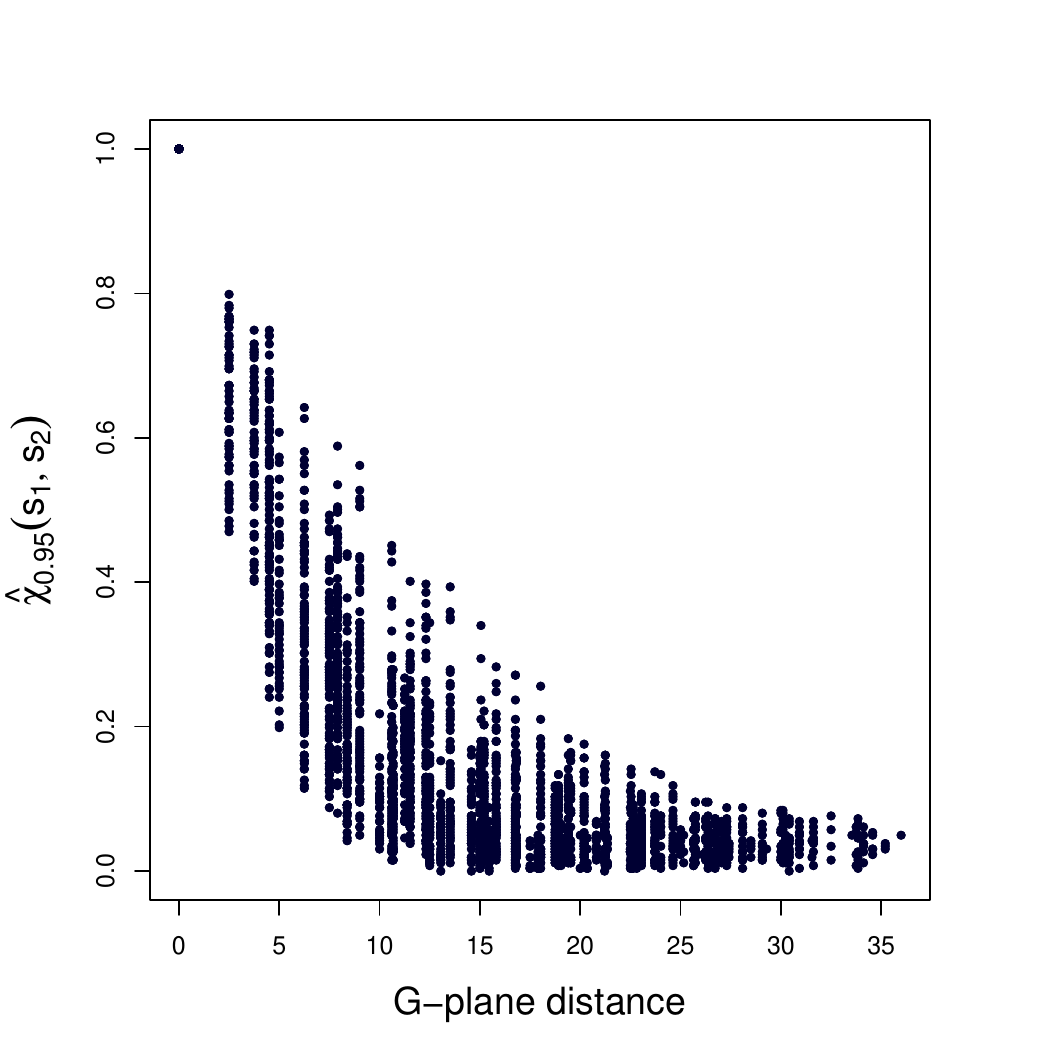}
 \includegraphics[width=0.4\textwidth]{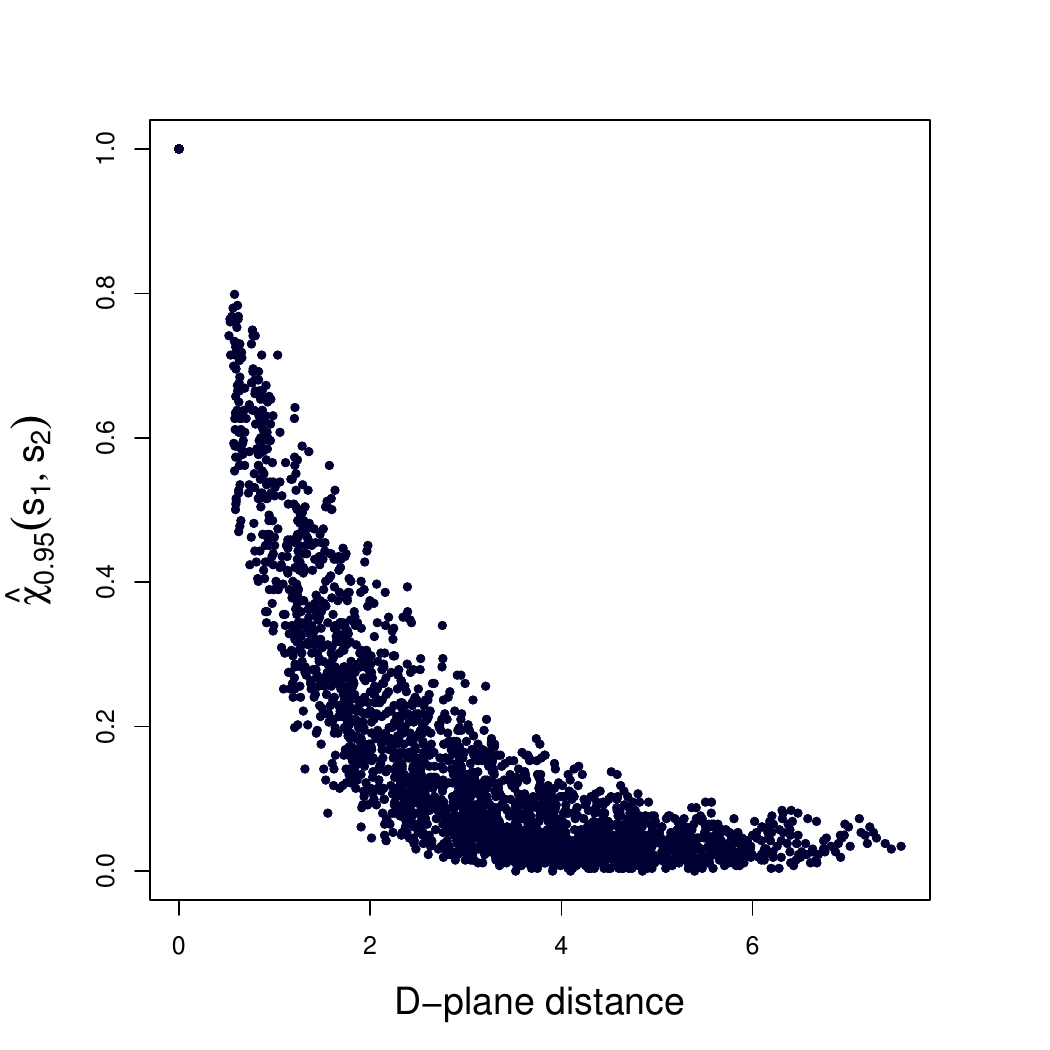}\\
 \caption{Top: G-plane and D-plane, with anchor sites for which $\omega_{r,i} \neq 0$ in~\eqref{eq:tau} highlighted. Bottom: estimates of $\chi_q(s_i,s_j)$ (left) and $\chi_q(\tau(s_i),\tau(s_j))$ (right) for $q=0.95$ plotted against distance in the coordinate system.}
\label{fig:deformation}
\end{figure}

\subsection{Fitted model}
\label{sec:Ausmodelfit}

To select a model, we applied the principles of Section~\ref{sec:modelselection}. Based on likelihood values, Model~3 from Section~\ref{sec:model} with $Z^0$ defined by a Gaussian process $Z_G|Z_G(s_0)=0$, for which $Z_G$ is stationary with powered exponential correlation, marginally transformed to delta-Laplace scale, was deemed to fit the data best. The threshold for observations being taken as extreme at a given site was set at the 97.5\% quantile of the Laplace distribution, which yields an average of 130 exceedances per site. The fit of the model indicates asymptotic independence of the data and independence at long range, which is supported by estimates from a pairwise fit of the model, displayed in Figure~\ref{fig:pairs} of Appendix~\ref{app:supp}. Figure~\ref{fig:pairs} supports parameterizing $\delta$ to decrease from 2 to 1 as distance increases, and so we set $\delta(s-s_0) = 1+\exp\{-(\|s-s_0\|/\delta_1)^{\delta_2}\}$. Parameter estimates are displayed in Table~\ref{tab:param}; Table~\ref{tab:paramdesc} in Appendix~\ref{app:supp} recalls the definition of each parameter for convenience. Estimates of $\beta$ were constrained to be less than one, but are very close. The uncertainty in the parameter estimates is visualized in Figure~\ref{fig:bootstrap} of Appendix~\ref{app:supp}. To produce this figure, we used a stationary bootstrap procedure \citep{PolitisRomano94} to account for the effect of temporal dependence, resampling entire fields with block lengths following a geometric distribution with mean length ten. The maximum composite likelihood estimates from the full dataset have then been subtracted to create a common scale.

\begin{table}[ht]
\centering
\caption{Parameter estimates to two decimal places.}
\begin{tabular}{rrrrrrrrrr}
  \hline
&$\kappa$  & $\lambda$ & $\beta$ & $\phi$ & $\mu$ &$\delta_1$ & $\delta_2$ & $\nu$ & $\sigma$   \\\hline
  \hline
All & 1.82 & 1.33 & 1.00 & 2.01 & $-0.08$ & 1.08 & 1.74 & 1.89 & 0.88 \\ 
$Z$ only &-- & -- & -- &  2.06 & -- & 1.36 & 2.31 & 1.87 & 0.97 \\ \hline
El Ni\~no & 1.76 & 1.30 & 1.00 & 2.07 & $-0.06$ & 0.84 & 1.36 & 1.87 & 0.86 \\ 
  $Z$ only & --&--&-- &2.16 & -- & 1.28 & 2.10 & 1.85 & 0.97\\\hline
La Ni\~na & 1.80 & 1.32 & 1.00 & 2.02 & $-0.03$ & 0.99 & 1.65 & 1.87 & 0.87 \\ 
  $Z$ only & --&--&-- & 2.12 & --& 1.37 & 2.43 & 1.84 & 0.99 \\\hline
La Nada & 1.82 & 1.33 & 1.00 & 1.99 & $-0.10$ & 1.08 & 1.74 & 1.88 & 0.87 \\ 
  $Z$ only & --&--&-- & 2.07 & --& 1.37 & 2.32 & 1.86 & 0.97 \\ 
   \hline
\end{tabular}
\label{tab:param}
\end{table}

The assumed model implies that the collection of residual processes $\{Z^j(s)\}_{j=1}^{72}$ have a distribution defined by
\begin{enumerate}
 \item Dependence: $Z_G|Z_G(s_j)=0$, where $Z_G$ is a stationary Gaussian process with mean $\mu$, and covariance function $\mathrm{Cov}(s,s+h) = \sigma^2 \exp\{-(\|h\|/\phi)^\nu\}$;
 \item Margins: delta-Laplace distribution with shape $\delta(s-s_0)$, and location and scale parameter to match that of the conditional field $Z_G|Z_G(s_j)=0$.
\end{enumerate}
As mentioned in Section~\ref{sec:modelselection}, one test of model fit therefore is to examine how closely the residual processes obtained by
\[
 Z^j(s) = \{X(s) - \hat{\alpha}(s-s_j)X(s_j)\}/[1+\{\hat{\alpha}(s-s_j)X(s_j)\}^{\hat{\beta}}]
\]
follow this structure. Upon examination of the empirical $Z^j$, we find a lack of fit, which translates into inadequate reproduction of extreme events. This emanates from an incorrect mean structure. For the possibilities outlined in Section~\ref{sec:processZ}, the mean of $Z_G|Z_G(s_j)=0$ is either increasing or decreasing from zero, or identically zero, as $\|s-s_j\|$ increases. However, in practice here the mean increases then decreases again towards zero; see Figure~\ref{fig:pairs}. This is indicative of the long-range independence in this dataset, since if $X(s_k)$ is independent of $X(s_j)$, with $\alpha(s_k-s_j) \approx 0$, then $Z^j(s_k) = X(s_k)$, which has delta-Laplace margins with $\delta=1,\sigma=1,\mu=0$. One remedy for this would be to attempt to parameterize the non-monotonic form. A further possibility in the current context of gridded data, is to extract the fitted $Z^j$ and re-fit the model for these residual processes using the empirical means $(1/n_j)\sum_{i=1}^{n_j} Z^j_i(s_k)$, $k\in\{1,\ldots,72\}\setminus j$. A disadvantage of this approach is that it would not allow simulation at a new location without placing further spatial structure on the means. Where this is not an issue, as here, an advantage is that it can help alleviate symptoms of non-stationarity not accounted for by the deformation, and we thus adopt this approach. Table~\ref{tab:param} also displays the parameter estimates for $Z^j$ where the Gaussian process model with delta-Laplace margins has been re-fitted, using the empirical means. The estimates of $\delta_1, \delta_2$ and $\sigma$ have the largest change, with the effect illustrated in Figure~\ref{fig:pairs}. Notably $\widehat{\sigma}\approx 1$, ensuring the correct margins for $Z^0(s)$ for large $\|s-s_0\|$. 

As a check on the modelling assumptions, we investigate the independence of $X(s_0)|X(s_0)>u$ and $Z^0$. Figure~\ref{fig:KT} in Appendix~\ref{app:supp} displays Kendall's $\tau$ coefficients between $X(s_j)|X(s_j)>u$ and the mean and variance of the corresponding $Z^j(s)$ for all conditioning sites $s_j$. Based on these diagnostics, independence of $X(s_0)|X(s_0)>u$ and $Z^0$ might be an adequate, but certainly not perfect, assumption. Nonetheless, parameter estimates do not change much when the model is fitted at a higher threshold. Furthermore, other diagnostics such as plots of data simulated from the model versus the observed data, such as the examples shown in Figure~\ref{fig:CompareFitPairs}, suggest a reasonable model fit. 

Using the parameter estimates from the re-fitted $Z^j$, and the original estimates for parameters not relating to $Z^j$, we employ the importance sampling techniques of Section~\ref{sec:importance} to estimate the expected number of grid locations exceeding a certain quantile, given that at least one location exceeds that quantile. Figure~\ref{fig:ExpectedExceedances} displays results for quantiles ranging from 97.5\% - 99.9999\%, with empirical values where available; the in-sample agreement appears reasonable, although there is some question about whether the expected number decreases rapidly enough. Such a summary could not easily be calculated using the pairwise methods in \citet{Winteretal16}. To do so, one would have to fit $72\times71=5112$ pairwise models, calculate the $Z^j$ by concatenating empirical residuals, and then using these with the 10,224 estimated parameters to implement the rejection scheme described in Section~\ref{sec:rejection}. Furthermore, the use of only empirical residuals restricts the shape of new events: where sites are effectively independent from the conditioning site, i.e., $a_{s-s_0}(x) \approx 0$, simulated new events will look just like past events in these areas. The figure also displays the estimate from the Brown--Resnick Pareto process, fitted to all processes for which the value $\max_{1\leq j \leq 72}X(s_j)$ exceeds its $97.5\%$ quantile. Whilst this provides a reasonable estimate close to the fitting threshold, it is clearly inappropriate for extrapolation further into the tail.

\subsection{Investigation of El Ni\~no effect}
\label{sec:elnino}

In Australia, the value of ENSO is known to impact the marginal distribution of extreme temperatures at specific locations \citep{Winteretal16}. To investigate the possible effect of ENSO on the spatial extent of high temperature events, separate models were fitted to data from El Ni\~no, La Ni\~na and ``La Nada'' seasons. El Ni\~no (respectively La Ni\~na) events are defined here as those for which the sea surface temperature (SST) anomaly from 1980-2010 mean levels is higher than $+1^\circ$ (respectively lower than $-1^\circ$); La Nada events are the remaining ones. The anomalies were taken from \texttt{https://www.esrl.noaa.gov/psd/gcos\_wgsp/Timeseries/Data/nino34.long.anom.data}. The covariate was defined by averaging monthly SST anomalies over the summer, so that each summer season is either El Ni\~no, La Ni\~na or La Nada. Marginal transformations were made separately for the three categories, whilst dependence parameter estimates are given in Table~\ref{tab:param}. There is some modest deviation from the combined parameter estimates particularly for the El Ni\~no years where $\delta$ decays more rapidly, indicating slightly increased variability in these years. Figure~\ref{fig:ExpectedExceedances} displays estimates of the expected number of exceedances from the models fit to the different ENSO regimes: La Ni\~na years are slightly higher and El Ni\~no slightly lower, but differences do not appear very large. However, differences in marginal quantiles make the practical interpretation less straightforward. In the South-East of Australia particularly, the high quantiles represent hotter temperatures under El Ni\~no than La Ni\~na, and a temperature that represents an exceedance of a marginal 99.5\% quantile in La Ni\~na conditions --- which from Figure~\ref{fig:ExpectedExceedances} occur in around 3.5 grid squares on average --- might be closer to a 97.5\% marginal quantile, or even lower, in El Ni\~no conditions, which Figure~\ref{fig:ExpectedExceedances} suggests might affect 5 or more grid squares on average. The ENSO regime may also affect the temporal persistence of extreme heat events, but this is not captured by our purely spatial models.

\begin{figure}
\centering
 \includegraphics[width=0.45\textwidth]{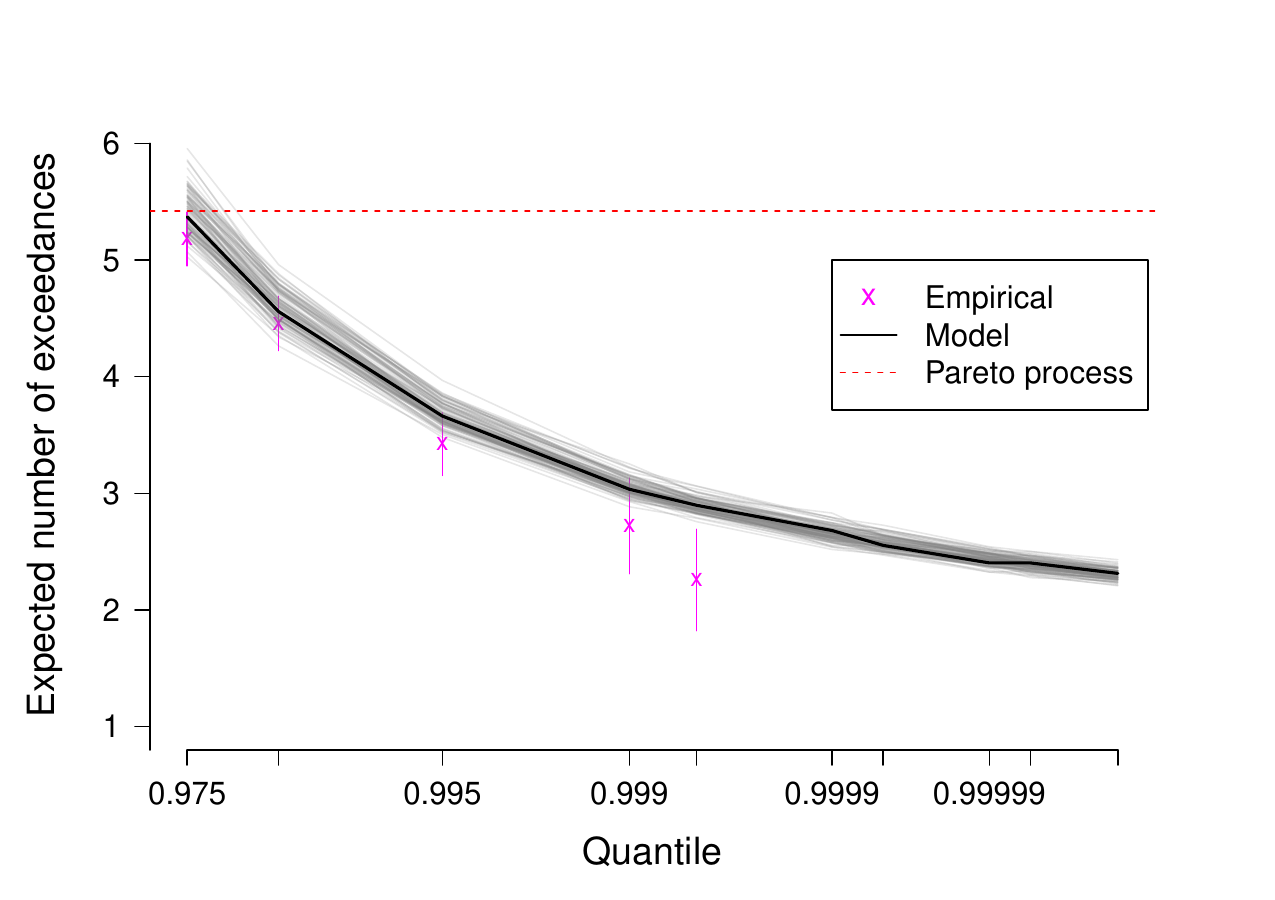}
 \includegraphics[width=0.45\textwidth]{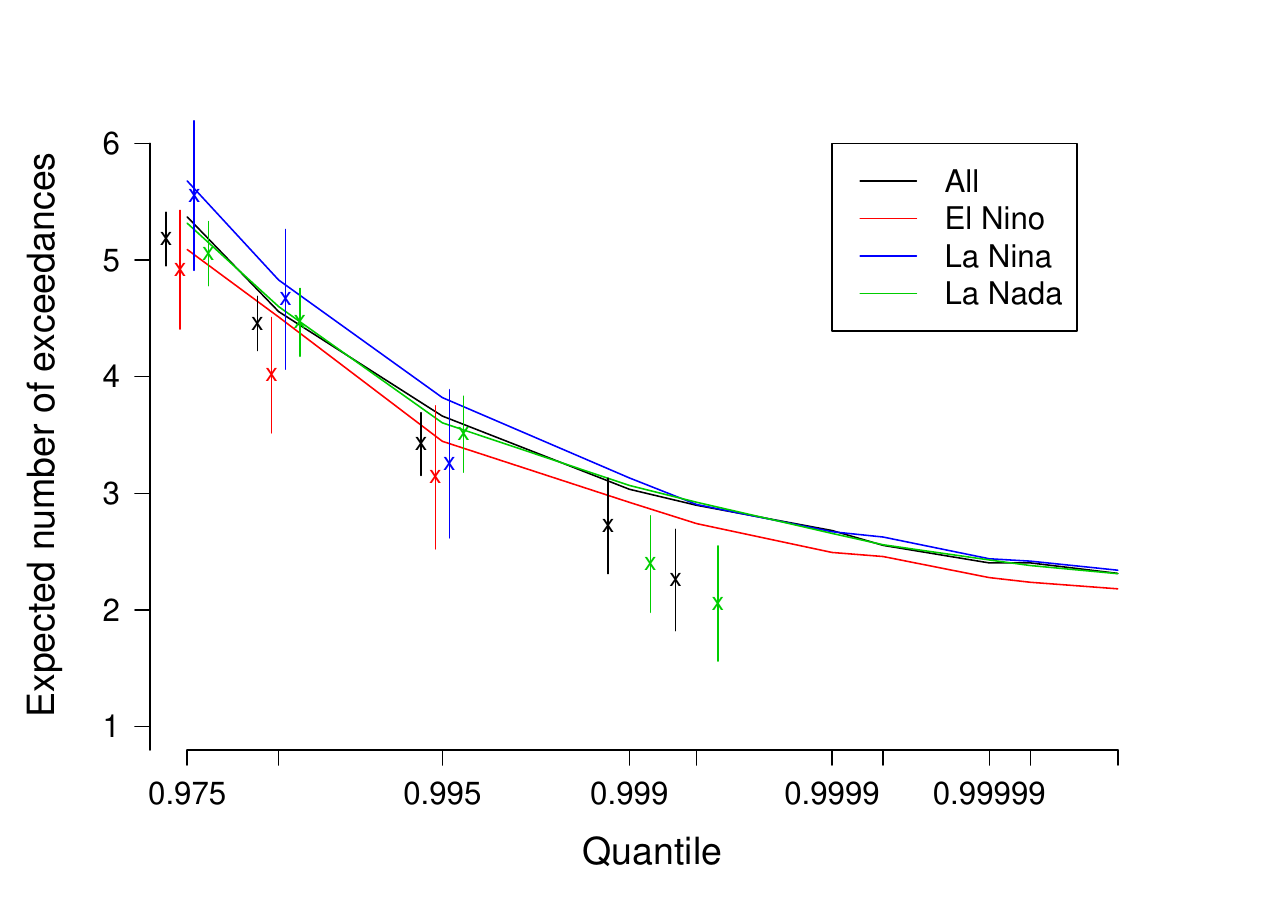}
 \caption{Expected number of exceedances of a particular marginal quantile given that there is at least one exceedance of that quantile somewhere over Australia. Points marked $\times$ are this quantity estimated directly from the data, with bars giving approximate 95\% confidence intervals; there are no data to estimate directly above the 99.95\% quantile. Left: thick solid line represents model-based estimation, with estimates from bootstrap samples in grey. Right: different colours denote estimates from the models fitted to all data, as well as that stratified by ENSO covariate.}
\label{fig:ExpectedExceedances}
\end{figure}

\section{Discussion}
The conditional approach to spatial extreme value analysis offers a number of advantages: flexible, asymptotically-motivated dependence structures that can capture asymptotic (in)dependence; the models can be fit in reasonably high dimensions; and conditional simulation at unobserved locations is simple. The principal drawback of the approach lies in the more complicated interpretation of what constitutes ``the model'' for a given dataset. When conditioning only on a single site being extreme, interpretation is straightforward. When wishing to condition on any of a set of sites being extreme, we need to combine these individual models in a way that entails no clearly defined overall model for the process given that it is extreme somewhere over space. However, for the purpose of \emph{inference} on many quantities of interest, the importance sampling techniques ensure that this is not an issue. 

Owing to computational limitations, there have been relatively few attempts at high-dimensional inference for extremes. In general, higher-dimensional problems do not only present computational issues, but they are often accompanied by additional modelling challenges. The spatial domain of interest is likely to be larger, leading to potentially more diverse behaviour in the dependence. In the Australian temperature example, we observed non-stationarity, and independence at longer range. These issues are less likely to arise when focusing on smaller areas, and these complexities should be kept in mind as spatial extremes moves into a higher-dimensional phase.

A natural next step is to extend this approach in to a multivariate or space-time setting. For example, convergence~\eqref{eq:condlimit} can be generalized by replacing all locations $s\in\mathcal{S}$ by $(s,t) \in \mathcal{S}\times\mathcal{T} \subset \mathbb{R}^2 \times \mathbb{R}$, with the conditioning location $(s_0,t_0)$. For modelling purposes, there are additional considerations of the dependence regime in both space and time, and how these interact. However, in principle one can have asymptotic (in)dependence in both space and time, with potentially different behaviour in the different dimensions. This spatio-temporal setting is the focus of recent work in \citet{SimpsonWadsworth21}.

Further extensions to this approach allow for different forms of the functions $a_{s-s_0}, b_{s-s_0}$ or residual process $Z^0$. Building on an early draft of this paper \citet{Shooteretal19,Shooteretal21} have proposed alternative parametric models for the normalization functions, while \citet{Simpsonetal20} use a spline for the $a_{s-s_0}$ term, avoiding an imposed parametric form. The latter also explore the use of Gaussian Markov random field structures for $Z^0$ in order to allow for a substantially larger number of observation locations, through the machinery of the integrated nested Laplace approximation (INLA).

\subsection*{Acknowledgements}
We thank Simon Brown at the UK Met Office for the data analyzed in Section~\ref{sec:Australia}. J. Wadsworth gratefully acknowledges funding from EPSRC grant EP/P002838/1.

\subsection*{Data and code}
The data and code for the analysis of Section~\ref{sec:Australia} are available as Supplementary Material. The most recent data can be obtained from~\texttt{https://www.metoffice.gov.uk/hadobs/hadghcnd/}, subject to the conditions detailed at the URL.

\appendix
\section{Additional results and proofs}
\label{app:proof}

Before the proof of Proposition~\ref{prop:sameab}, Lemma~\ref{lem:convflexibility} specifies conditions under which there is some flexibility in the normalization leading to convergence.

\begin{lemma}
\label{lem:convflexibility}
Suppose that for $\ba(v)$ and $\bb(v)$ with twice-differentiable components $a_l, b_l$ satisfying $a_l'(v) \sim \alpha_l$, $a''_l(v) = o(1)$, $b_l'(v)/b_l(v) = o(1)$, as $v\to\infty$, for $l=1,\ldots, d$,
 \begin{align*}
  \P\left(\frac{\bm{V}-\ba(V_0)}{\bb(V_0)} \leq \bm{z}~\Big|~V_0 = v\right) = \P\left(\frac{\bm{V}-\ba(v)}{\bb(v)} \leq \bm{z}~\Big|~V_0 = v\right) \to G(\bm{z}),\qquad v\to\infty, 
 \end{align*}
 and that all first and second order partial derivatives converge, i.e.
  \begin{align*}
  \frac{\partial}{\partial z_k}\P\left(\frac{\bm{V}-\ba(v)}{\bb(v)} \leq \bm{z}~\Big|~V_0 = v\right) &\to  \frac{\partial}{\partial z_k}G(\bm{z}),\\
  \frac{\partial^2}{\partial z_k \partial z_l}\P\left(\frac{\bm{V}-\ba(v)}{\bb(v)} \leq \bm{z}~\Big|~V_0 = v\right) &\to  \frac{\partial^2}{\partial z_k \partial z_l}G(\bm{z}),\qquad v\to\infty. 
 \end{align*}
 Then
 \begin{enumerate}[(i)]
  \item for $\bm{h}^1(v) = (h_1^1(v),\ldots,h^1_d(v))^\top$, $\bm{h}^2(v) = (h_1^2(v),\ldots,h_d^2(v))^\top$, with $h_l^1(v)=o(1)$, $h_l^2(v)=o(1)$, for all $l=1,\ldots,d$,
  \begin{align*}
  \P\left(\frac{\bm{V}-\ba(v)}{\bb(v)[1+\bm{h}^1(v)]} + \bm{h}^2(v) \leq \bm{z}~\Big|~V_0 = v\right) \to G(\bm{z}),\qquad v\to\infty.
 \end{align*}
 \item for $\bm{h}^1$ as above but $h^2_l(v) = O(1)$, i.e., some components may be asymptotically non-zero constants, whilst some may converge to zero, 
  \begin{align*}
  \P\left(\frac{\bm{V}-\ba(v)}{\bb(v)[1+\bm{h}^1(v)]} + \bm{h}^2(v) \leq \bm{z}~\Big|~V_0 = v\right) \to G(\bm{z} - \lim_{v\to\infty}\bm{h}^2(v)),\qquad v\to\infty.
 \end{align*}
 \end{enumerate}
 
\end{lemma}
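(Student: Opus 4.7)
The first step is to rewrite the perturbed event in terms of the unperturbed normalization. Since the perturbation acts componentwise, and $b_l(v)[1+h_l^1(v)]>0$ for $v$ large, the event $\{(\bm V-\ba(v))/(\bb(v)[1+\bm h^1(v)]) + \bm h^2(v) \leq \bm z\}$ is equivalent to $\{(\bm V-\ba(v))/\bb(v) \leq \bm w(v)\}$, where $w_l(v) = (z_l - h_l^2(v))(1+h_l^1(v))$. Letting $F_v(\bm z) = \P\{(\bm V-\ba(v))/\bb(v)\leq\bm z\mid V_0=v\}$, the perturbed probability equals $F_v(\bm w(v))$. In part (i) we have $\bm w(v)\to\bm z$; in part (ii), $\bm w(v)\to \bm z-\bm h^2_\infty$, where $\bm h^2_\infty = \lim_{v\to\infty}\bm h^2(v)$. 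Set $\bm z_\infty$ to be this limit in each case; the target is $F_v(\bm w(v))\to G(\bm z_\infty)$.

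Next, decompose
\begin{align*}
F_v(\bm w(v)) - G(\bm z_\infty) = \bigl[F_v(\bm w(v)) - F_v(\bm z_\infty)\bigr] + \bigl[F_v(\bm z_\infty) - G(\bm z_\infty)\bigr].
\end{align*}
The second bracket vanishes directly from the hypothesized pointwise convergence $F_v\to G$ at $\bm z_\infty$. For the first bracket, I would apply the multivariate mean value theorem to write
\begin{align*}
F_v(\bm w(v)) - F_v(\bm z_\infty) = \sum_{l=1}^d \bigl(w_l(v) - z_{\infty,l}\bigr)\int_0^1 \partial_l F_v\bigl(\bm z_\infty + t(\bm w(v)-\bm z_\infty)\bigr)\,\mathrm{d}t,
\end{align*}
and it then suffices to show that each $\partial_l F_v$ is bounded, uniformly in $v$, on a fixed neighbourhood of $\bm z_\infty$, since $w_l(v)-z_{\infty,l}\to0$.

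The main obstacle is precisely this uniform local boundedness, and it is the role played by the second-derivative hypothesis. Pointwise convergence alone of $\partial_l F_v\to \partial_l G$ does not immediately furnish uniform bounds, but convergence of the second partials $\partial_{kl} F_v \to \partial_{kl}G$ does: on any compact neighbourhood $N$ of $\bm z_\infty$, the continuous limits $\partial_{kl}G$ are bounded, so for large $v$ the $\partial_{kl}F_v$ are uniformly bounded on $N$, giving a uniform Lipschitz constant for the first partials $\partial_l F_v$. Combined with pointwise convergence of $\partial_l F_v$ to the continuous function $\partial_l G$, the equicontinuous-convergence (Arzel\`a--Ascoli) argument upgrades this to locally uniform convergence of $\partial_l F_v$ on $N$, hence to a uniform bound. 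Plugging this bound into the integral above yields $F_v(\bm w(v))-F_v(\bm z_\infty) = O(\|\bm w(v)-\bm z_\infty\|) = o(1)$, and the lemma follows in both (i) ($\bm z_\infty = \bm z$) and (ii) ($\bm z_\infty = \bm z - \bm h^2_\infty$).
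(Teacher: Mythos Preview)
Your approach is essentially the paper's: recast the perturbed event as $\{(\bm V-\ba(v))/\bb(v)\le \bm w(v)\}$ with $\bm w(v)\to\bm z_\infty$, then control $F_v(\bm w(v))-F_v(\bm z_\infty)$ by a first-order expansion with the second-derivative hypothesis handling the error. The only cosmetic difference is that the paper Taylor-expands $F_{V|V_0}(\cdot\mid v)$ in the unnormalized scale about the base point $\bb(v)\bm z_\infty+\ba(v)$, so the pointwise-convergence hypothesis applies directly to the first-order coefficient and the second-derivative assumption enters only through the $O(\|\bm w(v)-\bm z_\infty\|^2)$ remainder, whereas your mean-value formulation evaluates $\partial_l F_v$ at intermediate points and therefore needs the extra equicontinuity step; both arguments are equally informal about inferring local uniform control on the second partials from the assumed pointwise convergence.
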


\begin{proof}
\begin{enumerate}[(i)]
 \item 
 Write
  \begin{align*}
  \P\left(\frac{\bm{V}-\ba(v)}{\bb(v)} \leq \bm{z}~\Big|~V_0 = v\right) &= F_{V|V_0}(\bb(v) \bm{z} + \ba(v) | v),\\ \P\left(\frac{\bm{V}-\ba(v)}{\bb(v)[1+\bm{h}^1(v)]} + \bm{h}^2(v) \leq \bm{z}~\Big|~V_0 = v\right) &= F_{V|V_0}(\bb(v)\bm{z} + \ba(v) + \bb(v)\bm{h}^1(v)\bm{z} - \bb(v)\bm{h}^2(v)  +O(\bb(v)\bm{h}^1(v)\bm{h}^2(v))| v).
 \end{align*}
 Now write $\bm{g}(v) = \bb(v)\bm{h}^1(v)\bm{z} - \bb(v)\bm{h}^2(v)  +O(\bb(v)\bm{h}^1(v)\bm{h}^2(v))$, with $g_l(v) =o(b_l(v))$, and consider the Taylor expansion
   \begin{align}
   \label{eq:mvtaylor}
  F_{V|V_0}(\bb(v) \bm{z} + \ba(v) + \bm{g}(v)| v) = F_{V|V_0}(\bb(v)\bm{z} + \ba(v)| v) + \nabla F_{V|V_0}(\bb(v) \bm{z} + \ba(v)| v)^\top \bm{g}(v) + O(\max(\vee_l h_1^l(v),\vee_l h_2^l(v))^2).
 \end{align}
 The components of $\nabla F_{V|V_0}(\bb(v)\bm{z} + \ba(v)| v)^\top$ are $F^{(l)}_{V|V_0}(\bb(v) \bm{z} + \ba(v)| v)$, where $F^{(l)}_{V|V_0}(\bm{x}|v) = \frac{\partial}{\partial y_l}F_{V|V_0}(\bm{y}|v)|_{\bm{y}=\bm{x}}$. The convergence 
 \begin{align*}
  \frac{\partial}{\partial z_l}\P\left(\frac{\bm{V}-\ba(V_0)}{\bb(V_0)} \leq \bm{z}~\Big|~V_0 = v\right) \to  \frac{\partial}{\partial z_l}G(\bm{z})
 \end{align*}
 is equivalent to 
 \begin{align*}
  F^{(l)}_{V|V_0}(\bb(v) \bm{z} + \ba(v)| v)b_l(v) \to \frac{\partial}{\partial z_l}G(\bm{z}),
 \end{align*}
 hence
 \begin{align}
 \label{eq:partialconv}
  F^{(l)}_{V|V_0}(\bb(v)\bm{z} + \ba(v)| v)=\frac{\partial}{\partial z_l}G(\bm{z}) \{b_l(v)\}^{-1}[1+o(1)],
 \end{align}
 with a similar approach for the mixed partial derivatives. Substituting~\eqref{eq:partialconv} into~\eqref{eq:mvtaylor} yields that the second term in the expansion is $O(\max(\vee_lh^1_l(v),\vee_lh^2_l(v)))$, whilst the next order term is $O(\max(\vee_lh^1_l(v),\vee_lh^2_l(v))^2)$.

 \item  The proof is very similar to part~(i), except the Taylor expansion is about $\bb(v)(\bm{z}-\lim_{v\to\infty}\bm{h}^2(v)) + \ba(v)$.
\end{enumerate}

\end{proof}

\begin{proof}[Proof of Proposition~\ref{prop:sameab}]

Let $q^\star_x=([\min\{\min_l(a_l(x)+b_l(x) z_l),x\}]/\lambda)_+$ with $y_+=\max(y,0)$. We have
\begin{align*}
 \lim_{x\to\infty} \P\left(\frac{\bm{X}-\ba(X_0)}{\bb(X_0)} \leq \bm{z}~\Big|~X_0 = x\right) &=  \lim_{x\to\infty}\int_0^{q^\star_x} \P\left(\frac{\bm{X}-\ba(X_0)}{\bb(X_0)} \leq \bm{z}, Q=q ~\Big|~X_0 = x\right) \,\mathrm{d}q\\
 &=  \lim_{x\to\infty}\int_0^{q^\star_x} \P\left(\frac{\bm{X}-\ba(x)}{\bb(x)} \leq \bm{z} ~\Big|~ Q=q, X_0 = x\right) f_{Q|X_0}(q|x) \,\mathrm{d}q\\
 &=  \lim_{x\to\infty}\int_0^{q^\star_x} \P\left(\frac{\bm{V} + \lambda q  -\ba (v(x)+\lambda q )}{\bb(v(x)+\lambda q)} \leq \bm{z} ~\Big|~ V_0=v(x),Q=q\right)\\ & \qquad\qquad\qquad\times f_{Q|X_0}(q|x) \,\mathrm{d}q\\
  &=  \lim_{x\to\infty}\int_0^{q^\star_x} \P\left(\frac{\bm{V} -\ba(v(x)) +(1-\bm{\alpha})\lambda q + o(1)}{\bb(v(x))[1+O(\nabla \bb(v(x))/\bb(v(x)))]} \leq \bm{z} ~\Big|~ V_0=v(x),Q=q\right)\\
  &\qquad\qquad\qquad \times f_{Q|X_0}(q|x) \,\mathrm{d}q,
\end{align*} with $v(x) = x-\lambda q $ and $f_{Q|X_0}(q|x)$ the conditional density of $Q|X_0=x$. The integrand is dominated by $f_{Q|X_0}(q|x)$, and
\begin{align*}
 f_{Q|X_0}(q|x)  = \frac{f_{Q,V_0}(q,x-\lambda q )}{f_{X_0}(x)}  &= \frac{e^{-q}e^{-(x-\lambda q )}\I(q>0)\I(x-\lambda q >0)}{e^{-x}[(1-\lambda)^{-1} - (1-\lambda)^{-1}e^{-(1/\lambda-1)x}]} \to (1-\lambda)e^{-(1-\lambda)q}\I(q>0),\qquad x\to\infty;
\end{align*}
note that $f_{Q|X_0}(q|x)$ is uniformly bounded by the integrable function $g(q)=(1-\lambda)e^{-(1-\lambda)q}[1-e^{-(1/\lambda-1)R}]^{-1}$ for all $x>R$.
By Lemma~\ref{lem:convflexibility}, as $v \to \infty$,
\begin{align*}
  \P\left(\frac{\bm{V} -\ba(v)+(1-\bm{\alpha})\lambda q + o(1)}{\bb(v) [1+O(\nabla \bb(v)/\bb(v))]} \leq \bm{z} ~\Big|~ V=v,Q=q\right) \to  G\left(\bm{z}+\frac{(\bm{\alpha}-1)\lambda q}{\lim_{v\to\infty} \bb(v)}\right),
\end{align*}
which simplifies when $\lim_{v\to\infty}b_l(v)=\infty$ for all $l$. Dominated convergence then yields the result stated.
\end{proof}

\begin{lemma}[Convergence of Gaussian partial derivatives]
\label{lem:Gausspd}
 Suppose $(\bm{Y},Y_0)^\top$ follows a $(d+1)$-dimensional Gaussian distribution, and let $(\bm{V},V_0)^\top = T((\bm{Y},Y_0)^\top) \in \mathbb{R}^{d+1}_+$ be a random vector with unit exponential margins and Gaussian copula. Denote by $\bm{\rho}_0 > \bm{0}$ the $d$-vector of correlation parameters between $\bm{Y}$ and $Y_0$. Then for $\ba(v) = \bm{\rho}_0^2 v$, $\bb(v) = 1+(\bm{\rho}_0v)^{1/2}$ and any $r \leq d$,
   \begin{align*}
  \frac{\partial^{r}}{\partial z_1\cdots \partial z_r}\P\left(\frac{\bm{V}-\ba(v)}{\bb(v)} \leq \bm{z}~\Big|~V_0 = v\right) &\to \frac{\partial^{r}}{\partial z_1\cdots \partial z_r}G(\bm{z}).\\
 \end{align*}
\end{lemma}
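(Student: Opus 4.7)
The plan is to exploit the exact Gaussian structure to write the conditional CDF in closed form and then differentiate it. Let $\psi(y) := \Phi^{-1}(1-e^{-y})$, so that $V_l = -\log(1-\Phi(Y_l))$ gives $\{V_l \leq x\} \Leftrightarrow \{Y_l \leq \psi(x)\}$ and $\{V_0 = v\} \Leftrightarrow \{Y_0 = \psi(v)\}$. Using that $\bm{Y} \mid Y_0 = y_0$ is $d$-variate Gaussian with mean $\bm{\rho}_0 y_0$ and covariance $\Sigma_0^\ast := (\rho_{k,l} - \rho_{k,0}\rho_{l,0})_{1 \leq k,l \leq d}$, the conditional CDF admits the closed form
\[
F_v(\bm{z}) := \P\!\left(\frac{\bm{V} - \ba(v)}{\bb(v)} \leq \bm{z} \,\Big|\, V_0 = v\right) = \Phi_d\bigl(\psi(\ba(v) + \bb(v)\bm{z});\,\bm{\rho}_0 \psi(v),\,\Sigma_0^\ast\bigr),
\]
with $\psi$ applied componentwise. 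This is the scaffold on which everything else sits.

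Next, I would apply the chain rule. Because $z_l$ enters only the $l$th component of $\psi(\ba(v)+\bb(v)\bm{z})$, the $r$-fold mixed partial factorises as
\[
\frac{\partial^r F_v}{\partial z_1 \cdots \partial z_r}(\bm{z}) = \left(\prod_{l=1}^{r} b_l(v)\,\psi'\bigl(a_l(v) + b_l(v) z_l\bigr)\right) \Phi_d^{(1,\ldots,r)}\bigl(\psi(\ba(v)+\bb(v)\bm{z});\,\bm{\rho}_0\psi(v),\,\Sigma_0^\ast\bigr),
\]
where $\Phi_d^{(1,\ldots,r)}$ is the mixed partial of $\Phi_d$ in its first $r$ arguments; this is a product of the marginal $r$-dimensional Gaussian density in those coordinates and a conditional $(d-r)$-dimensional Gaussian CDF for the remaining coordinates, hence continuous in all arguments. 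The task reduces to showing that the Jacobian product converges and that the centred argument $\psi(\ba(v)+\bb(v)\bm{z}) - \bm{\rho}_0\psi(v)$ converges to a finite limit.

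Both limits follow from Mills' ratio. From $1-\Phi(\psi(y)) = e^{-y}$ and the tail asymptotic $1-\Phi(x)\sim\phi(x)/x$ as $x \to \infty$, I extract $\psi(y) = \sqrt{2y} - \log(4\pi y)/(2\sqrt{2y}) + o(y^{-1/2})$ and $\psi'(y) = e^{-y}/\phi(\psi(y)) \sim 1/\sqrt{2y}$. Substituting $a_l(v)+b_l(v)z_l = \rho_{0,l}^2 v + \rho_{0,l}\sqrt{v}\,z_l + z_l$ immediately gives $b_l(v)\,\psi'(a_l(v)+b_l(v)z_l) \to 1/\sqrt{2}$, and a one-term Taylor expansion of $\sqrt{2(\rho_{0,l}^2 v + \rho_{0,l}\sqrt{v}\,z_l + z_l)}$ about $\rho_{0,l}\sqrt{2v}$ yields $\psi(a_l(v)+b_l(v)z_l) - \rho_{0,l}\psi(v) \to z_l/\sqrt{2}$, provided the $\log$ subleading corrections in $\psi$ cancel between the two terms to $o(1)$.

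Combining the pieces via continuity of $\Phi_d^{(1,\ldots,r)}$ in its first argument yields
\[
\frac{\partial^r F_v}{\partial z_1\cdots\partial z_r}(\bm{z}) \to (1/\sqrt{2})^r\,\Phi_d^{(1,\ldots,r)}(\bm{z}/\sqrt{2};\,\bm{0},\,\Sigma_0^\ast),
\]
which matches $\partial^r G / \partial z_1 \cdots \partial z_r$ upon noting $G(\bm{z}) = \Phi_d(\bm{z};\,\bm{0},\,2\Sigma_0^\ast) = \Phi_d(\bm{z}/\sqrt{2};\,\bm{0},\,\Sigma_0^\ast)$ (as identified in Section~\ref{sec:Gaussian}), whose $r$-fold mixed partial carries exactly the same factor $(1/\sqrt{2})^r$ by the chain rule. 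The main obstacle I anticipate is the Mills' ratio bookkeeping: the leading $\sqrt{2v}$ piece of $\psi$ is cancelled by $\bm{\rho}_0\psi(v)$, so the $z_l/\sqrt{2}$ limit emerges only after tracking the $\psi$ expansion to order $v^{-1/2}\log v$ and verifying that the $\log(4\pi y)$ corrections cancel between $\psi(a_l(v)+b_l(v)z_l)$ and $\rho_{0,l}\psi(v)$.
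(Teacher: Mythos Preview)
Your proposal is correct and follows essentially the same route as the paper: express the conditional CDF through the Gaussian conditional law, differentiate via the chain rule, and use Mills'-ratio asymptotics for $\psi=T^{-1}$ to show the centred argument tends to $\bm{z}/\sqrt{2}$ and each Jacobian factor to $1/\sqrt{2}$. One minor remark: the $\log$ subleading corrections you flag as needing to ``cancel'' in fact each vanish on their own, since both $\psi(a_l(v)+b_l(v)z_l)$ and $\rho_{0,l}\psi(v)$ carry an $O(v^{-1/2}\log v)=o(1)$ correction to their leading $\rho_{0,l}\sqrt{2v}$ term.
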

\begin{proof}
 We can express
 \begin{align*}
  \P\left(\frac{\bm{V}-\ba(v)}{\bb(v)} \leq \bm{z}~\Big|~V_0 = v\right) &=  \P\left[\bm{Y} \leq T^{-1}\{\bb(T(y))\bm{z} + \ba(T(y))\}~\Big|~T(Y_0) = T(y)\right]\\
  & = \Phi_d \left[T^{-1}\{\bb(T(y))\bm{z} + \ba(T(y))\}- \bm{\rho}_0 y; \Sigma_{0}\right]
 \end{align*}
where $\Phi_d(\cdot;\Sigma_0)$ is the cdf of the centred $d$-variate Gaussian with covariance matrix $\Sigma_0 = (\rho_{k,l}-\rho_{k,0}\rho_{l,0})_{1\leq k,l\leq d}$
Taking the derivative yields
\begin{align}
   &\frac{\partial^{r}}{\partial z_1\cdots \partial z_r} \P\left[\bm{Y} \leq T^{-1}\left\{\bb(T(y))\bm{z} + \ba(T(y))\right\}~\Big|~T(Y_0) = T(y)\right]\notag \\ &\qquad\qquad = \Phi_d^{(1:r)}  \left[T^{-1}\{\bb(T(y))\bm{z} + \ba(T(y))\}; \Sigma_{0}\right] \prod_{k=1}^r \frac{\partial}{\partial z_k}T^{-1}\{b_{k}(T(y))z_k + a_{k}(T(y))\}, \label{eq:deriv}
\end{align}
where $\Phi_d^{(1:r)}(\cdot| \Sigma_0)$ is the $r$th-order mixed partial derivative of $\Phi_d$. The components 
\begin{align}
\frac{\partial}{\partial z_k}T^{-1}\{b_{k}(T(y))z_k + a_{k}(T(y))\} =  \frac{b_{k}(T(y))}{T'\left[T^{-1}\{b_{k}(T(y))z_k + a_{k}(T(y))\}\right]}.\label{eq:derivcomp}
\end{align}
Now $T(y) = -\log\{1-\Phi(y)\} = y^2/2 +O(\log y)$, $y \to \infty$, whilst $T^{-1}(x) =(2x)^{1/2} + O(\log x/x^{1/2})$, $x \to \infty$, and $T'(y) = \phi(y)/\{1-\Phi(y)\} \sim y$, $y \to \infty$. Further, $b_{k}(T(y)) = 1+\rho_{0,k}y/\sqrt{2} + O(\log y)$, whilst $a_{k}(T(y)) =\rho_{0,k}^2 y^2/2 + O(\log y)$. Combining these,
\begin{align*}
 T^{-1}\{b_{k}(T(y))z_k + a_{k}(T(y))\} = \rho_{0,k} y +z_k/\sqrt{2} +o(1),
\end{align*}
and equation~\eqref{eq:derivcomp} converges to $1/\sqrt{2}$, whilst~\eqref{eq:deriv} converges to
\begin{align*}
\Phi_d^{(1:r)}\left(\bm{z}/\sqrt{2}; \Sigma|_{0}\right) 2^{-r/2},
\end{align*}
which is the $r$th-order mixed partial derivative of the Gaussian limit distribution. 
\end{proof}

\begin{remark}
\label{rmk:Gauss2deriv}
 For application of Proposition~\ref{prop:sameab}, we also require convergence of the second derivative $\partial^2/(\partial z_k)^2$. Iteration of the above manipulations leads to a conclusion that the derivative converges to $\Phi_d^{(kk)}(\bm{z}/\sqrt{2}; \Sigma_{0})/2$, with $\Phi_d^{(kk)}(\bm{z}) = \partial^2 \Phi_d(\bm{y})/(\partial y_k)^2|_{\bm{y}=\bm{z}}$.
\end{remark}

\begin{prop}
\label{prop:abcondns}
 Suppose that $(X_1,X_2)$ have identical margins with infinite upper endpoint and
 \begin{enumerate}
  \item $\lim_{x \to \infty} \P(X_1>x|X_2>x) =0$
  \item $\lim_{t \to \infty} \P\left[\{X_1 - a(X_2)\}/b(X_2) >z |X_2>t\right] = G(z)$,
 \end{enumerate}
with $G$ a non-degenerate distribution function for which $\lim_{z\to \infty}G(z) = 1$. Then if there exists $t_0$ such that $a(t)>0$ is monotonically non-decreasing in $t$ for all $t>t_0$
\begin{enumerate}[(i)]
 \item If $G(z) \in (0,1)$ for some $z \in (0,\infty)$, $a(t)<t$ for all sufficiently large $t$.
 \item If $G(z) \in (0,1)$ for all $z \in (0,\infty)$, then $b(t) = o(t)$, $t\to\infty$.
\end{enumerate}
\end{prop}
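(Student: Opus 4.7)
The plan is to argue both parts by contradiction, working with the formulation of the conditional limit in which we condition on $\{X_2=t\}$ rather than $\{X_2>t\}$. Under the existence of a joint density, the l'H\^opital argument recalled in Section~\ref{sec:assumption} yields, at continuity points $z$ of $G$,
\[
\lim_{t\to\infty}\P\!\left[\frac{X_1-a(t)}{b(t)}>z \,\Big|\, X_2=t\right]=1-G(z).
\]
Applying the same manipulation to assumption~1, together with the fact that the margins are identical so that $f_{X_1}(t)=f_{X_2}(t)$, gives
\[
\lim_{t\to\infty}\Bigl(\P[X_1>t\mid X_2=t]+\P[X_2>t\mid X_1=t]\Bigr)=0,
\]
and since both summands are non-negative, in particular $\P[X_1>t\mid X_2=t]\to 0$.

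For part~(i), suppose for contradiction that there is a sequence $t_n\to\infty$ with $a(t_n)\geq t_n$. Pick any continuity point $z>0$ of $G$ with $G(z)\in(0,1)$; such $z$ exists by hypothesis. Because $b(t_n)>0$ we have $a(t_n)+zb(t_n)>t_n$, whence
\[
\P[X_1>t_n\mid X_2=t_n]\;\geq\;\P\!\left[X_1>a(t_n)+zb(t_n)\mid X_2=t_n\right]\;\longrightarrow\;1-G(z)\;>\;0,
\]
which contradicts $\P[X_1>t\mid X_2=t]\to 0$.

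For part~(ii), by~(i) we already have $a(t)<t$ for large $t$. Assume for contradiction that $b(t)\neq o(t)$, so there exist $c>0$ and $t_n\to\infty$ with $b(t_n)\geq ct_n$. Since $G(z)\in(0,1)$ for \emph{every} $z\in(0,\infty)$, we can choose a continuity point $z$ of $G$ with $z\geq 1/c$. Using $a(t_n)>0$,
\[
a(t_n)+zb(t_n)\;\geq\;zb(t_n)\;\geq\;czt_n\;\geq\;t_n,
\]
so exactly as in~(i) we obtain $\liminf_n \P[X_1>t_n\mid X_2=t_n]\geq 1-G(z)>0$, again contradicting $\P[X_1>t\mid X_2=t]\to 0$. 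The main obstacle is the passage from the integrated hypothesis $\P[X_1>t\mid X_2>t]\to 0$ to the pointwise statement $\P[X_1>t\mid X_2=t]\to 0$, and likewise from assumption~2 to the $\{X_2=t\}$ conditioning; this rests on the density/l'H\^opital device already invoked in Section~\ref{sec:assumption}. The monotonicity of $a$ in the hypothesis plays the role of a mild regularity condition that keeps the conclusion clean, ruling out measure-zero exceptional points, but is not essential for the core contradictions above.
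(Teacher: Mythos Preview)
Your overall strategy differs from the paper's and contains a genuine gap. The paper never passes to $\{X_2=t\}$ conditioning; it works directly with $\{X_2>t\}$ and uses the monotonicity of $a$ to bound
\[
\P\bigl(X_1>a(X_2)+b(X_2)z\mid X_2>t\bigr)\leq \P\bigl(X_1>a(X_2)\mid X_2>t\bigr)\leq \P\bigl(X_1>a(t)\mid X_2>t\bigr),
\]
so that assumption~1 can be invoked as stated. The monotonicity hypothesis is therefore not a cosmetic regularity condition but exactly the device that replaces $a(X_2)$ by $a(t)$ under the integrated conditioning.

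Your route instead relies on converting both assumptions to $\{X_2=t\}$ form via l'H\^opital, but this is applied in the wrong direction. L'H\^opital says that if the limit of the ratio of derivatives exists then the limit of the original ratio exists and equals it; it does \emph{not} say the converse. Knowing $\P(X_1>t\mid X_2>t)\to 0$ does not force $\P(X_1>t\mid X_2=t)+\P(X_2>t\mid X_1=t)\to 0$, and knowing that the $\{X_2>t\}$ limit in assumption~2 exists does not force the $\{X_2=t\}$ limit to exist. The passage in Section~\ref{sec:assumption} that you cite runs the other way: it shows that \emph{if} the $\{X_2=t\}$ limit exists then it coincides with the $\{X_2>t\}$ limit. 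Since the proposition assumes neither a joint density nor convergence of the pointwise conditionals, your contradiction arguments rest on an unproven step. The fix is precisely what the paper does: stay with $\{X_2>t\}$ and exploit monotonicity of $a$ (and, for part~(ii), eventual monotonicity of $b$).
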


\begin{proof}
(i). When $z>0$, 
  \begin{align*}
   0 \leq \P\left(X_1 >a(X_2) + b(X_2)z |X_2>t\right) &\leq \P\left(X_1 >a(X_2) |X_2>t\right)\leq \P\left(X_1 >a(t) |X_2>t\right),
  \end{align*}
for sufficiently large $t$. But since $\P\left(X_1 >t |X_2>t\right) \to 0$, we must have $a(t)<t$ for all large $t$ to get convergence to $G(z) \in (0,1)$. \\
(ii). If $b(t)$ is constant or decreasing then clearly the statement holds, so suppose it is increasing as $t \to \infty$.  Similarly to above
  \begin{align*}
   0 \leq \P\left(X_1 >a(X_2) + b(X_2)z |X_2>t\right) &\leq \P\left(X_1 >b(X_2)z |X_2>t\right) \leq \P\left(X_1 >b(t)z |X_2>t\right)
  \end{align*}
 for $z>0$ and large $t$. Now if $b(t) \sim ct$ for some $c>0$ then for any $z>1/c$, and sufficiently large $t$, $0< \P\left(X_1 >b(t)z |X_2>t\right) \leq \P\left(X_1 >t |X_2>t\right) \to 0.$ But since $G(z) \in (0,1)$ for any $z \in (0,\infty)$, we must have $b(t) = o(t)$. 
\end{proof}

\section{Supporting information for Section~\ref{sec:Australia}}
\label{app:supp}

\subsection{Model information}
\begin{table}[h]\centering
\caption{Description of parameters in fitted model}
\label{tab:paramdesc}
 \begin{tabular}{ll}\hline
  Parameter & Description \\\hline
  $\kappa$  &Shape parameter in $\alpha(s-s_0)$ (eq.~\eqref{eq:alpha})\\
  $\lambda$& Scale parameter in $\alpha(s-s_0)$ (eq.~\eqref{eq:alpha})\\
  $\beta$ &Power in $b_{s-s_0} = 1+a_{s-s_0}(x)^\beta$\\
  $\phi$ &Scale parameter in Gaussian correlation\\
  $\nu$ &Shape parameter in Gaussian correlation\\
  $\sigma$ &Scale parameter of $Z_G$ used to determine structure of $Z^j$\\
  $\mu$ &Mean parameter of $Z_G$ used to determine structure of $Z^j$\\
  $\delta_1$ &Scale parameter in $\delta(s-s_0) = 1+ \exp\{-(\|s-s_0\|/\delta_1)^{\delta_2}\}$\\
  $\delta_2$ &Shape parameter in $\delta(s-s_0) = 1+ \exp\{-(\|s-s_0\|/\delta_1)^{\delta_2}\}$\\
  \hline
 \end{tabular}

\end{table}

\subsection{Diagnostic plots and parameter uncertainty}

\begin{figure}[H]
 \centering
 \includegraphics[width=0.4\textwidth]{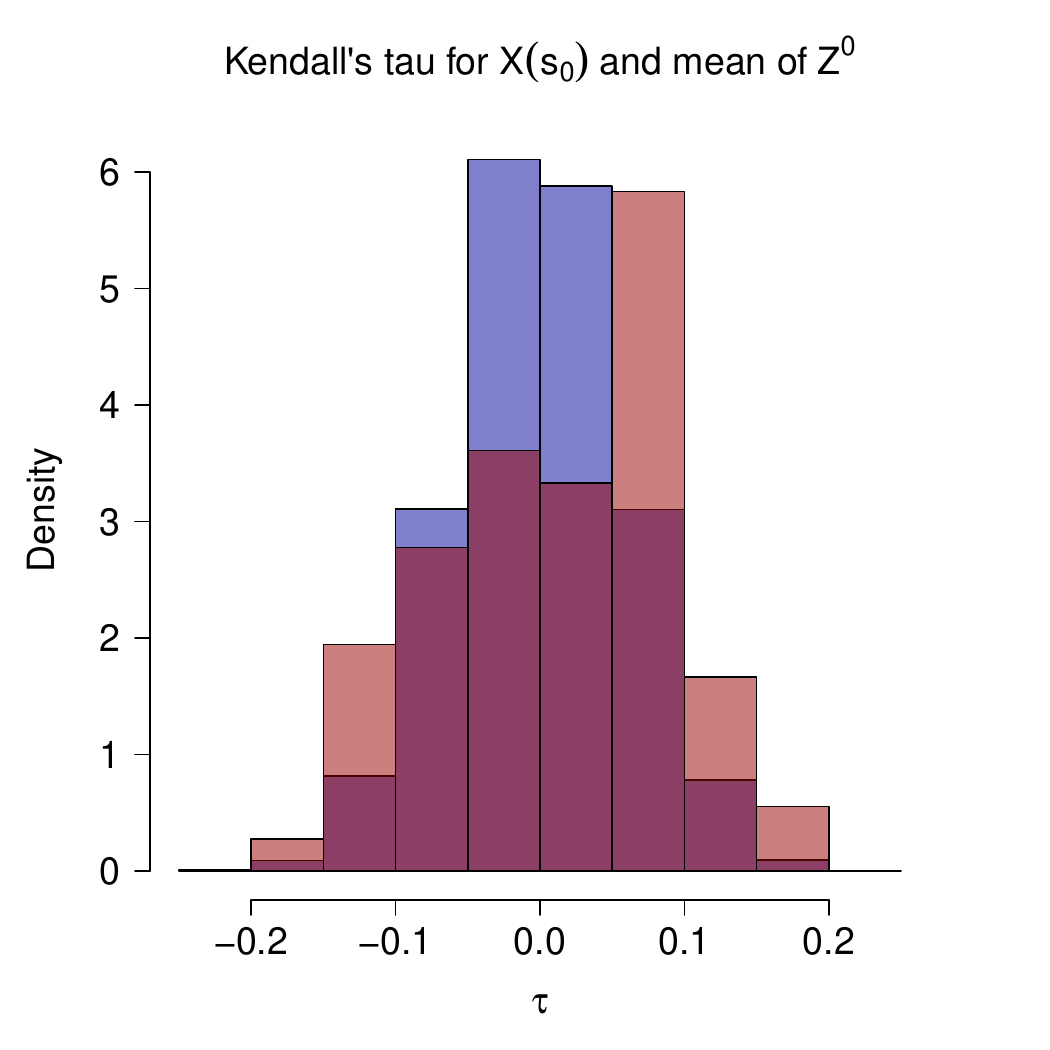} 
 \includegraphics[width=0.4\textwidth]{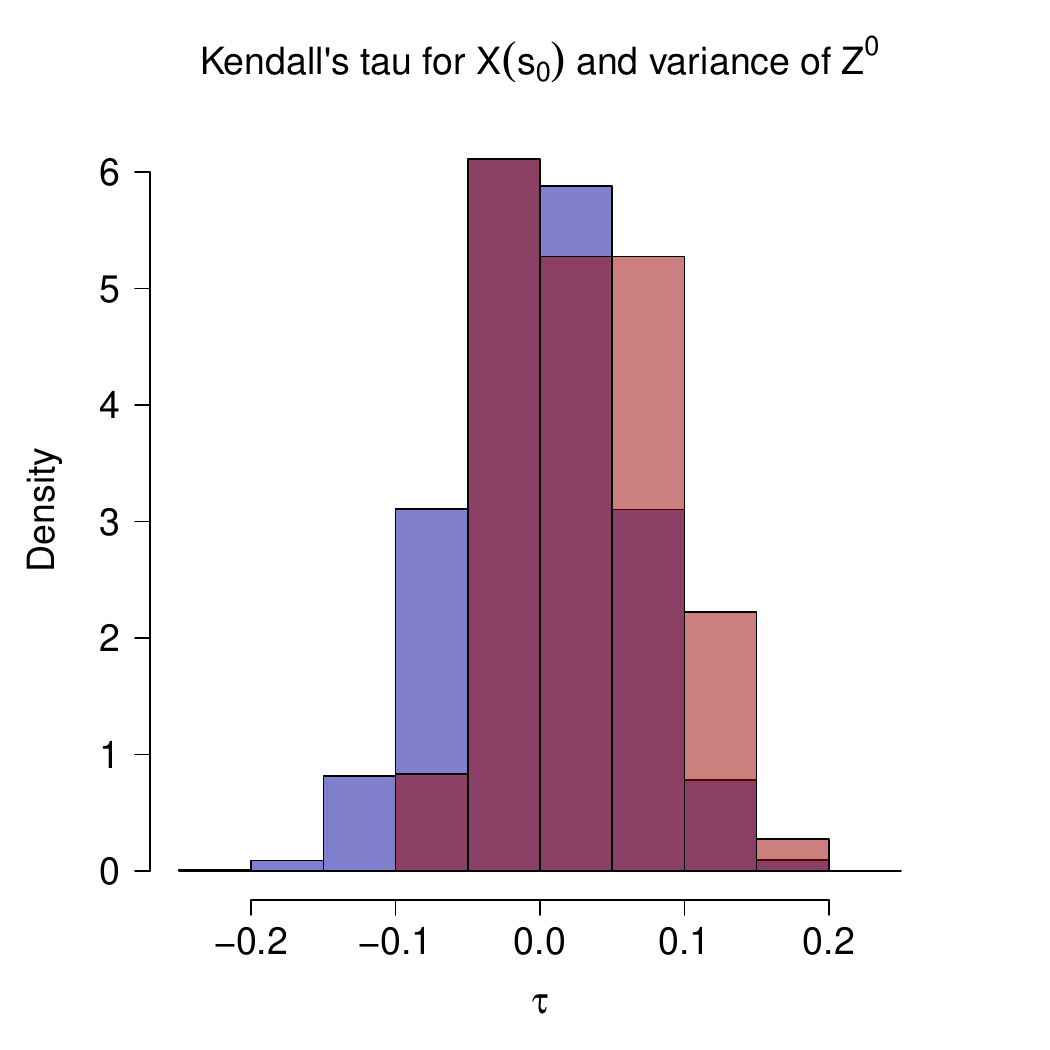} 
 \caption{Histograms (red) of Kendall's $\tau$ coefficients for: $X^i(s_j)|X^i(s_j)>u$ and the mean (left) and variance (right) of $Z^j_i(s)$, $i=1,\ldots, n_j$ for the 72 sites, with average sample size $n_j \approx 130$.  For comparison, in blue is the histogram of the approximate null Kendall's $\tau$ distribution obtained from 10000 samples of 130 independent bivariate datapoints to give an impression of the null distribution.}
 \label{fig:KT}
\end{figure}

\begin{figure}[H]
 \centering
 \includegraphics[width=0.6\textwidth]{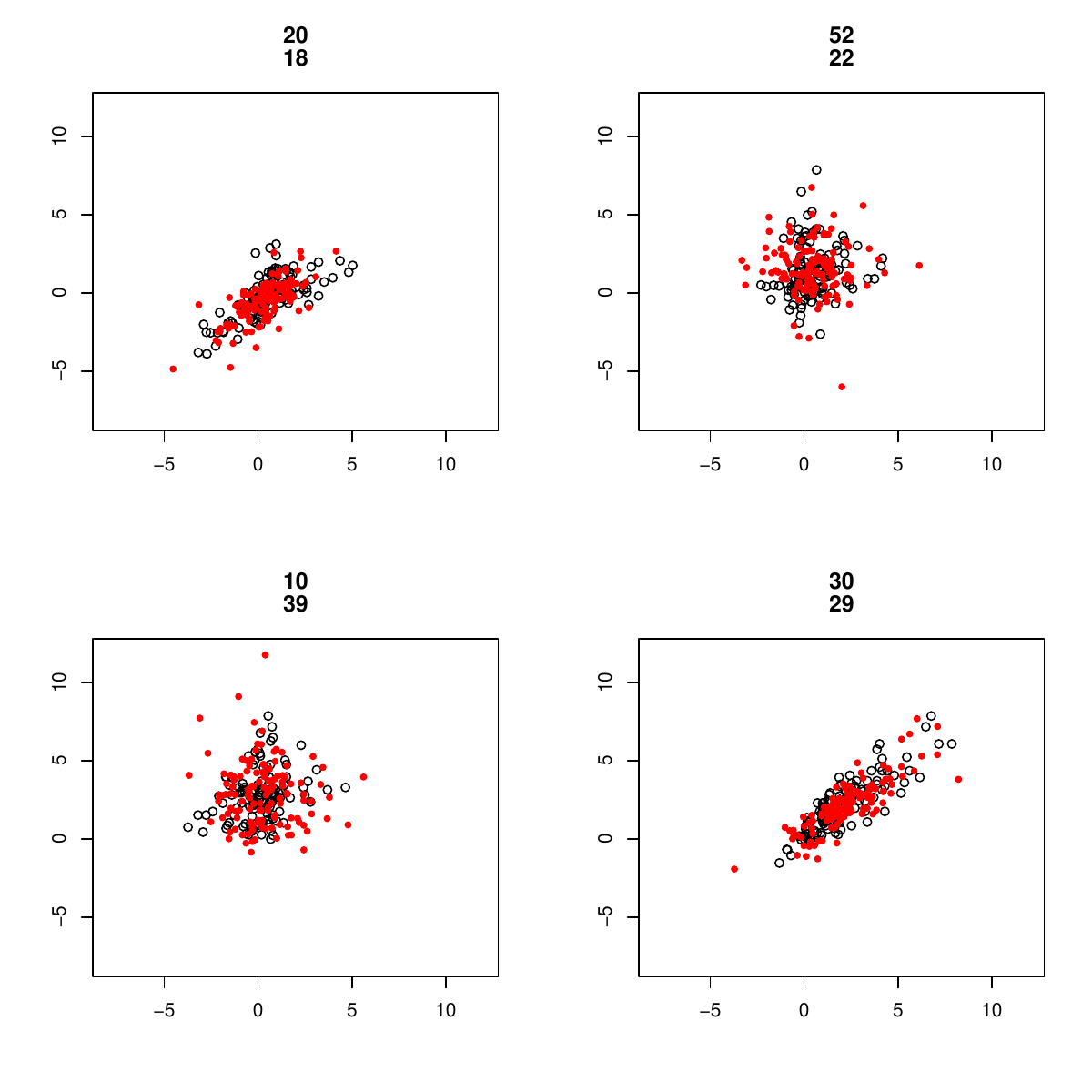}
 \caption{Example pairwise plots of observed data (black unfilled circles) and corresponding data simulated from the model (red filled circles).  The site pairings are given in the figure heading, and the conditioning site is 48, which is not one of the sites appearing in the pairs. The observed data correspond to observations at the two sites when the observation at the conditioning site exceeds its threshold. The simulated data are generated from the model conditioning on the same site exceeding the same threshold.}
 \label{fig:CompareFitPairs}
\end{figure}

\begin{figure}[H]
 \centering
 \includegraphics[width=0.25\textwidth]{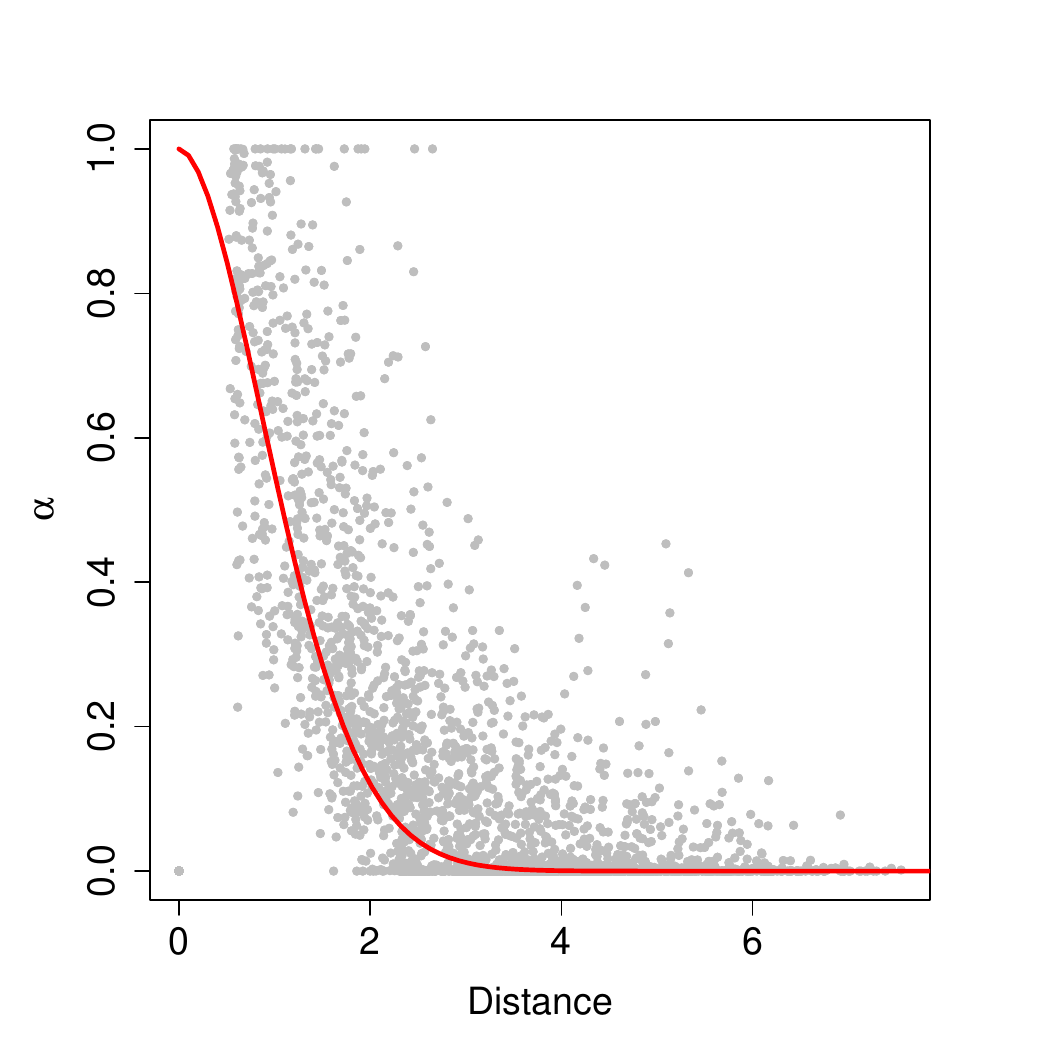}
 \includegraphics[width=0.25\textwidth]{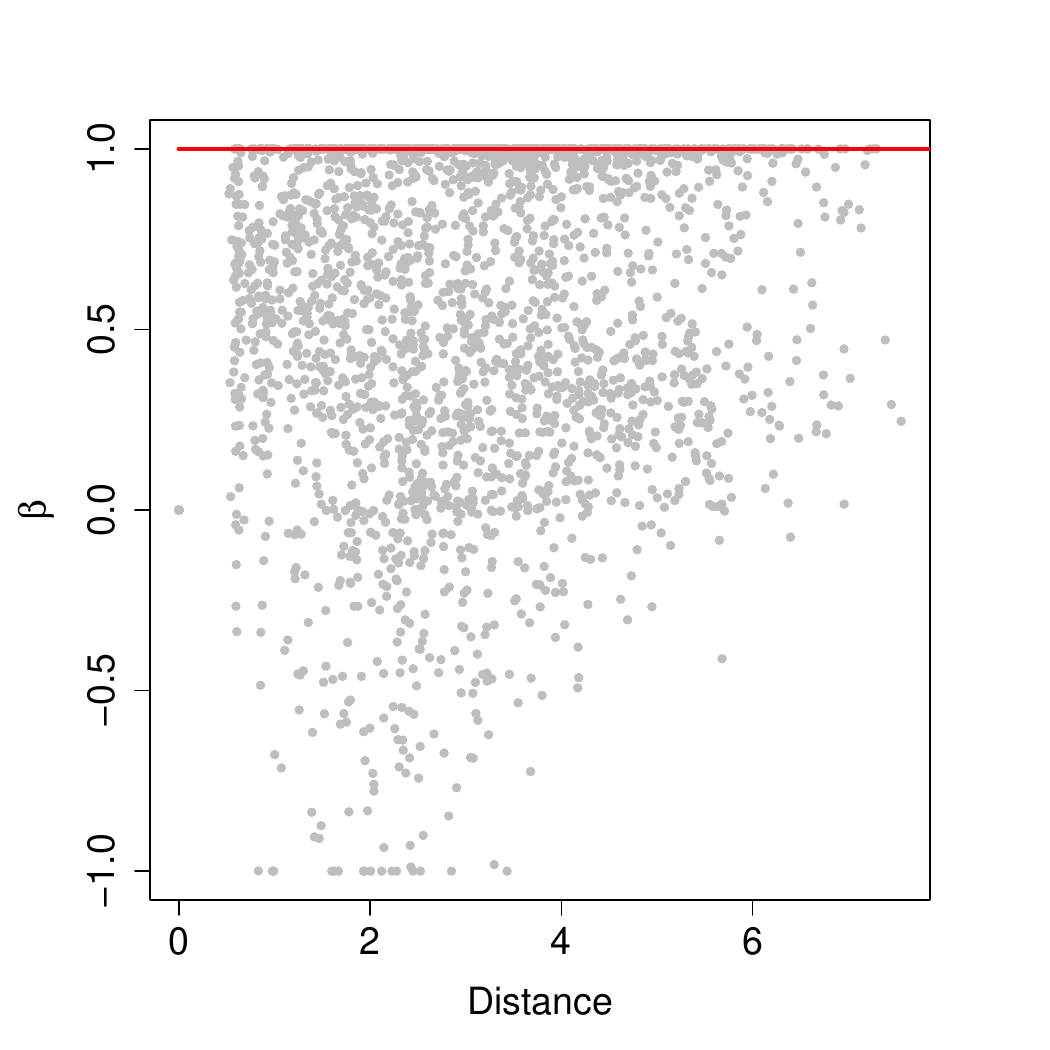}
 \includegraphics[width=0.25\textwidth]{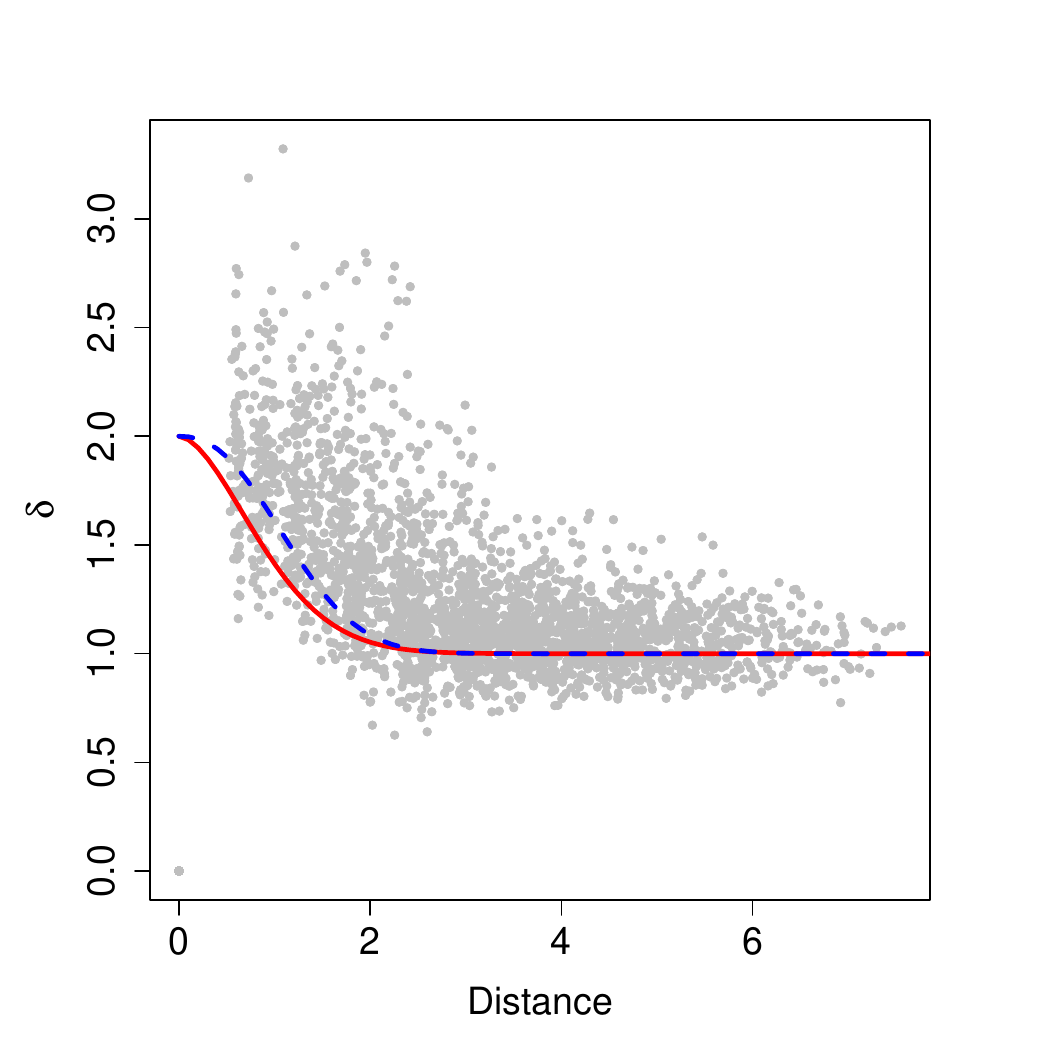}\\
 \includegraphics[width=0.25\textwidth]{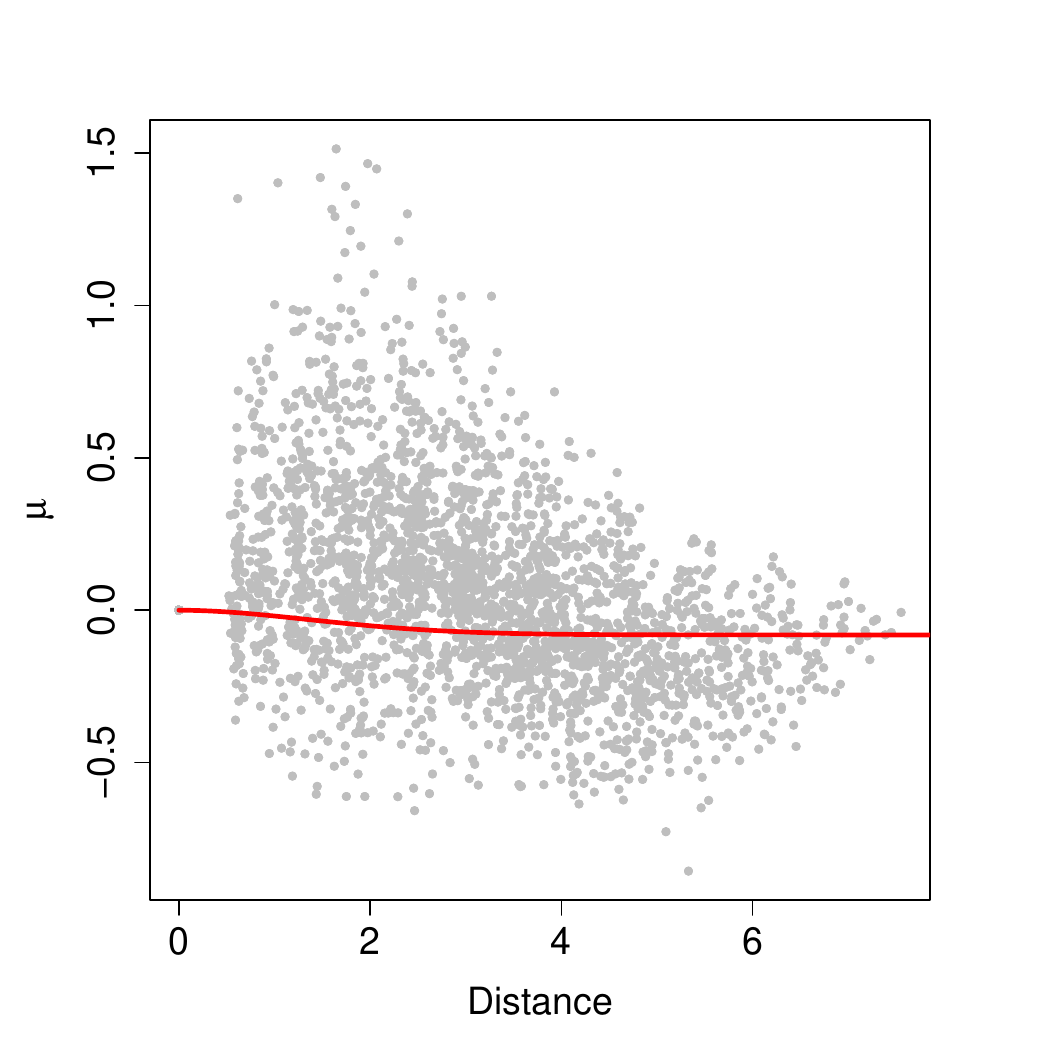}
 \includegraphics[width=0.25\textwidth]{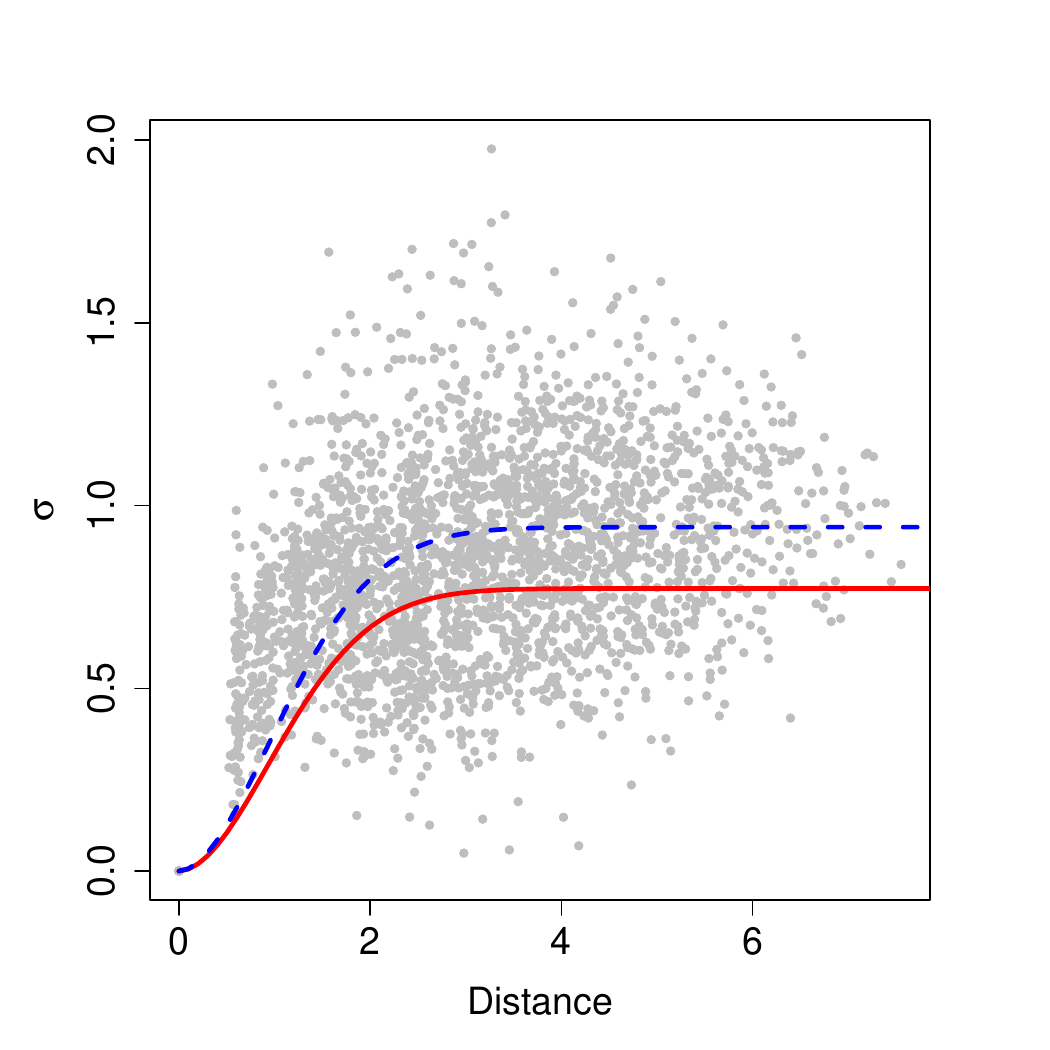}
 \caption{Parameter estimates from pairwise model fits using the same structure as Model 3, i.e., $a(x)=\alpha x$, $b(x)=1+(\alpha x)^\beta$ and $Z$ following a delta Laplace distribution. Here $\beta$ was constrained between $(-1,1)$. Fits were made using composite likelihoods where a single set of parameters was assumed to apply for each pair, whichever the conditioning variable. Distance is in the deformed coordinate space. Solid red lines display implied values from the full fitted model; dashed blue lines display implied estimates from the model refitted to the extracted residual processes $Z^j$.}
  \label{fig:pairs}
\end{figure}

\begin{figure}[H]
 \centering
 \includegraphics[width=0.4\textwidth]{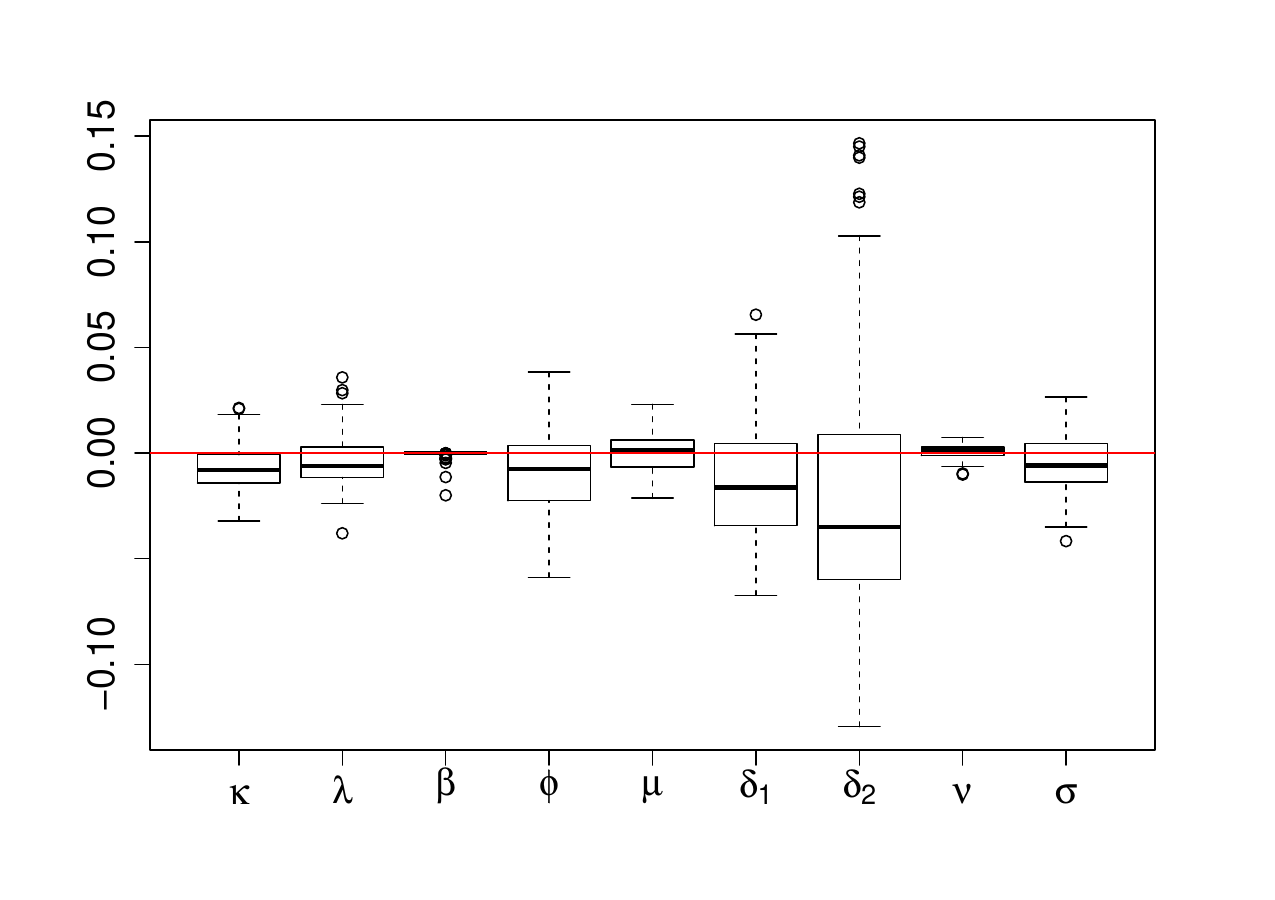}
 \includegraphics[width=0.4\textwidth]{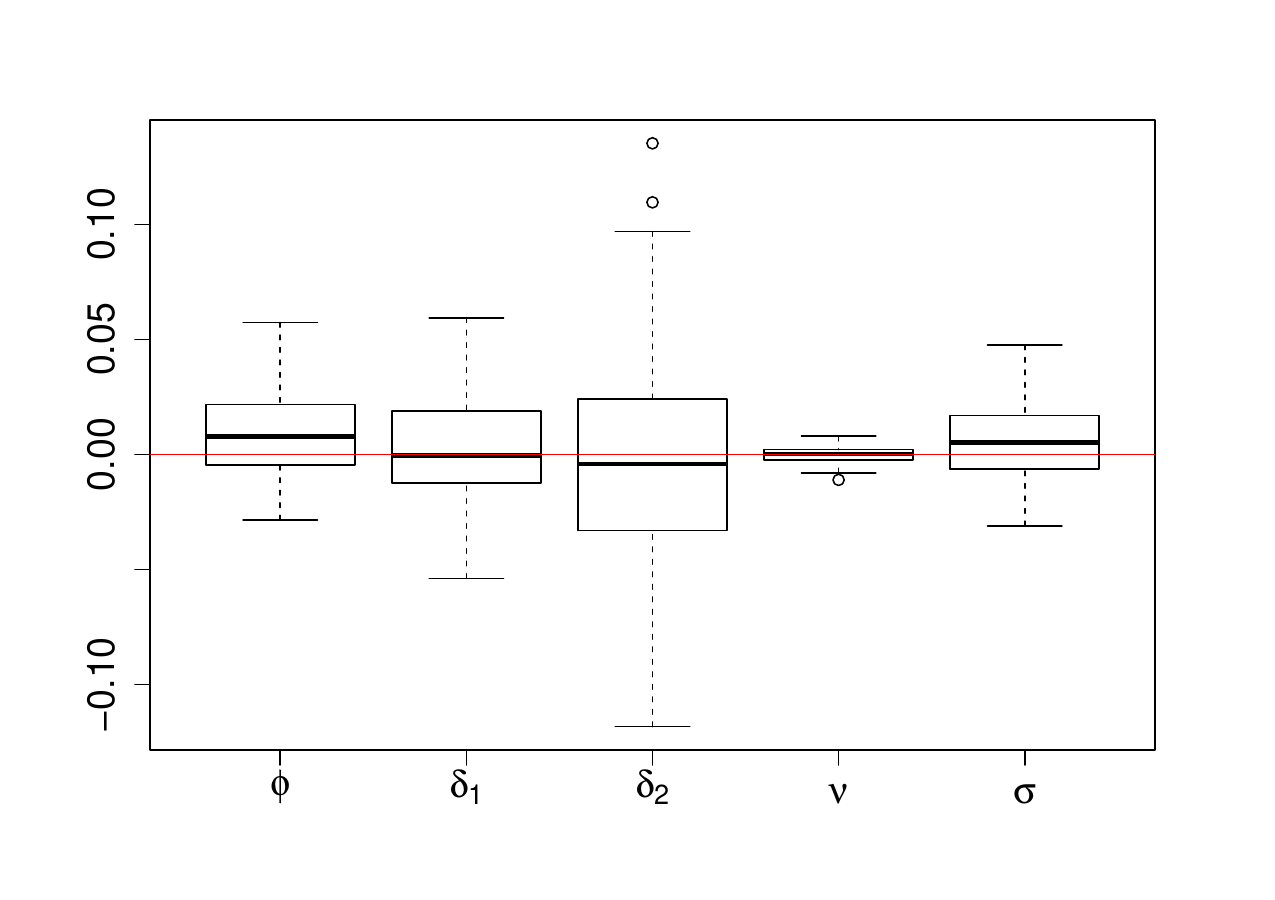}
 \caption{Distribution of estimates from 100 bootstrap repetitions, with population MLE subtracted. Left: full model fit; right: fit to extracted $Z^j$ using empirical means.}
 \label{fig:bootstrap}
\end{figure}


\section{Computation time}
\label{app:Comptime}

\begin{figure}[h]
\centering
 \includegraphics[width=0.4\textwidth]{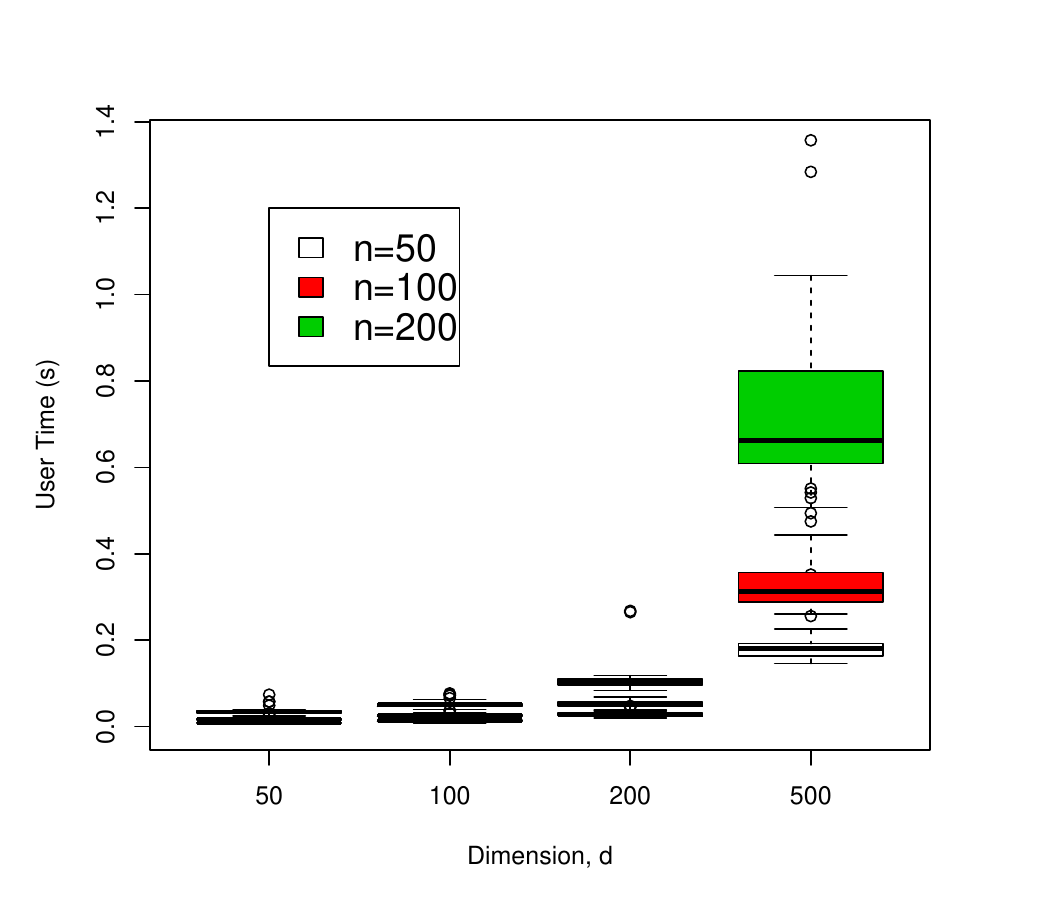}
 \includegraphics[width=0.4\textwidth]{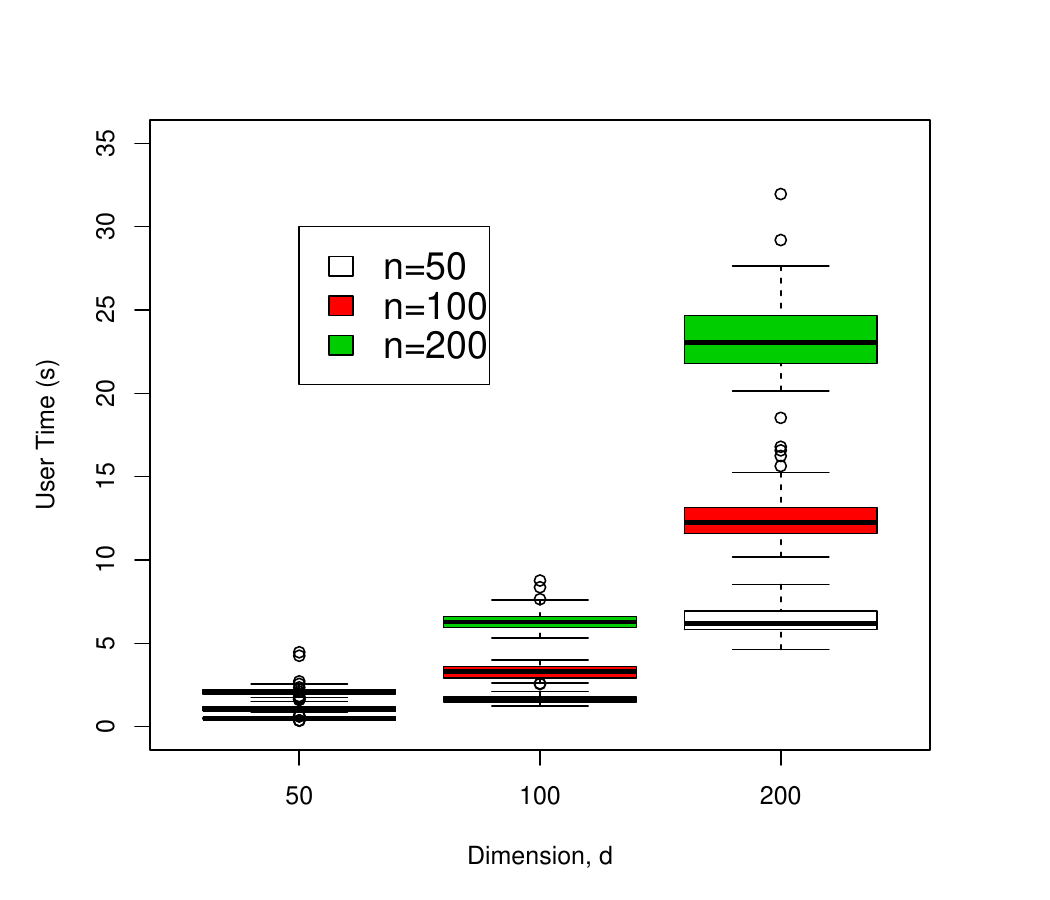}
 \caption{Left: computation time for a likelihood conditioning on a single site exceeding the threshold, for different numbers of sites, $d$, and average numbers of exceedances at the conditioning site, $n$. Right: computation time for a composite likelihood over all sites. The boxplots are created from 50 repetitions on different data.}
  \label{fig:comptime}
\end{figure}
Figure~\ref{fig:comptime} is intended as a rough guide to how many dimensions one might reasonably attempt to optimize a likelihood in, based on our R code implementation, which makes use of the \texttt{mvnfast} R library \citep{Fasiolo16}. Computation times for the composite likelihood generally slightly exceed $d$ times the computation for the likelihood conditioning at a single site.


\bibliographystyle{apalike}
\bibliography{CESBib}
\end{document}